\newtheorem{remark}{Remark}[section]
\newtheorem{theorem}[remark]{Theorem}
\newtheorem{definition}[remark]{Definition}
\newtheorem{prop}[remark]{Proposition}
\newtheorem{lemma}[remark]{Lemma}
\newtheorem{coro}[remark]{Corollary}
\newtheorem{proofTh} {Proof.}
\newcommand{\CVD} {\hspace*{\fill}$\Box$}
\newcommand{\CVDC} {\hspace*{\fill}$\lozenge$}
\newenvironment{proof}{\begin{proofTh}\em}{\CVD\end{proofTh}}
\newtheorem{pclaimTh} {Proof.}
\newcommand{\mybreak} {\par\vspace{2mm}\noindent}
\newcommand{\mybreakk} {\par\vspace{1.5mm}\noindent}
\def\cadre{$$\vcenter\bgroup\advance\hsize by -2em\noindent
             \refstepcounter{equation}{\rm(\theequation)}~\ignorespaces}
\def\endcadre{\egroup\eqno$$\global\@ignoretrue \par\vspace{0mm}\noindent}
\newcommand{\comment}[1]{}
\newcommand{\A} {\mathsf{A}}
\newcommand{\B} {\mathsf{B}}
\newcommand{\C} {\mathsf{C}}
\newcommand{\DD} {\mathsf{D}}
\newcommand{\E} {\mathsf{E}}
\newcommand{\HH} {\mathsf{H}}
\newcommand{\K} {\mathsf{K}}
\newcommand{\NN} {\mathsf{N}}
\newcommand{\TT} {\mathsf{T}}
\newcommand{\U} {\mathsf{U}}
\newcommand{\F} {\mathsf{F}}
\newcommand{\s} {\mathsf{S}}
\newcommand{\ra} {\mathsf{a}}
\newcommand{\rb} {\mathsf{b}}
\newcommand{\q} {\mathbf{q}}
\newcommand{\kk} {\mathbf{k}}
\newcommand{\uu} {\mathbf{u}}
\newcommand{\vv} {\mathbf{v}}
\newcommand{\x} {\mathbf{x}}
\newcommand{\y} {\mathbf{y}}
\newcommand{\w} {\mathbf{w}}
\newcommand{\z} {\mathbf{z}}
\newcommand{\tw} {\mathbf{t}}
\newcommand{\cv} {\mathbbm{v}}
\newcommand{\cw} {\mathbbm{w}}
\newcommand{\spt} {{\rm supp}}
\newcommand{\uno} {\mathsf{1}}
\def\imod#1{\allowbreak\mkern10mu({\operator@font mod}\,\,#1)}
\title{Minimally unbalanced diamond-free graphs and Dyck-paths}
\author{Nicola Apollonio\footnote{Istituto per le Applicazioni del
Calcolo, M. Picone-CNR, via dei Taurini 19, 00185 Rome - Italy.
\texttt{nicola.apollonio@cnr.it}} \and {Anna Galluccio\footnote{Istituto di Analisi dei Sistemi ed Informatica "Antonio Ruberti"- CNR, Viale Manzoni, 30
00185 Rome - Italy. \texttt{anna.galluccio@iasi.cnr.it}}}}
\date{}
\begin{document}
\maketitle

\begin{abstract}
A $\{0,1\}$-matrix $\A$ is balanced if it does not contain a submatrix of odd order having exactly two 1's per row and per column. 
A graph is balanced if its clique-matrix is balanced. No characterization of minimally unbalanced graphs is known, and even
no conjecture on the structure of such graphs has been posed, contrarily to what happened for perfect graphs. 
In this paper, we provide such a characterization for the class of diamond-free graphs and establish a connection between minimally unbalanced diamond-free graphs and Dyck-paths.
\mybreak
\textbf{Keywords}: balanced/perfect graph, balanced/perfect matrices.
\end{abstract}

\section{Introduction}\label{sec:intro}

A $\{0,1\}$-matrix $\A$ is \emph{balanced} if it does not contain a submatrix of odd order with two 1's per row and per column. 
This notion was introduced and thoroughly investigated by Berge \cite{Be72}. 

A $\{0,1\}$-matrix $\A$ is \emph{perfect} if the associated \emph{fractional packing polyhedra} 
$\big\{\x\in\mathbb{R}^n_+ \ |\ \A\x\leq \uno\big\}$, is an integral polytope, namely, it has integer vertices only.

Classical results of Berge, on one hand, and of Fulkerson, Hoffman and Oppenheim \cite{FHO74} on the other assert that $\A$ is balanced if and only 
if every submatrix of $\A$ is perfect. 

Perfect graphs and perfect matrices are related as follows: a graph is \emph{perfect} if and only if its clique-matrix is a perfect 
matrix. Recall that the clique-matrix $\A_G$ of a graph $G$ is a $\{0,1\}$-matrix whose columns are indexed by the vertices of $G$ and 
whose rows are incidence vectors of the maximal cliques of $G$. 

After the Strong Perfect Graph Theorem \cite{SPGT}, perfect graphs are characterized by a list of forbidden minimally imperfect graphs, 
namely non-perfect graphs all whose proper induced subgraphs are perfect: perfect graphs are precisely those graphs that do not 
contain induced odd holes or their complements as induced subgraphs. Equivalently, odd holes and their complements are the only 
minimally imperfect graphs.

In the same way as perfect graphs are those graphs whose clique-matrix is perfect, \emph{balanced} graphs are graphs whose clique-matrix 
is balanced. Moreover, since the clique-matrix of an induced subgraph $G'$ of a graph $G$ is a submatrix of $\A_G$, it follows that, 
like perfect graphs, the class of balanced graphs is closed under taking induced subgraphs. Therefore, it is natural to ask whether 
graph-balancedness can be characterized by a list of minimally forbidden subgraphs, that is \emph{minimally unbalanced graphs}, similarly 
to what happens for perfect graphs---a graph is \emph{minimally unbalanced} if it is not balanced but each of its proper induced 
subgraphs is balanced-. 

Even though there exists a polynomial-time algorithm to recognize balanced matrices based on a decomposition 
algorithm of Conforti, Cornuejols and Rao \cite{CCR99}, no such a characterization of minimally unbalanced graphs is known up-to-date and no 
conjecture has been formulated in this respect, contrarily to what happened for perfect graphs. 
A first attempt to characterize these obstructions was made by Bonomo et al. \cite{BDLS06} but the structures they identify, 
the {\em generalized odd suns}, though appliable to general graphs are far from being minimally unbalanced. 
\mybreak
In this paper we identify the complete list of minimally unbalanced graphs within the class of \emph{diamond-free graphs}, i.e., 
graphs with no induced copy of the diamond $K_4-e$, thereby giving a characterization of diamond-free balanced graphs by forbidden 
induced subgraphs. 

We focus on diamond-free graphs because their clique-matrix has the remarkable property of being \emph{linear}. 
A $\{0,1\}$-matrix is \emph{linear} if it does not contain ${1\,1\brack 1\,1}$ as a submatrix. 
A polynomial-time algorithm to recognize linear matrices was developed in a series of papers by Conforti and Rao \cite{CR93}
already in the late eighties, but no characterization of minimally non-balanced linear matrices is known. 
This because the algorithm relies on a decomposition and does not hint at the structure of the obstructions to 
balancedness. 
Now, as observed in Section~\ref{sec:notation} and in \cite{CR92b}, linear balanced matrices and diamond-free balanced graphs are essentially 
the same thing, so our characterization provides these obstructions to balancedness for linear matrices and, at the same time, allows a graph-theoretical 
interpretation of the algorithm given in \cite{CR92a} when specialized to linear matrices.

\mybreak
The complete characterization of minimally unbalanced diamond-free graphs is obtained in several steps that exploit different combinatorial 
constructions.
In Section~\ref{sec:multisun}, we exploit the properties of the linear clique-matrices to state that
minimally unbalanced diamond-free graphs belong to exactly two classes of graphs: 
odd holes and graphs that suitably generalize odd suns and that hereditarily satisfy  the property of being odd hole free (the 
{\em HOH-free multisuns}).
Unfortunately, this characterization does not say much about the structure of these graphs and we need techniques from other 
fields of combinatorics to provide a complete description of HOH-free multisuns. 
 
To this aim, we first identify a number of necessary conditions (the {\em N-conditions}) that are satisfied by HOH-free multisuns.
To formally handle these conditions, we associate words over a finite alphabet to multisuns. More precisely, 
in Section~\ref{sec:sword}, we prove that some equivalence class of words ({\em $s$-words}) are in one-to-one correspondence with 
families of graphs ${\cal S}_G$ consisting of even subdivisions/contractions of a multisun $G$ that satisfies the N-conditions.  

Unfortunately, the N-conditions are not sufficient to guarantee the HOH-freeness of a multisun $G$. So, in Section~\ref{sec:sunword},
we introduce the notion of {\em sunoid}, i.e., a multisun that satisfies the N-conditions as well as it does each of its proper sub-multisuns. 

Sunoids exhibit a large amount of geometrical structure and can be roughly described as the solution of the following combinatorial problem:
\mybreak
\emph{Take a collection of $p$ edge-disjoint cliques having one vertex in common. How to inscribe such cliques in an odd cycle $C$ in 
such a way that the resulting graph $G$ is diamond free and $G$ has no odd holes and so does any subgraph obtained by removing the edge-set of any $h\leq p-1$ cliques 
among the inscribed ones?} 
\mybreak

The simplest example of sunoid arises when $p=1$. In this case it suffices to place the vertices of one clique $K$ on $C$ so 
that two consecutive vertices of $K$ on $C$ are separated by a positive even number of vertices of $V(C)-K$ (see Figure~\ref{fig:1}). 
The term sunoid is due to the fact that when $p=1$ these graphs are subdivisions of an {\em odd sun}. 
\begin{figure}
    \begin{center}
             \includegraphics[width=4cm]{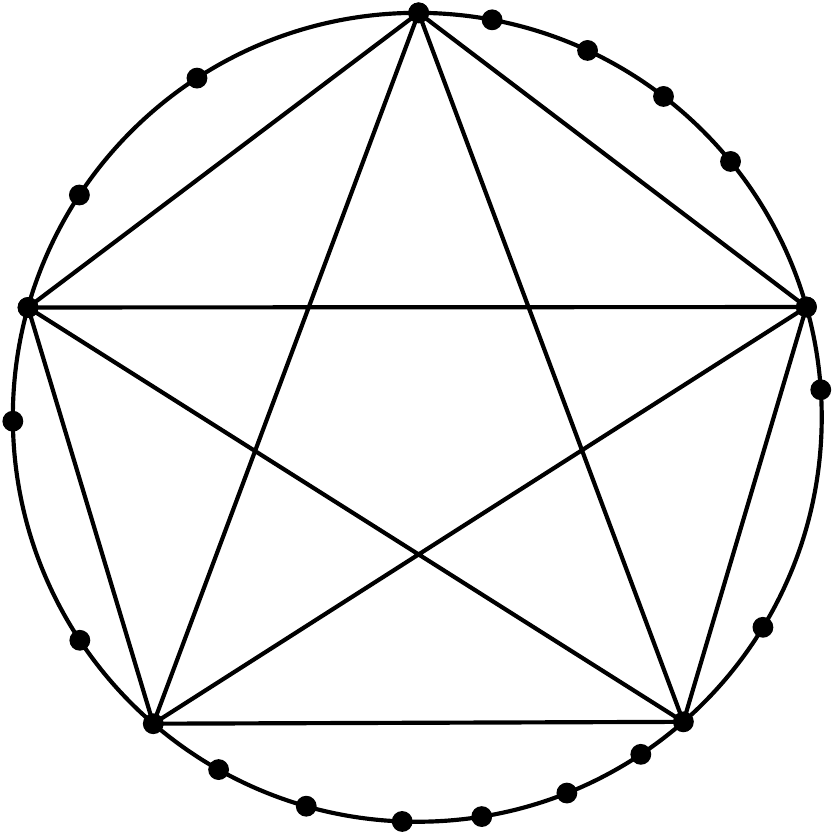}
    \end{center}
    \caption{A sunoid with one inscribed clique.}
    \protect\label{fig:1}
\noindent\hrulefill%
\end{figure}

Since sunoids form a subclass of multisuns, we represent them with special $s$-words: the ({\em sunwords}). 
This allows us to translate the geometrical structure of sunoids into two simple combinatorial conditions on $s$-words and to provide a
good characterization of sunoids, i.e., the membership problem for sunoids is in NP$\cap$Co-NP. 
This is described in details in Section~\ref{sec:sunword}. 

Finally, in Section~\ref{sec:mnb}, we show that sunoids are precisely the HOH-free multisuns, thus proving that they are, together with 
odd holes, the only obstructions to balancedness in diamond-free graphs.
+
In the last section we present some of the consequences of our result in apparently distant fields of combinatorics. 
Indeed, we observe that sunoids are intimately related with other well known combinatorial objects: the {\em Dyck-paths} \cite{De99}. 
Surprisingly enough, we prove that Dyck-paths are in correspondence with minimally unbalanced diamond-free graphs and 
this makes it possible  to enumerate them. 

\noindent
Less surprisingly there is a relationship between balancedness of graphs and another graph property known as \emph{clique-perfection}.
Indeed,  these two properties turns out to be equivalent in the class of diamond-free graphs 
(a simple proof of this fact is given in Section~\ref{sec:consequence}). 
In \cite{BCDII07} Bonomo et al. posed the following problem: 
{\it Is it possible to characterize diamond-free clique-perfect graphs in terms of minimally forbidden induced subgraphs?} 
This problem is solved in this paper because our characterization states that a diamond-free graph is 
clique-perfect if and only if it does not contain an odd hole or a sunoid as an induced subgraph.

\section{Definitions and basic facts}
\label{sec:notation}

Notation and terminology used throughout the paper is mostly standard. For $n\in\mathbb{N}$, $[n]$ is the set $\{1,\ldots,n\}$. The cardinality 
of a set $A$ is denoted by $\#A$. The order and the size of a graph are the cardinalities of its vertex- and edge-set, respectively. 
A \emph{clique} in a graph $G$ is a set of pairwise adjacent vertices and a \emph{stable set} is a set of pairwise non-adjacent vertices.
We do not distinguish between cliques of a graph $G$ and the subgraph they induce in $G$. A \emph{cycle} is a graph is a copy of 
$C_n$ while a hole is an induced copy of 
$C_n$ for $n\geq 4$. The cycle is \emph{even} or \emph{odd} according to the parity of $n$. In general we say that a finite set 
$A$ is is \emph{even} or \emph{odd} according to the parity of $\#A$. 

If $\A$ and $\B$ are $\{0,1\}$-matrices with the same number of rows and columns, then we write $\A\cong \B$ and say that $\A$ and $\B$ 
are \emph{congruent}, whenever $\A$ can be obtained from $\B$ by permuting its rows and columns.
The edge-vertex adjacency matrix of a cycle of length $n$ is referred to as an \emph{odd cycle matrix} (\emph{of order $n$}). 
Let $\C_n$ be the $\{0,1\}$-matrix matrix defined by $\C_n=(c_{i,j})$ where $c_{i,j}=1$ if $j=i,i+1$ and addition over indices is taken 
modulo $n$. Clearly any odd cycle matrix of order $n$ is congruent to $\C_n$. In particular any matrix congruent to $\C_3$ will be 
referred to as a \emph{triangle matrix}.

Let $\A$ be a $\{0,1\}$-matrix. 
A row $\ra$ of $\A$ is \emph{dominated} if there is some row $\rb$ of $\A$ such that $\ra\leq \rb$ componentwise. 
Otherwise row $\ra$ is \emph{maximal}. 
A \emph{row (column) submatrix} of $\A$ is a submatrix of $\A$ consisting of some rows (columns) of $A$. 

The {\em up-matrix} $\A^\uparrow$ is the row submatrix of $\A$ consisting of the maximal rows of $\A$,
i.e., the set on non-dominated rows of $\A$. Let $g(\A)\in \mathbb{N}\cup\{\infty\}$ be defined as follows: 
if $\A$ does not contain any odd cycle matrix as a submatrix, then $g(\A)=\infty$ otherwise $g(\A)$ is the least order of an 
odd cycle submatrix of $\A$.  Clearly $\A$ is balanced if and only if $g(\A)=\infty$.

The {\em intersection graph} of a matrix $\A$ is the graph $G_{\A}$ whose vertices are labelled by the columns of $\A$ 
and two vertices are adjacent if the corresponding columns are non-orthogonal. A $\{0,1\}$-matrix $\A$ is a {\em clique-matrix}
if $\A\cong \A_G$ for some graph $G$.
\mybreak
>From the definitions of clique-matrix and up-matrix it follows straightforwardly that:

\begin{lemma}
\label{lemma:2sec}
If $G$ is a graph and $G'$ is an induced subgraph of $G$, then $\A_{G'}$ is the up-matrix of the column submatrix of $\A_G$ 
consisting of the columns indexed by $V(G')$. Conversely, if $\A'$ is an up-matrix of $\A_G$, then $\A'\cong \A_{G'}$ for some induced 
subgraph $G'$ of $G$.
\end{lemma}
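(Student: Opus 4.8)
The plan is to prove both directions directly from the definitions, unwinding what the up-matrix and clique-matrix operations do at the level of columns and rows. Throughout, let me write $\A_G$ for the clique-matrix of $G$, whose columns are indexed by $V(G)$ and whose rows are the incidence vectors of the maximal cliques of $G$.

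For the first statement, I would start with an induced subgraph $G'$ of $G$ on vertex set $V(G') \subseteq V(G)$, and let $\M$ denote the column submatrix of $\A_G$ consisting of the columns indexed by $V(G')$. The rows of $\M$ are the restrictions to $V(G')$ of the incidence vectors of the maximal cliques of $G$. I would argue that the maximal rows of $\M$ are precisely the incidence vectors of the maximal cliques of $G'$. First, every row of $\M$ is the incidence vector (within $V(G')$) of $Q \cap V(G')$ for some maximal clique $Q$ of $G$; since $Q$ is a clique in $G$ and $G'$ is induced, $Q \cap V(G')$ is a clique in $G'$, so each row of $\M$ supports a clique of $G'$. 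Conversely, any maximal clique $K$ of $G'$ is a clique in $G$, hence extends to some maximal clique $Q$ of $G$ with $K \subseteq Q \cap V(G')$; since $K$ is maximal in $G'$ and $Q \cap V(G')$ is a clique of $G'$ containing $K$, we get $K = Q \cap V(G')$, so the incidence vector of $K$ appears as a row of $\M$. The only remaining point is that passing to the up-matrix (deleting dominated rows) leaves exactly one row per maximal clique of $G'$: a row of $\M$ is maximal in the componentwise order precisely when its support is an inclusion-maximal clique of $G'$, because domination $\ra \leq \rb$ corresponds to containment of supports. This identifies $\M^{\uparrow}$ with $\A_{G'}$ up to the labelling of rows, which is exactly $\A_{G'}$ being the up-matrix of $\M$.

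For the converse, suppose $\A'$ is an up-matrix of $\A_G$ (that is, $\A' = \A_G^{\uparrow}$, the submatrix of maximal rows). Since $\A_G$ already has the maximal cliques of $G$ as its rows, and distinct maximal cliques give incomparable incidence vectors (no maximal clique contains another), no row of $\A_G$ is dominated, so $\A_G^{\uparrow} = \A_G$ up to congruence; thus taking $G' = G$ gives $\A' \cong \A_{G'}$. If instead ``up-matrix of $\A_G$'' is meant in the broader sense of the up-matrix of some column submatrix of $\A_G$, then the first part already supplies the induced subgraph $G'$ with $\A' \cong \A_{G'}$, since the up-matrix of the column submatrix indexed by any $S \subseteq V(G)$ is $\A_{G[S]}$. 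Either reading is handled.

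The main obstacle, and the only place requiring care rather than routine bookkeeping, is the claim that the componentwise partial order on the rows of the column submatrix $\M$ faithfully tracks set-inclusion among the induced cliques of $G'$ — in particular that a maximal clique of $G'$ never fails to appear because its incidence vector happens to be dominated by a row coming from a \emph{different} maximal clique of $G$ whose trace on $V(G')$ is strictly larger. The clique-extension argument above rules this out: if the trace $Q \cap V(G')$ strictly contained a maximal clique $K$ of $G'$, then $Q \cap V(G')$ would itself be a clique of $G'$ strictly larger than $K$, contradicting maximality of $K$. Once this is pinned down, the correspondence between maximal rows and maximal cliques is exact, and the lemma follows. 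I would present this verification explicitly, since it is the substantive content hiding behind the phrase ``it follows straightforwardly.''
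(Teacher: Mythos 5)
Your proof is correct and matches the paper's intent: the paper gives no proof of this lemma at all, asserting that it ``follows straightforwardly'' from the definitions of clique-matrix and up-matrix, and your argument is precisely that routine verification made explicit (the clique-extension step showing every maximal clique of $G'$ is a trace $Q\cap V(G')$ of a maximal clique $Q$ of $G$, plus the identification of componentwise domination with inclusion of supports). You also correctly resolve the ambiguity in ``an up-matrix of $\A_G$'': the paper's later applications (e.g.\ in Theorem~\ref{thm:mnbum}, where $\DD^\uparrow$ is the up-matrix of a column submatrix $\DD$ of $\A_G$) confirm the broader reading, which your first part already covers.
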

 
Matrix $\A$ is {\em conformal} if $\A_{G_\A}\cong\A^\uparrow$ (clearly, clique-matrices are always conformal). 
\emph{Gilmore's criterion of conformality} asserts that $\A$ is conformal if and only if whenever $\C\cong \C_3$ is a submatrix of $\A$ 
then ${\C \brack 1\, 1\, 1}$ is also a submatrix of $\A$. The following fact (whose proof is just a metter of checking definitions) 
establishes the link between linear matrices and clique-matrices of diamond-free graphs.

\begin{lemma}
\label{lemma:AB}
If $\A$ is a linear matrix, then either $g(\A)=3$ or $\A^\uparrow$ is conformal and it is the 
clique-matrix of a diamond-free graph $G$---take $G\cong G_\A$ and recall that $G_\A\cong G_{\A^\uparrow}$--. 
Conversely, if $G$ is a diamond-free graph, then $\A_G$ is a conformal linear matrix with $g(\A_G)\geq 5$.
\end{lemma}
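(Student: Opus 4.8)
The plan is to prove both directions of Lemma~\ref{lemma:AB} by carefully unwinding the definitions of \emph{linear}, \emph{conformal}, and the operators $\A^\uparrow$ and $g(\cdot)$, using Gilmore's criterion as the main engine for the first direction.

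For the forward direction, suppose $\A$ is linear and $g(\A)\neq 3$, i.e.\ $\A$ contains no triangle matrix (no submatrix $\cong\C_3$). First I would establish conformality of $\A^\uparrow$ via Gilmore's criterion: I must show that whenever $\C\cong\C_3$ is a submatrix of $\A^\uparrow$, then ${\C\brack 1\,1\,1}$ is also a submatrix. But since $g(\A)\neq 3$ there is \emph{no} triangle submatrix in $\A$ at all, hence none in the row submatrix $\A^\uparrow$, so the hypothesis of Gilmore's criterion is vacuously satisfied and $\A^\uparrow$ is conformal; thus $\A_{G_{\A^\uparrow}}\cong(\A^\uparrow)^\uparrow=\A^\uparrow$. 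Taking $G\cong G_\A$ and using $G_\A\cong G_{\A^\uparrow}$ (the intersection graph is unchanged by deleting dominated rows, since domination preserves the orthogonality pattern among columns), we get $\A^\uparrow\cong\A_G$, so $\A^\uparrow$ is a clique-matrix. It remains to check $G$ is diamond-free. The key observation is that linearity forbids the pattern ${1\,1\brack1\,1}$: if $G$ contained an induced diamond $K_4-e$ with vertices $u,v,x,y$ where $x,y$ are the two nonadjacent vertices, then both maximal cliques $\{u,v,x\}$ and $\{u,v,y\}$ appear as rows of $\A_G\cong\A^\uparrow$, and restricting these two rows to the columns $u,v$ yields the forbidden $2\times2$ all-ones submatrix, contradicting linearity of $\A$ (and hence of its submatrix $\A^\uparrow$).

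For the converse, suppose $G$ is diamond-free. That $\A_G$ is a clique-matrix and hence conformal is immediate from the definitions. Linearity follows from diamond-freeness by the contrapositive of the argument just given: a ${1\,1\brack1\,1}$ submatrix of $\A_G$ would come from two maximal cliques $C_1,C_2$ and two vertices $a,b\in C_1\cap C_2$; since $a,b$ lie in a common clique they are adjacent, and because $C_1\neq C_2$ as maximal cliques there exist $x\in C_1\setminus C_2$ and $y\in C_2\setminus C_1$, whence $\{a,b,x,y\}$ would induce a diamond (one checks $x\nsim y$, else $C_1\cup\{y\}$ or the relevant clique would not be maximal), a contradiction. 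Finally, $g(\A_G)\geq5$: since $\A_G$ is a clique-matrix, an odd cycle submatrix of order $3$ would again force a ${1\,1\brack1\,1}$ pattern or a diamond as above, which is excluded, so the shortest odd cycle submatrix has order at least $5$.

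The main obstacle I anticipate is the bookkeeping around congruence and the interplay $G_\A\cong G_{\A^\uparrow}$: one must verify that passing to the up-matrix changes neither the intersection graph nor the presence of forbidden submatrices, and that the identifications $\A^\uparrow\cong\A_G$ are genuine congruences (row/column permutations) rather than mere abstract isomorphisms. The delicate point in the diamond argument is confirming that $x$ and $y$ are nonadjacent: this uses maximality of $C_1$ and $C_2$ together with diamond-freeness, and is exactly where the structure of the clique-matrix is essential. Everything else is routine definition-chasing, which I would not grind through in full.
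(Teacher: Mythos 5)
Your proof is correct and takes exactly the approach the paper intends: the paper gives no proof of this lemma at all (it dismisses it as ``just a matter of checking definitions''), and your argument is precisely that definition-check, with Gilmore's criterion---stated in the paper immediately before the lemma for evidently this purpose---doing the work you assign it. The only slips are cosmetic and repairable in one line each: in the forward direction the two triangles of a diamond need not themselves be maximal cliques (pass to maximal cliques containing them, which are distinct and still share $u,v$ since $x\nsim y$), and in the converse an arbitrary pair $x\in C_1\setminus C_2$, $y\in C_2\setminus C_1$ may be adjacent, so the nonadjacent pair must be produced by the maximality argument you only parenthesize (if $y$ were adjacent to all of $C_1\setminus C_2$ then $C_1\cup\{y\}$ would be a clique, contradicting maximality of $C_1$).
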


The main device we employ in our characterization is a construction that associates a labeled cycle with a word over the set 
of its labels. The next two subsections recall some basic terminology on words and the reader can skip them  until  
Section~\ref{sec:sword} where these concepts are used for the first time.

\subsection{Linear words}

Words on a finite alphabet $\Sigma$ are finite sequences of elements of $\Sigma$ and will be denoted by boldface lowercase letters. 
The set of word on $\Sigma$ 
is denoted, as customary, by $\Sigma^*$. If $\w=w_1w_2\cdots w_n$, $w_i\in \Sigma$, $\forall i\in [n]$ then $n$ is the \emph{length} of the word. 
The \emph{support} $\spt(\w)$ of a word $\w$ is the set of symbols 
occurring in $\w$. The \emph{concatenation} of the words $\uu=u_1u_2\cdots u_m$ and $\vv=v_1v_2\cdots v_n$ is the word 
$\uu\vv=u_1u_2 \cdots u_mv_1v_2\cdots v_n$. An \emph{interval} in a word 
$\w$ is a word $\vv$ of $\Sigma^*$ such that $\w=\uu\vv\z$ for some other (possibly empty) two words $\uu,\,\z\in \Sigma^*$. 
When $\z=\phi$ we refer to $\vv$ is a {\em postfix} of $\w$. When $\uu=\phi$, $\vv$ is a {\em prefix} of 
$\w$. A \emph{subword} of $\w$ is the word obtained from $\w$ by setting to $\phi$ (i.e., by deleting) some of the letters of $\w$. 
Note that a subword of $\w$ is not necessarily an interval of $\w$. 
For instance $adc$ is a subword but not an interval of $aabddc$.  
\mybreak
Concatenation of sequences turns $\Sigma^*$ into a commutative monoid (the free monoid), with the empty string $\phi$ acting as 
neutral element. This structure makes it possible to use the exponential notation for words: if $x$ is a letter of $\Sigma$, then 
one writes $x^n$ for the word $\w=w_1w_2\cdots w_n$ such that $w_i=x$, $\forall i\in [n]$ and then one formally introduces the rule of 
exponents $x^hx^k=x^{h+k}$, for $h,\,k\in \mathbb{N}$, $x\in \Sigma$. A word might be presented in several ways using the exponential 
notation. Among them we distinguish the \emph{standard form}: $\w=w_{i_1}^{k_1}w_{i_2}^{k_2}\ldots w_{i_s}^{k_s}$ where 
$w_{i_j}\in \Sigma$, $j\in [s]$ and $w_{i_j}\not=w_{i_{j+1}}$ $j=1,\ldots,s-1$, i.e., powers occur with highest possible exponent. 

If $\w$ is presented in standard form as $\w=w_{i_1}^{k_1}w_{i_2}^{k_2}\ldots w_{i_s}^{k_s}$ and $\uu$ is 
an interval of $\w$, then we write $\uu\lhd \w$ if $\uu=w_{i_l}^{k_l}\ldots w_{i_m}^{k_m}$ for some $l$ and $m$ such that 
$1\leq l\leq m\leq s$, that is $\uu\lhd \w$ if the letters in $\uu$ appear with the same exponents as in $\w$. 
Notice that $\uu$ can be an interval of $\w$ such that $\uu\ntriangleleft\w$: for instance, if $\w=aaabddccd$, then, in standard form,  
$\w=a^3bd^2c^2d$; now $a^2bd^2c$ is the standard form of the interval $aabddc$ of $\w$ though $a^2bd^2c\ntriangleleft\w$. 
On the other hand, for instance, $a^3bd^2\lhd \w$.

\subsection{Cyclic words}
Let $C\cong C_n$. A labeling of $C$ is a surjective mapping $g: V(C)\rightarrow \Sigma$, where $\Sigma$ is a finite set.
The labeling $g$ induces a linear word on $\Sigma$ that depends on the starting point and the chosen orientation of $C$.
In principle, any automorphism of $C$ produces a different ``linear word'' that encodes however the same information. 
Therefore two different words $\vv$ and $\w$ associated with $G$ have to be considered the same if they are in the same orbit under the action of the dihedral group $D_n$ (the automorphism group of $C$).
In this case we write $\vv\sim \w$. In other words $\vv\sim \w$ if $\w$ can be made coincident with $\vv$ by shifting the 
indices, reversing the order of reading and by composing these actions.

Now we introduce a notion that models contraction and subdivision of vertices of $C$ that are labelled by a prescribed symbol 
$\epsilon\in \Sigma$.
Let $\vv$ and $\w$ be two words in $\Sigma^*$. We say that $\w$ and $\vv$ are \emph{pattern-equivalent} if one can be 
transformed one into the other by repeatedly applying one of the following operations: 
replacing the interval $\epsilon \epsilon$ by the empty word $\phi$ (\emph{even contraction}) and replacing $\phi$ by $k$
times $\epsilon\epsilon$ with $k\geq 1$ (\emph{even subdivision}). 

The pattern $\pi(\w)$ of a word $\w$ is the subword of $\w$ obtained by deleting each occurrence of $\epsilon\epsilon$ in $\w$. 

Thus the pattern of a word $\w$ is the ``shortest word'' having the same pattern as $\w$, i.e., the pattern is the subword 
of $\w$ where no further contraction is allowed. 
So, for instance, if $\w=a\epsilon\epsilon b\epsilon\epsilon\epsilon c 
\epsilon\epsilon\epsilon \epsilon b \epsilon a$, then $\pi(\w)=ab\epsilon c b \epsilon a$. 

\begin{definition}
\label{def:1}
Let $\Sigma$ be an alphabet with a distinguished special symbol $\epsilon$. 
Taking the pattern induces an equivalence relation $\approx$ on $\Sigma^*$ defined by $\uu\approx \vv$ if and only 
if $\pi(\uu)\sim \pi(\vv)$. The equivalence classes of $\Sigma^*/\approx$ are called \emph{cyclic words on $\Sigma$}
and denoted by $\cw$.
\end{definition}

As customary, the class $\cw$ containing the element $\w\in \Sigma^*$ is denoted by $\cw=[\w]$ and we say that $\w$ is a 
representative of $\cw$. It is worth noticing that for each $\w\in \Sigma^*$, $\w\approx\pi(\w)$. Since all words contained in
the same class $\cw$ have the same pattern, $\pi(\w)$ is always a representative of $\cw=[\w]$.

\comment{
\begin{remark}
Working with patterns of words on $\Sigma$ is the same as working with the elements of $\langle \Sigma \ |\ \epsilon^2\rangle$ 
i.e., the free-monoid $\Sigma^*$ with the relation $\epsilon^2=\phi$ introduced. Therefore cyclic words could have be defined as 
elements of the 
quotient $\langle \Sigma \ |\ \epsilon^2\rangle/\sim$. We actually defined cyclic words as elements of the quotient 
$\frac{\Sigma^*/\parallel}{\sim}$, where $\parallel$ is the equivalence relation on $\Sigma^*$ defined by $\uu\parallel \w$ if 
$\pi(\uu)=\pi(\vv)$. We could have defined cyclic words as elements of the quotient $\frac{\Sigma^*/\sim}{\parallel}$.
\end{remark}
}

\section{Hereditarily odd hole free multisuns}

In this section we elucidate the structure of linear matrices that are minimally unbalanced.  
\label{sec:multisun} 

\begin{lemma}
\label{lemma:1}
Let $n$ be an odd positive integer and let $\s\cong {\,\C_n\,\brack\, \B\,}$ be a linear matrix. 
Then either $g(\s)<n$ or the rows of $\B$ have either at most one or at least three nonzero entries.  
\end{lemma}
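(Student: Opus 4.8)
The plan is to read off the combinatorial meaning of the two blocks of $\s$. Identify the $n$ columns of $\s$ with the vertices $1,\dots,n$ of a cycle $C_n$ and the $n$ rows of the block $\C_n$ with its edges, so that the row indexed by edge $i$ has its two $1$'s exactly in columns $i$ and $i+1$ (indices mod $n$). Note first that $\C_n$ is itself an odd cycle submatrix of order $n$, so $g(\s)\le n$ always holds; hence the dichotomy in the statement will follow once I prove the single implication: \emph{if some row of $\B$ has exactly two nonzero entries, then $g(\s)<n$.} So suppose, toward this, that a row $\rb$ of $\B$ has its $1$'s in precisely two columns $a$ and $b$.

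Next I would use linearity to locate $a$ and $b$. The only rows of $\s$ whose support can meet $\set{a,b}$ in two positions are $\rb$ itself and any row whose support is $\set{a,b}$. If $a$ and $b$ were consecutive on the cycle, then the edge-row of $\C_n$ indexed by $\set{a,b}$ together with $\rb$ would exhibit a ${1\,1\brack 1\,1}$ submatrix on columns $a,b$, contradicting that $\s$ is linear. Therefore $a$ and $b$ are non-adjacent, so the chord they determine splits $C_n$ into two arcs of edge-lengths $\ell_1,\ell_2$ with $\ell_1+\ell_2=n$ and $\ell_1,\ell_2\ge 2$.

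Finally I exploit the parity of $n$. Since $n$ is odd, exactly one of the two arcs is even; call its length $\ell$, and write its vertices in order as $a=v_0,v_1,\dots,v_\ell=b$, so that $2\le \ell\le n-2$. Consider the submatrix of $\s$ on the $\ell+1$ columns $v_0,\dots,v_\ell$ and on the $\ell+1$ rows consisting of the arc's edge-rows $\set{v_0v_1},\dots,\set{v_{\ell-1}v_\ell}$ together with $\rb$. Each arc edge-row restricts to two $1$'s in consecutive columns $v_{k},v_{k+1}$, and $\rb$ restricts to two $1$'s in columns $v_0=a$ and $v_\ell=b$ and nowhere else among these columns, precisely because $\rb$ has \emph{exactly} two nonzero entries. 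Hence this submatrix is congruent to the incidence matrix of a cycle of length $\ell+1$, i.e. to $\C_{\ell+1}$; as $\ell+1$ is odd with $3\le \ell+1\le n-1$, it is an odd cycle submatrix of order strictly less than $n$, giving $g(\s)\le \ell+1<n$, as required. The step needing the most care is the verification in this last paragraph that the chosen rows and columns yield exactly an odd cycle matrix with no stray $1$'s; this is where both hypotheses are used essentially—the ``exactly two'' hypothesis to keep $\rb$ from carrying extra $1$'s on the selected columns, and the non-adjacency coming from linearity to ensure the even arc has length $\ell\ge 2$ (so the resulting cycle is genuine) while still $\ell\le n-2$ (so it is strictly shorter than $C_n$).
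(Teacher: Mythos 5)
Your proof is correct and follows essentially the same route as the paper: reduce to showing that a row with exactly two nonzero entries forces $g(\s)<n$, use linearity to rule out that row being a copy of an edge-row of $\C_n$, and then observe that the resulting chord of the odd cycle closes a strictly shorter odd cycle, yielding an odd cycle submatrix of order less than $n$. The only difference is presentational—you verify the odd cycle submatrix entry by entry on the even arc, while the paper argues directly at the graph level with $C+f$—so the two arguments coincide in substance.
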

\begin{proof}
We show that if some row of $\B$ has exactly two nonzero components, then $g(\s)<n$. Let $\rb$ be such a row and observe that $\rb$ 
is not a copy of a row of $\C_n$ otherwise $\s$ would contain ${1\,1\brack 1\,1}$ as a submatrix contradicting linearity. 
We conclude that ${\,\C_n\,\brack\, \rb\,}$ is the edge-vertex adjacency matrix of a simple graph of order $n$. Such a graph, denoted by 
$C+f$, consists of a cycle $C$ of order $n$ and a chord $f$ induced by the vertices corresponding to the nonzero entries of $\rb$. 
Therefore, $n\geq 5$, because $C_3$ has no chords. Now, $C+f$ contains two cycles both containing $f$. One of these two cycles, $C'$ say, 
is odd, $C$ being odd, and shorter than $C$. Therefore, for some odd $h<n$, $\C_h$ is a submatrix of $\s$ and  $g(\s)\leq h<n$, as 
required.
\end{proof}

\begin{lemma}\label{lem:ancient}
Let $\A$ be a linear matrix of the form ${\C_n \brack \F}$. If $g=g(\A)<n$, then there exists a submatrix $\F'$  of $\A$ consisting of 
$h\geq 1$ rows of $\F$ such that ${\C_n \brack \F'}$ contains a submatrix congruent to $\C_g$ as an up-matrix.\end{lemma}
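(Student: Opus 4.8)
The plan is to localize a minimum odd cycle submatrix and then clean it up using the minimality of $g$ together with linearity. First I would use the hypothesis $g(\A)=g$ to fix an odd cycle submatrix $\s\cong\C_g$ of $\A$; say $\s$ uses the column set $S$ with $\#S=g$, so that on $S$ the rows of $\s$ realize the adjacencies of some cyclic order $s_1 s_2\cdots s_g s_1$ of $S$. Split the rows of $\s$ into those coming from $\C_n$ (cycle edges) and those coming from $\F$; let $\F'$ be the latter and $h=\#\F'$. To see $h\geq 1$, note that an odd cycle submatrix contained entirely in $\C_n$ would amount to choosing rows and columns so that each chosen vertex lies on exactly two chosen edges and each chosen edge has both endpoints chosen, i.e.\ a $2$-regular sub-incidence structure of the single cycle $C_n$; such a structure must be all of $C_n$, forcing its order to be $n$. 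Since $g<n$, the submatrix $\s$ must use at least one row of $\F$, so $\F'\neq\phi$.

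Next I would set $M={\C_n\brack\F'}$ and reduce the statement to showing that the up-matrix of the column submatrix of $M$ on $S$ is congruent to $\C_g$. Every row of this column submatrix has at most two nonzero entries: the rows of $\C_n$ are cycle edges (at most two $S$-endpoints) and the rows of $\F'$, being rows of $\s$, carry exactly two $1$'s on $S$. Consequently a row with a single $1$ is dominated as soon as its unique $S$-vertex lies in some two-$1$ row, which always happens because every $s\in S$ is covered by exactly two rows of $\s$; and no two-$1$ row can be dominated, there being no row of larger support. Hence the up-matrix of $M$ restricted to $S$ is exactly the set of distinct two-$1$ rows there.

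The heart of the argument is to prove that these two-$1$ rows are precisely the rows of $\s$. The $\F'$-rows are already two-$1$ rows on $S$, they are pairwise distinct on $S$ and distinct from every cycle edge on $S$; otherwise two rows would share two $1$-columns, producing a ${1\,1\brack 1\,1}$ submatrix and contradicting linearity. It remains to control the cycle edges: I claim that every cycle edge $e=\{i,i+1\}$ with $i,i+1\in S$ is one of the adjacencies of the cyclic order on $S$, and hence a row of $\s$. Indeed, if $i$ and $i+1$ were non-consecutive in the order $s_1\cdots s_g$, then $e$ would be a chord of the cycle $\s$, splitting it into two arcs whose edge-sets, each completed by $e$, form two odd cycle submatrices of $\A$ of orders summing to $g+2$; as $g$ is odd one of them is odd, and since the two vertices are non-consecutive both orders are at least $3$, hence at most $g-1<g$. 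This would exhibit an odd cycle submatrix of $\A$ of order strictly less than $g$, contradicting $g(\A)=g$. Therefore no such chord exists, every cycle edge on $S$ is a genuine adjacency of the order, and (again by linearity) it must coincide with the row of $\s$ realizing that adjacency.

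Putting these together, the two-$1$ rows of the column submatrix of $M$ on $S$ are exactly the $g$ rows of $\s$, so its up-matrix is congruent to $\C_g$, as required. I expect the main obstacle to be the chord-splitting step, where one must check that the shorter cycle produced is genuinely an odd cycle submatrix of $\A$ whose order lies in the range between $3$ and $g-1$; the linearity bookkeeping forbidding a cycle edge and an $\F'$-chord (or two distinct such rows) from realizing the same adjacency is routine but must be stated explicitly in order to identify the up-matrix exactly.
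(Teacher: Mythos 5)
Your proof is correct and is essentially the paper's argument: both fix a minimum odd cycle submatrix congruent to $\C_g$, restrict to its columns, observe that every remaining row of ${\C_n\brack\F'}$ has at most two ones there, and then use linearity (to forbid a second row repeating a pair of ones) together with the minimality of $g$ (the chord argument producing a shorter odd cycle) to conclude that the up-matrix on those columns is exactly the cycle. The only differences are organizational: by defining $\F'$ directly as the $\F$-rows of the chosen cycle you make the paper's minimality-of-$h$ Claim (the permutation bookkeeping showing all rows of $\F'$ land inside the cycle block) unnecessary, and you re-derive inline the duplicate-row/chord dichotomy that the paper simply imports from Lemma~\ref{lemma:1}.
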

\begin{proof}
Since $g(\A)=g$, $\A$ contains an odd-cycle submatrix $\C$ of order $g$. Since $\C$ is a submatrix of $\A$ while $\C$ is not a submatrix of $\C_n$ (because $g<n$) there is a least positive integer $h$ and a matrix $\F'$ consisting of $h$ 
rows of $\F$ such that $\C$ is a submatrix of $\NN:={\C_n \brack \F'}$. Hence there is a permutation $\pi$ of 
$1,2\ldots,n,n+1,\ldots,n+h$ and a permutation $\rho$ of $1,2\ldots,n$ such that permuting the rows and the columns of $\NN$ according 
to $\pi$ and $\rho$ respectively, yields the following matrix. 
$$\TT=\begin{bmatrix}
\C & \DD\\
\E & \HH
\end{bmatrix}
$$
where $\TT\cong \NN$, and the matrices $\DD$, $\E$, and $\HH$ have appropriate dimensions. We claim that
\mybreak
\textbf{Claim.}
\emph{$[\E\ |\ \HH]$ is congruent to a submatrix $\mathsf{J}$ consisting of $n+h-g$ rows of $\C_n$. Hence the rows of $\E$ have at most 
two nonzero entries.}
\mybreak
To prove the claim it suffices to show that $\{\pi(i) \ |\ i=n+1,\ldots, n+h\}\subseteq \{1,\ldots,g\}$.
Suppose to the contrary that $\pi(n+l)>g$ for some $l$ with $1\leq l\leq h$ and let $\zeta$ be the restriction of $\pi^{-1}$ to 
$\{1,2\ldots,n,n+1,\ldots,n+h\}-\{n+l\}$. By removing the $\pi(n+l)$-th row from $\TT$ and permuting the rows and the columns of $\TT$ 
according to $\zeta$ and $\rho^{-1}$ one obtains the matrix ${\C_n \brack \F''}$ for some matrix $\F''$ consisting of $h-1$ rows of $\F$. 
Since the action of $\zeta$ and $\rho^{-1}$ on $\{1,\ldots,g\}$ sends $\C$ into one of its congruent copy, it follows that $\F''$ contains 
$\C$ as a submatrix contradicting the minimality of $h$. Hence $[\E\ |\ \HH]\cong \mathsf{J}$ for some matrix $\mathsf{J}$ consisting of 
$n+h-g$ rows of $\C_n$. Since each row of $\mathsf{J}$ has exactly two nonzero entries, it follows that each row of $\E$ has at most two 
nonzero entries.
\hfill ({\it End of Claim}) 
\mybreak
Consider now the matrix ${\C \brack \E}$ and the up-matrix $\U={\C \brack \E}^\uparrow$ of $\NN$. By the claim, the rows of $\E$ have at most two nonzero entries. But, by Lemma~\ref{lemma:1}, the non-dominated rows of $\E$ have at least three nonzero 
entries. We conclude that all rows of $\E$ are dominated. Hence $\U\cong \C$ as required.
\end{proof}

\begin{theorem}
\label{thm:mnbum}
Let $G$ be a diamond-free graph of order $n$. Then $G$ is minimally unbalanced if and only if $n\geq 5$ is odd and either $\A_G\cong \C_n$ or $\A_G\cong {\,\C_n\,\brack\, \K\,}$ for some matrix $\K$ all whose rows have at least three nonzero entries and such that 
${\C_n \brack \K'}$ does not contain as up-matrix any submatrix congruent to $\C_t$, $t<n$, for any row submatrix $\K'$ of $\K$. 
\end{theorem}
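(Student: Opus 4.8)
The plan is to prove both implications by tying the balancedness of $G$ and of its induced subgraphs to the single quantity $g(\A_G)$. Throughout I use that, for diamond-free $G$, the matrix $\A_G$ is a conformal linear matrix with $g(\A_G)\geq 5$ (Lemma~\ref{lemma:AB}), that $G$ is balanced if and only if $g(\A_G)=\infty$, and the dictionary between induced subgraphs and up-matrices of column submatrices (Lemma~\ref{lemma:2sec}). A further fact I will use repeatedly is that in a diamond-free graph every edge lies in a \emph{unique} maximal clique.

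For the \emph{if} direction, assume $n\geq 5$ is odd and $\A_G\cong{\C_n \brack \K}$ with the stated properties (the case $\A_G\cong\C_n$ being the odd hole, handled identically with $\K$ empty). Since the $\C_n$-rows already form an odd cycle submatrix, $g(\A_G)\leq n$ and $G$ is unbalanced. I claim $g(\A_G)=n$: were $g(\A_G)=g<n$, Lemma~\ref{lem:ancient} applied with $\F=\K$ would yield a row submatrix $\K'$ of $\K$ for which ${\C_n \brack \K'}$ contains a submatrix congruent to $\C_g$ as an up-matrix, directly contradicting the hypothesis on $\K$. Granting $g(\A_G)=n$, minimality follows at once: for a proper induced subgraph $G'$ of order $<n$, Lemma~\ref{lemma:2sec} identifies $\A_{G'}$ with the up-matrix of the column submatrix of $\A_G$ on $V(G')$, so any odd cycle submatrix of $\A_{G'}$, of some order $t\leq|V(G')|<n$, would consist of rows of $\A_G$ restricted to $t$ columns and hence force $g(\A_G)\leq t<n$, a contradiction. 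Thus every proper induced subgraph is balanced and $G$ is minimally unbalanced.

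For the \emph{only if} direction, let $G$ be minimally unbalanced and set $g=g(\A_G)$, which is finite (as $G$ is unbalanced), odd, at least $5$, and at most $n$. The crux is a localization argument establishing $g=n$. Choose an odd cycle submatrix $\C_g$ of $\A_G$; it occupies a set $S$ of exactly $g$ columns, and its $g$ rows are maximal cliques $R_1,\dots,R_g$ with $R_i\cap S=\{w_i,w_{i+1}\}$ for the cyclically indexed $w_1,\dots,w_g$ of $S$. By diamond-freeness, $R_i$ is the only maximal clique containing $\{w_i,w_{i+1}\}$, so in the column submatrix of $\A_G$ on $S$ the two-entry row $\{w_i,w_{i+1}\}$ is non-dominated (any dominating clique would contain both $w_i$ and $w_{i+1}$, hence equal $R_i$). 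Therefore these rows survive into the up-matrix $\A_{G[S]}$, which thus contains $\C_g$, so $G[S]$ is unbalanced; if $g<n$ this is a proper induced subgraph, contradicting minimality. Hence $g=n$, with $n\geq5$ odd, and since $\C_g=\C_n$ uses all $n$ columns we get $\A_G\cong{\C_n \brack \K}$. To check the conditions on $\K$: as $g(\A_G)=n$ is not $<n$, Lemma~\ref{lemma:1} (with $\B=\K$) forces every row of $\K$ to have at most one or at least three nonzero entries, and a single-entry row would be an isolated-vertex clique, impossible since the $\C_n$-rows cover every vertex by an edge; hence every row of $\K$ has at least three nonzero entries. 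Finally, if some $\K'$ made ${\C_n \brack \K'}$ contain a submatrix congruent to $\C_t$ ($t<n$) as an up-matrix, those $t$ rows would again be rows of $\A_G$ restricted to $t$ columns, giving $g(\A_G)\leq t<n$, contradicting $g=n$.

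I expect the localization step in the \emph{only if} direction to be the main obstacle, since it is the point where diamond-freeness is genuinely indispensable. The delicate content is that the two-entry rows realizing the odd cycle submatrix remain non-dominated after deleting every column outside their support, which is precisely what the uniqueness of the maximal clique on each edge guarantees; without it, unbalancedness of a column submatrix of $\A_G$ need not descend to the up-matrix, i.e.\ to the clique-matrix of the corresponding induced subgraph.
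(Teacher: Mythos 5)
Your proposal is correct and follows essentially the same route as the paper: your structure-implies-minimality direction is exactly the paper's argument via Lemma~\ref{lem:ancient} (contrapositive) plus the observation that odd cycle submatrices of clique-matrices of proper induced subgraphs lift back to $\A_G$, and your converse direction is the paper's localization to the $g$ columns of a minimum odd cycle submatrix, identified with the clique-matrix of an induced subgraph via Lemma~\ref{lemma:2sec} and then forced to be all of $G$ by minimality. The only (harmless) differences are that where the paper invokes Lemma~\ref{lemma:1} to conclude that $\C_g$ together with the non-dominated extra rows forms the up-matrix, you argue the survival of the $\C_g$ rows directly from the uniqueness of the maximal clique containing each edge of a diamond-free graph (making explicit a domination detail the paper leaves implicit), and you derive the conditions on $\K$ after, rather than while, establishing $g(\A_G)=n$.
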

\begin{proof}
({\it If part}). Since $G$ is not balanced $g(\A_G)$ is finite, say $g(\A_G)=g$. Moreover, since $G$ is diamond-free, one has $g>3$ by Lemma~\ref{lemma:AB}. Hence $\A_G$ contains a submatrix $\C\cong \C_g$. Thus for some matrix $\F$ matrix $\DD={\,\C\,\brack\, \F\,}$ is congruent to a column submatrix of $\A_G$ . By Lemma ~\ref{lemma:1}, 
the rows of $\F$ have either at most one or at least three nonzero entries.
If all rows of $\F$ have at most one nonzero entry, then all such rows are dominated by the rows of $\C$ and  
$\DD^\uparrow\cong \C_g$. If at least one row of $\F$ has at least three nonzero entries, then $\DD^\uparrow\cong{\,\C\,\brack\, \K\,}$ where $\K$ is a row submatrix of $\F$ all whose rows have at least three nonzero entries. Moreover, in latter case, $\DD^\uparrow$ does not contain any submatrix congruent to $\C_t$ with $t<g$ and so any up-matrix congruent $\C_t$ with $t<g$, because $g(\A_G)=g$. 

Summarizing $\A_G$ contains $\DD^\uparrow$ as an up-matrix and $\DD^\uparrow$ is the clique matrix of an induced subgraph $G'$ of $G$ by Lemma~\ref{lemma:2sec}. Moreover, $G'$ is unbalanced because so is $\DD^\uparrow$. Since $G$ is minimally unbalanced it follows that $g=n$. 
\mybreak
({\it Only if part}). If $\A_G\cong \C_n$, then $n\geq 5$ by Lemma \ref{lemma:AB} and $G$ is clearly minimally unbalanced. Suppose now that for some odd $n\geq 5$, $\A_G\cong{\,\C_n\,\brack\, \K\,}$ and that ${\C_n \brack \K'}$ contains no up-matrix congruent to $\C_t$, $t<n$, for any row submatrix $\K'$ of $\K$. Since $g(\A_G)$ is finite, it follows, by Lemma~\ref{lem:ancient}, that $g(\A_G)=n$. Therefore each submatrix of $\A_G$ with less than $n$ columns is balanced. In particular so are the clique-matrices of the proper subgraphs of $G$. Therefore $G$ is minimally unbalanced.
\end{proof}

By Theorem \ref{thm:mnbum}, the essential property of a minimally unbalanced diamond-free graph is that there is a unique odd cycle 
submatrix in its clique-matrix, and such a matrix is a row submatrix. 
This fact is the easiest conclusion that one could have expected 
after the definition, because such graphs are precisely those that become balanced after removing a vertex.  

So clique-matrices of minimally unbalanced diamond-free graph are obtained as follows: start with an odd cycle matrix $\C$ and 
append 
rows to $\C$ so that the arising matrix $\s$ has the following property: $\s$ is a clique-matrix and $g(\s)$ does not decrease. 
Such a property is more easily handled and becomes more meaningful when interpreted in graphs.  
Indeed, if $G$ is a minimally unbalanced graph whose clique-matrix is of the form ${\,\C\,\brack\, \K\,}$ where $\C$ has order 
$n$ and 
$\K$ has $p$ rows, then $G$ consists of an odd cycle $C$ of order $n$ (whose edge-vertex matrix is $\C$) along with $p$ cliques 
(each one represented by a row of $\K$). The following definition is thus well justified.

\begin{definition}
\label{def:np-pair}
A \emph{multisun} is a diamond-free graph $G$ of odd order $n$ such that its maximal cliques of size 2 span a Hamiltonian cycle of $G$ called 
the \emph{rim} of $G$. All the remaining maximal cliques consist of nonconsecutive vertices of $C$ and 
are referred to as the \emph{inscribed cliques} of $G$. 
\end{definition}

In Fig.~\ref{fig:2}.a it is depicted a multisun with three inscribed cliques. 
To better understand the structure of multisuns we may use further properties of their clique-matrices. 
In fact, clique-matrices of diamond-free graphs are triangle-free (recall Lemma \ref{lemma:AB}) and linear. As a consequence, 
multisuns have linearly (in the size of the graph) many maximal cliques and such maximal cliques have the 
\emph{Helly property}, namely, any collection of pairwise intersecting maximal cliques has nonempty intersection \cite{Pr93}.
Hence, for a multisun $G$:
\begin{enumerate}[S1.]
\setlength{\itemsep}{-1mm}
\setlength{\topsep}{0pt}
\setlength{\partopsep}{0pt}
\item\label{com:nomonochromet2} The maximal cliques of $G$ are edge-disjoint and have the Helly property.
\item\label{com:cycle} The rim $C$ is uniquely determined by $G$.
\end{enumerate}
Moreover, multisuns can be recognized in polynomial time: first check for membership in the class of diamond-free graphs; 
list all the maximal cliques and check whether the maximal cliques of size 2 span a Hamiltonian cycle of $G$.

A graph is {\em odd hole free} if it does not contain any odd hole as an induced subgraph.
It turns out that odd hole freeness, though being a necessary property, is not sufficient to guarantee that a multisun is
minimally unbalanced. This because the clique-matrix of a minimally unbalanced graph has to satisfy all the requirements 
of Theorem~\ref{thm:mnbum}. For instance, in Fig.~\ref{fig:2}.b, it is shown an odd hole free multisun $G$ with a rim of 
order 23 that is not minimally unbalanced because its clique-matrix does not satisfy the conditions of Theorem~\ref{thm:mnbum}.
In fact, by deleting the edges of the rightmost inscribed triangle $T$ of $G$, one obtains a (partial) subgraph of $G$ 
consisting of an induced odd hole $C'$ of order 11. Hence, the multisun with rim $C'$ and inscribed clique $T$ is an induced 
subgraph of $G$ that is not balanced. 
To deal with these subgraphs of $G$, we introduce the following definition.

\begin{definition}
\label{def:sub-multisun}
Given a \emph{multisun} $G$, a {\em sub-multisun} of $G$ is the partial subgraph of $G$ obtained by removing the edge 
set of some (but not all) arbitrarily chosen inscribed cliques.

We say that a multisun $G$ {\em hereditarily} satisfies a given property $\mathcal P$ if it satisfies $\mathcal P$ and 
so does each of its sub-multisuns. 
\end{definition}

In what follows, we call {\em HOH-free} any multisun that hereditarily satisfies the property of being odd hole free. 
In Fig.~\ref{fig:2}.c it is depicted one of such multisuns. 
This allows us to restate Theorem~\ref{thm:mnbum} in graph-theoretical terminology as follows. 

\begin{coro}
\label{cor:2sec2}
A diamond-free graph is minimally unbalanced if and only if it is either an odd hole or an HOH-free multisun. 
\end{coro}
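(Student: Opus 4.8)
The plan is to derive Corollary~\ref{cor:2sec2} as a direct translation of Theorem~\ref{thm:mnbum} into the language of multisuns, establishing that the two graph-theoretic classes ``odd hole'' and ``HOH-free multisun'' coincide exactly with the two matrix-theoretic alternatives $\A_G\cong\C_n$ and $\A_G\cong{\,\C_n\,\brack\,\K\,}$ described there. First I would invoke Theorem~\ref{thm:mnbum}: a diamond-free $G$ is minimally unbalanced if and only if $n\geq 5$ is odd and either $\A_G\cong\C_n$, or $\A_G\cong{\,\C_n\,\brack\,\K\,}$ with every row of $\K$ having at least three nonzero entries and no row submatrix ${\C_n\brack\K'}$ containing a copy of $\C_t$ ($t<n$) as an up-matrix. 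The first case is immediate: $\A_G\cong\C_n$ means $\A_G$ is the edge-vertex adjacency matrix of $C_n$, and since $n\geq5$ is odd this is exactly the clique-matrix of an odd hole.

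For the second case I would show the equivalence between the matrix conditions and ``$G$ is an HOH-free multisun'' in two directions. For the forward direction, if $\A_G\cong{\,\C_n\,\brack\,\K\,}$ with each row of $\K$ having at least three nonzero entries, then by the discussion preceding Definition~\ref{def:np-pair} the graph $G$ decomposes as an odd cycle $C$ of order $n$ (whose maximal size-$2$ cliques form the rim) together with $p$ inscribed cliques, each of size $\geq3$, indexed by the rows of $\K$; since $G$ is diamond-free this is precisely a multisun in the sense of Definition~\ref{def:np-pair}. The heart of the argument is then to identify the condition ``no ${\C_n\brack\K'}$ contains $\C_t$ ($t<n$) as an up-matrix'' with HOH-freeness. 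I would argue that a sub-multisun $G'$ of $G$, obtained by deleting the edge sets of the inscribed cliques \emph{not} indexed by $\K'$, has clique-matrix ${\C_n\brack\K'}^\uparrow$ by Lemma~\ref{lemma:2sec}; an induced odd hole in $G'$ corresponds exactly to an odd-cycle submatrix, so by Lemma~\ref{lem:ancient} applied to ${\C_n\brack\K'}$ the existence of such a hole is equivalent to the appearance of some $\C_t$ with $t<n$ as an up-matrix. Hence the ``no $\C_t$ as up-matrix'' condition for every $\K'$ is equivalent to every sub-multisun (and $G$ itself, taking $\K'=\K$) being odd-hole-free, i.e., to $G$ being HOH-free.

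Conversely, given an HOH-free multisun $G$, property~S\ref{com:nomonochromet2} and Lemma~\ref{lemma:AB} guarantee $\A_G$ is a conformal linear matrix with $g\geq5$ whose rows split into the edge-vertex matrix $\C_n$ of the rim and the incidence rows $\K$ of the inscribed cliques, each of size $\geq3$; HOH-freeness of $G$ and all its sub-multisuns then furnishes, via the same Lemma~\ref{lem:ancient} correspondence, the absence of any $\C_t$ ($t<n$) as an up-matrix of any ${\C_n\brack\K'}$, so $\A_G$ satisfies all hypotheses of Theorem~\ref{thm:mnbum} and $G$ is minimally unbalanced. Combining the two cases gives the corollary.

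I expect the main obstacle to lie in the forward direction of the second case, specifically in making the translation between ``up-matrix congruent to $\C_t$'' and ``induced odd hole in a sub-multisun'' fully rigorous. The subtlety is that deleting inscribed-clique edges produces a \emph{partial} subgraph whose clique-matrix is the up-matrix ${\C_n\brack\K'}^\uparrow$, not simply the row submatrix ${\C_n\brack\K'}$; one must use Lemma~\ref{lem:ancient} to pass from an odd-cycle submatrix of order $t<n$ to a genuine $\C_t$ appearing as an up-matrix, and check that this up-matrix indeed realizes an induced odd hole (rather than a mere cycle subgraph) in the corresponding sub-multisun. Verifying that the inscribed cliques have the right ``nonconsecutive vertices'' structure so that the Helly property~S\ref{com:nomonochromet2} and diamond-freeness are preserved under edge-deletion is the step requiring the most care, but it is essentially bookkeeping once the dictionary between the matrix and graph formulations is pinned down.
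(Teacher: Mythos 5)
Your proposal is correct and takes essentially the same approach as the paper, which offers no separate argument but presents the corollary as a direct restatement of Theorem~\ref{thm:mnbum} via exactly the dictionary you spell out: multisun $\leftrightarrow$ $\A_G\cong{\,\C_n\,\brack\,\K\,}$, sub-multisun $\leftrightarrow$ row submatrix $\K'$, and induced odd hole in a sub-multisun $\leftrightarrow$ up-matrix congruent to $\C_t$, $t<n$ (Lemma~\ref{lemma:2sec}). The only quibbles are attributions, not gaps: that last equivalence needs only Lemma~\ref{lemma:2sec} rather than Lemma~\ref{lem:ancient} (whose content is already absorbed into Theorem~\ref{thm:mnbum}), and the distinction you worry about between ${\,\C_n\,\brack\,\K'\,}$ and its up-matrix is vacuous, since every row of ${\,\C_n\,\brack\,\K'\,}$ is a maximal clique of the sub-multisun and hence the two matrices coincide.
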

%

In the remaining of the paper, we provide a good characterization of HOH-free multisuns, namely, a way to describe and build the entire 
class of HOH-free multisuns and testing membership in the class efficiently. By Corollary \ref{cor:2sec2}, this will be the same as 
characterizing minimally unbalanced diamond-free graphs. 
To this aim we first elicit necessary conditions for a multisun to be HOH-free. Such conditions, referred throughout the 
rest of the paper to as {\em  N-conditions}, are listed below.

Let $C$ be the rim of a multisun $G$. Let $A$ and $B$ be two, not necessarily distinct, inscribed cliques. 
An \emph{$AB$-path} in $G$ is a subpath of $C$ whose endpoints are one in $A$ and the other in $B$ and whose 
inner vertices are in no inscribed clique. An \emph{$A$-path} in $G$ is an $AB$-path in $G$ with $A=B$; 
if $v\in B$, then an \emph{$Av$-path} is an $AB$-path whose endpoint in $B$ is $v$. Analogously, if $u\in A$, 
then a \emph{$uB$-path} is an $AB$-path whose endpoint in $A$ is $u$.
\mybreak

\paragraph{N-Conditions}
\begin{enumerate}[{N-1}]
\setlength{\itemsep}{-1mm}
\setlength{\topsep}{0pt}
\setlength{\partopsep}{0pt}
\item\label{com:p1} for each inscribed clique $A$, the number of vertices of each $A$-path is even and greater than or equal to four;
\item\label{com:p2} each inscribed clique is odd;
\item\label{com:p3} the inscribed cliques of $G$ pairwise intersect in the same vertex $\xi\in V(C)$ and are otherwise disjoint;
\item\label{com:p4} if $P$ is an $A\xi$-path for some inscribed clique $A$, then $P$ has an even number of vertices;
\item\label{com:p5} if $P$ is an $AB$-path for some two distinct inscribed clique $A$ and $B$, then $P$ has an odd number of vertices.
\end{enumerate}

\begin{figure}
    \begin{center}
             \includegraphics[width=14cm]{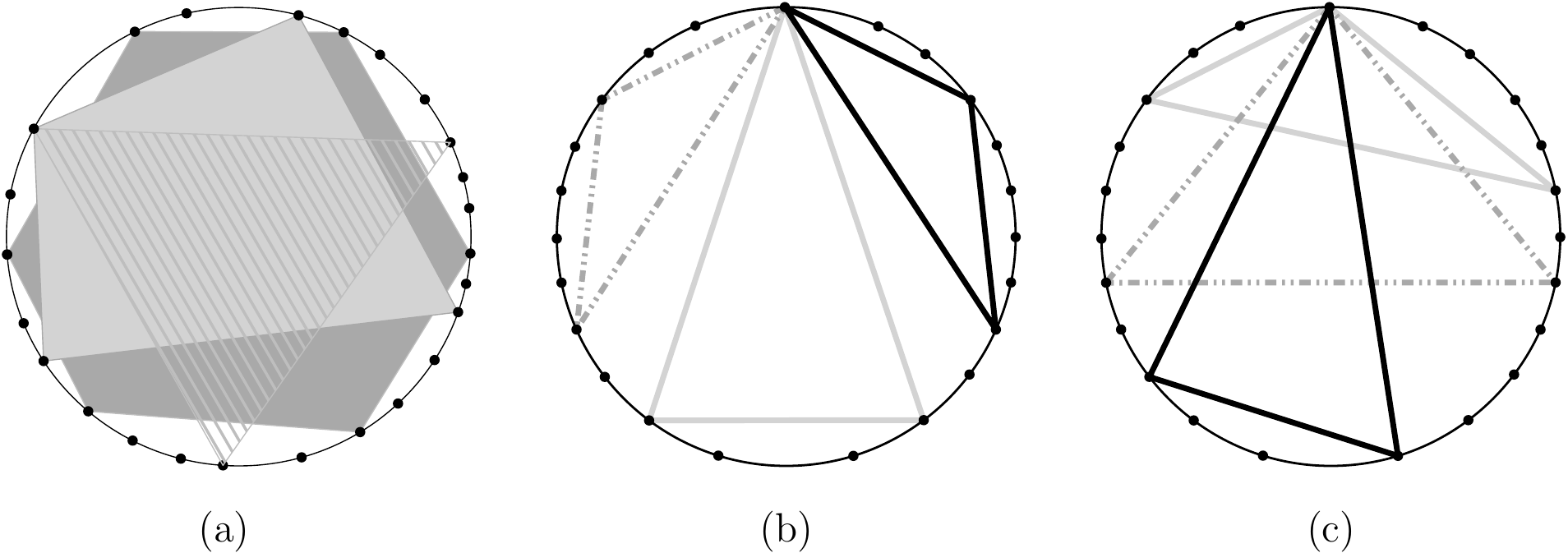}
    \end{center}
    \caption{(a) a multisun with three inscribed cliques; (b) an odd hole free multisun which is not hereditarily such; (c) a hereditarily odd-hole free multisun }
    \protect\label{fig:2}
\noindent\hrulefill%
\end{figure}

\begin{theorem}
\label{thm:fund1}
If $G$ is a HOH-free multisun, then $G$ satisfies the N-Conditions.
\end{theorem}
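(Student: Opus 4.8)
The plan is to prove each of the five N-conditions by contradiction, in each case exhibiting a sub-multisun (or the multisun $G$ itself) that contains an induced odd hole, thereby violating the HOH-free hypothesis. The key device throughout is the following observation: if we remove the edge-sets of all but a few carefully chosen inscribed cliques, the rim $C$ together with the surviving chords induces a graph whose odd holes can be detected by parity bookkeeping along arcs of $C$. Since $C$ is an odd cycle and each inscribed clique is edge-disjoint from $C$ and Helly (properties S1--S2), every inscribed clique meets $C$ in a set of nonconsecutive vertices that chop $C$ into arcs; the parity of the number of vertices on these arcs controls whether a shortcut through a clique creates an odd or even hole.

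First I would treat \textbf{N-1}. Fix an inscribed clique $A$ and an $A$-path $P$ (a subpath of $C$ between two vertices of $A$ with no interior vertex in any inscribed clique). In the sub-multisun where $A$ is the \emph{only} surviving inscribed clique, the two endpoints of $P$ are adjacent (they lie in $A$), so $P$ together with that edge forms a cycle $C'$, which is induced precisely because $P$'s interior vertices lie in no clique and hence have no chords. The length of $C'$ is the number of vertices of $P$; if $P$ had an odd number of vertices, $C'$ would be an induced odd hole (one checks it has length $\geq 4$, since two vertices of $A$ on $C$ are nonconsecutive). Thus each $A$-path must have an even number of vertices. The ``$\geq 4$'' part follows because consecutive vertices of $A$ on $C$ are separated, nonadjacency on the rim forcing at least two interior rim-vertices. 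For \textbf{N-2}, I would count: the cycle $C$ is odd, and the vertices of $A$ on $C$ partition $C$ into $\#A$ arcs (the minimal $A$-paths between cyclically consecutive vertices of $A$). By N-1 each such arc has an even number of vertices, but each shared endpoint is counted once; summing arc-lengths around $C$ and matching the total to $n$ (odd) forces $\#A$ to be odd, giving N-2.

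Next, for \textbf{N-3} I would argue that if two inscribed cliques $A,B$ were disjoint, or met in more than one vertex, one could route an induced odd cycle through both using one shortcut-edge from each clique; the Helly property (S1) already forces any pairwise-intersecting family to share a common vertex, so the real content is that any two inscribed cliques \emph{do} intersect and do so in a single vertex $\xi$. I would establish intersection by the same parity argument: two disjoint cliques would, in the sub-multisun retaining only $A$ and $B$, allow an arc-parity computation around $C$ producing an odd hole, while linearity (no ${1\,1\brack 1\,1}$) rules out two shared vertices. Conditions \textbf{N-4} and \textbf{N-5} then refine this by tracking parities of $A\xi$-paths and $AB$-paths: in the two-clique sub-multisun one replaces a rim-arc by the chord joining its endpoints inside a clique and recomputes the parity of the resulting cycle through $\xi$; demanding that no odd hole appear pins down the stated parities (even for $A\xi$-paths, odd for $AB$-paths with $A\neq B$), using N-1 and N-2 as inputs and the fact that $n$ is odd.

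The main obstacle I expect is bookkeeping the parities consistently across the different sub-multisuns while ensuring the cycles produced are genuinely \emph{induced} (no accidental chords from the surviving cliques) — this is where one must carefully invoke that the surviving chords lie only inside the retained cliques and that interior rim-vertices of the relevant paths lie in no clique, so that the shortcut edges are the only extra adjacencies. A secondary subtlety is organizing the conditions in dependency order, since N-2, N-4, and N-5 all lean on the arc-parity statement of N-1 and on the common-vertex structure of N-3; I would therefore prove N-1 first, deduce N-2, establish N-3, and finally obtain N-4 and N-5 together, in each step isolating the smallest sub-multisun that exposes the offending odd hole.
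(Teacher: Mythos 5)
Your overall plan coincides with the paper's proof: establish each N-condition by finding an induced odd hole in a carefully chosen sub-multisun, in the order N-1, N-2, N-3, then N-4/N-5. However, two steps do not go through as written. First, the ``$\geq 4$'' part of N-1: nonconsecutiveness of two vertices of $A$ on the rim forces only \emph{one} interior rim-vertex between them, not two, so that claim is false as stated; and your parity argument cannot repair it, because a $3$-vertex $A$-path closes up into a \emph{triangle}, which is not a hole (holes are induced cycles of length at least $4$), so HOH-freeness is never violated by it. The paper excludes this case without any appeal to odd holes: a $3$-vertex $A$-path would create a triangle containing two consecutive edges of the rim, contradicting that rim edges are maximal cliques of size $2$ (part of the definition of a multisun). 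This matters downstream, since your count for N-2 needs \emph{every} $A$-path interior to be even.

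Second, and more seriously, the heart of the theorem is the pairwise-intersection claim in N-3, and your proposal only names the argument (``the same parity argument'', ``an arc-parity computation around $C$ producing an odd hole'') without supplying it. Note that when $A\cap B=\emptyset$ a \emph{single} $AB$-path cannot yield a hole, since its endpoints are nonadjacent; the paper's hole is assembled from \emph{two} $AB$-paths of different parity, glued by one edge inside $A$ and one inside $B$. Producing two such paths is the crux and requires a specific count: the maximal $A$-blocks and $B$-blocks alternate around $C$, so the number of $AB$-path interiors is even; $\#A$, $\#B$ and $n$ are all odd (using N-2), so the number of vertices lying in no inscribed clique is odd; the interiors of $A$-paths and $B$-paths are even (using N-1), so an odd number of $AB$-path interiors have odd length; an even total with an odd number of odd members forces two $AB$-path interiors of different parity, and the union of the corresponding $AB$-paths induces the odd hole. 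Without this counting, N-3 remains unproved, and with it N-4 and N-5, which presuppose the common vertex $\xi$. (Minor point: once $\xi$ is available, N-4 needs no new cycle construction at all, since an $A\xi$-path has both endpoints in $A$ and is therefore an $A$-path, so N-1 applies directly.)
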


\begin{proof}
Let $G$ have order $n$ and $C$ be its rim. Since the only maximal cliques of $G$ are the edges of $C$ and the inscribed cliques, no 
edge of $C$ can be contained in any inscribed clique. Therefore if $u$ and $v$ are vertices that belong to the same inscribed clique 
$A$, they are not consecutive in $C$. Thus the corresponding $A$-path has at least three vertices. If it had exactly three vertices, 
then $G$ would contain a triangle using two adjacent edges of $C$ and so, both such edges would not be maximal cliques. 
Therefore the order of every $A$-path is at least four for every inscribed clique $A$. If for some inscribed clique $A$, the order 
of an $A$-path were odd, then the sub-multisun $G'$ having only $A$ as an inscribed clique, would contain an odd hole induced by 
the vertices of the $A$-path (because the endpoints of such a path are adjacent in $G'$). 
This contradicts the HOH-freeness of $G$ and establishes \eqref{com:p1}. 

To prove \eqref{com:p2}, consider the sub-multisun $G'$ of $G$ having only one inscribed clique, say $A$.
Hence $C$ is the union of $A$-paths. These $A$-paths have pairwise disjoint interiors. Therefore $V(C)-A$ in the union of these 
interiors and $n-\#A$ is even because the number of vertices of the interior of each $A$-path is such by \eqref{com:p1}. 
We conclude that $n$ and $\#A$ have the same parity. Hence each inscribed clique is odd.

To prove \eqref{com:p3} we first show that the inscribed cliques pairwise intersect. Suppose conversely that there exist two disjoint 
inscribed cliques, say $A$ and $B$. Let $G'$ be the sub-multisun of $G$ having only $A$ and $B$ as inscribed cliques. 
Label the vertices of the rim as follows: those in $A$ by $a$, those in $B$ by $b$ and those in $V(C)-A\cup B$ by $\epsilon$. 
Fix one of the two possible orientation of $C$ and choose an arbitrary vertex labeled $\epsilon$. Start traversing the cycle 
from that vertex and pause when the first vertex labeled $a$ is met. This will be the initial vertex. From this vertex traverse 
the cycle in the prescribed orientation and record the label of each vertex met during the traversal. 
Stop when the last vertex of $C$ right before the initial vertex is met. 
In this way one defines an $\{a,b,\epsilon\}$-valued sequence of the following form:
\begin{equation}\label{eq:example}
\mathbf{A}\star \mathbf{B}\star \mathbf{A} \star \mathbf{B}\cdots \mathbf{A}\star \mathbf{B} \star
\end{equation}
where $\mathbf{A}$, $\star$ and $\mathbf{B}$ are sequences defined, respectively, by
\begin{itemize}
\setlength{\itemsep}{-1mm}
\setlength{\topsep}{0pt}
\setlength{\partopsep}{0pt}
\item[--] the labels of the vertices of a maximal subpath of $C$ whose vertices are labeled either $a$ or $\epsilon$;
\item[--] the labels of interior of an $AB$-path;
\item[--] the labels of the vertices of a maximal subpath of $C$ whose vertices are labeled either $b$ or $\epsilon$.
\end{itemize}
Let $s$ be the number of occurrences of $\star$. Notice that the number of occurrences of $\star$ equals the sum of the occurrences of 
$\mathbf{A}$ and $\mathbf{B}$. Therefore, since
$\mathbf{A}$ and $\mathbf{B}$ alternate, $s$ is even. Since $A$ and $B$ are disjoint and both odd, and $n$ is
odd, the number of vertices labeled $\epsilon$ is odd. Hence there is an odd number of sequences $\star$ of odd length. But since $s$ 
is even there is at least one sequence $\star$ with even length. Hence there are two sequences $\star$ having different parity. These 
two sequences correspond to the interiors of two $AB$-paths (whose endpoints are therefore labeled $a$ and $b$). Let $I$ and $J$ be 
the vertex-sets of these two $AB$-paths. By what just said, $\#I$ and $\#J$ have different parity. We show that $I\cup J$ induces an 
odd hole in $G'$. Argue as follows: those vertices of $I\cup J$ labeled $\epsilon$ belong to neither $A$ nor $B$ while the two vertices 
labeled $a$ belong only to $A$ (therefore they are connected by an edge) and the two vertices labeled $b$ belong only to $B$ (therefore 
they are connected by an edge, as well). This contradicts that $G'$ is odd hole free and hence that $G$ is a HOH-free multisun. 
We conclude that there are no two disjoint inscribed cliques, that is, the 
inscribed cliques pairwise intersect. Since the collection of the maximal cliques of $G$ has the Helly property and 
the inscribed cliques form a subcollection consisting of pairwise intersecting members, it follows that the inscribed cliques have 
a vertex $\xi$ in common. 
On the other hand the inscribed cliques have at most one vertex in common because $G$ is diamond-free. Hence the inscribed cliques 
have exactly one vertex in common. 
We conclude that $(X-\{\xi\})\cap (Y-\{\xi\})=\emptyset$ for every two distinct inscribed cliques $X$ and $Y$ as stated. 
This completes the proof of Part \eqref{com:p3}. 

Part \eqref{com:p4} now follows by Part \eqref{com:p1}. It remains to prove Part \eqref{com:p5}. Suppose to the contrary that for 
some two distinct inscribed cliques $A$ and $B$ there is some $AB$-path $I$ with even parity. Let $u$ and $v$ be the endpoints of 
$I$ with, say, $u\in A$ and $v\in B$. By definition of $AB$-path, $u$ belongs only to $A$ and to no other inscribed clique, $v$ 
belongs only to $B$ and to no other inscribed clique while the inner vertices of $I$ belongs to no inscribed clique. It 
follows that $I\cup \{\xi\}$ induces an odd hole in $G$ because $\xi\in A\cap B$. 
This contradiction proves Part \eqref{com:p5} and therefore the theorem.
\end{proof}

\begin{remark}
Part \eqref{com:p3} of Theorem \ref{thm:fund1} was already proved in a different way in Conforti and Rao in Lemmas 5.1 and 
5.2~\cite{CR92I}.
\end{remark}

\begin{remark}
We note here explicitly that the order $n$ of a HOH-free multisun cannot be too small. Indeed $n\geq 9$ because each inscribed clique 
has at least three vertices and no triangle inscribed in a pentagon or in a heptagon can satisfy \eqref{com:p1} and 
\eqref{com:p2} with $n$ odd.
\end{remark}

The first key remark about the N-Conditions is the following. Let $G$ be a multisun and let $C$ be its rim. 
An \emph{even subdivision} of $G$ is the graph obtained by
subdividing edges of the rim through the insertion of an even number of vertices. An \emph{even contraction}
is the inverse operation of even subdivision and it is defined as follows: let $P$ be either an $A$-path or and $AB$-path of length at least 5, 
where $A$ and $B$ are two inscribed cliques; replace path $P$ by a shorter path $P'$ between the same endpoints and with the same parity of 
$P$. If $G$ is a multisun, so is each of its even contractions/subdivisions. Actually we can say more.

\begin{theorem}  If G is a multisun, then so is each of its even contractions and subdivisions.
If $G$ satisfies the N-conditions, then so does each of its even contractions and subdivisions.
\end{theorem}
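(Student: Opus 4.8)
The plan is to verify the two implications separately, and in each case it suffices to check the effect of a single elementary move (one even subdivision inserting exactly two vertices on a rim edge, and its inverse, one even contraction), since both even subdivision and even contraction are defined as finite iterations of such elementary moves. By induction on the number of moves, the general statement follows once the one-step case is settled.

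First I would treat the claim that the multisun property is preserved. Recall (Definition~\ref{def:np-pair}) that a multisun is a diamond-free graph of \emph{odd} order whose maximal cliques of size $2$ span a Hamiltonian cycle (the rim), with all other maximal cliques being inscribed cliques on nonconsecutive rim vertices. I would verify each ingredient after one elementary move. \emph{Parity of the order:} inserting two vertices keeps $n$ odd (and deleting two keeps it odd); this is the reason only \emph{even} subdivisions/contractions are allowed. \emph{Diamond-freeness:} the inserted vertices have degree $2$ and lie only on the rim, so they cannot participate in any $K_4-e$; for contraction one notes the endpoints of the replaced path $P$ keep the same clique memberships and the interior stays outside every inscribed clique, so no diamond is created. \emph{The rim remains a Hamiltonian cycle of size-$2$ maximal cliques:} the two new rim edges are maximal cliques because their new endpoints have degree $2$ (so these edges lie in no triangle), and subdividing within a single rim edge keeps the cycle Hamiltonian; the inscribed cliques are untouched and still consist of nonconsecutive rim vertices since $P$ is lengthened/shortened by an \emph{even} positive amount and its endpoints are unchanged. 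Here one must use that an even contraction is only applied to an $A$-path or $AB$-path of length at least $5$, precisely so that the shortened $P'$ still has nonconsecutive rim endpoints lying in the inscribed cliques.

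Next I would treat preservation of the N-Conditions. The key observation is that an elementary even move alters \emph{only the length} of the unique $A$-path or $AB$-path containing the affected rim edge, changing its number of vertices by exactly $2$, and leaves all inscribed cliques, their pairwise intersections, and all other paths setwise unchanged. Consequently: N-2 and N-3 are about the inscribed cliques alone and are trivially preserved. For N-1, N-4, N-5 the relevant quantity is the \emph{parity} (and, for N-1, the lower bound on the number of vertices) of each $A$- or $AB$-path. Since the length changes by $2$, parity is preserved, so N-4 (even $A\xi$-paths) and N-5 (odd $AB$-paths) carry over immediately, as does the parity half of N-1. The one point requiring care is the bound ``at least four'' in N-1 under \emph{contraction}: I would argue that an even contraction either leaves that particular $A$-path unchanged, or shortens it from length $\geq 5$ to a strictly shorter path of the same parity, and that the definition forbids shortening below four vertices (the shortened path $P'$ still joins two nonconsecutive rim vertices of the same clique, whence at least four vertices). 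Under subdivision the count only increases, so N-1 is immediate.

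I do not expect a genuine obstacle here; the statement is essentially a bookkeeping lemma, and the only subtle point is the one already flagged, namely ensuring that an even contraction does not violate the ``$\geq 4$'' clause of N-1, which is handled by the length-$\geq 5$ hypothesis built into the definition of even contraction. The main thing to get right is to phrase everything in terms of the \emph{single} elementary move and invoke induction, rather than trying to reason about an arbitrary composite contraction/subdivision at once.
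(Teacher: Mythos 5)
Your proposal is correct and takes essentially the same route as the paper: the paper's proof is a two-sentence observation that even subdivisions/contractions preserve each defining property of multisuns and preserve the parity of $A$-, $AB$- and $A\xi$-paths while leaving the inscribed cliques untouched, which is exactly what you verify. Your version merely spells this out via induction on elementary two-vertex moves and explicitly handles the ``at least four vertices'' clause of N-1 using the length-at-least-5 hypothesis in the definition of even contraction, a detail the paper leaves implicit.
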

\begin{proof} 
Even contractions and even subdivisions preserve each of the defining properties
of multisuns. Even contractions and even subdivisions also preserve the parity of $A$-paths,
$AB$-paths and $A\xi$-paths while they affect neither the size nor the parity of the inscribed cliques. 
\end{proof}

In view of the previous theorem we see that if $G$ is a multisun that satisfies the N-Conditions, then the class of multisuns obtained as
even contractions and/or even subdvisions of $G$ satisfies the N-Conditions as well. Therefore such properties are properties of the entire class 
rather than of the single representative. Moreover, this class always contains a ``minimal representative'', namely a multisun where
no further even subdivision/contraction is allowed. 
Such a minimal representative, called {\em standard multisun}, is a multisun whose $A$-paths and $A\xi$-paths have lenght 4 
and whose $AB$-paths have length 3 for each pair of inscribed cliques $A$ and $B$.

So, we denote by  ${\cal S}_G$ the class of all even subdivisions/contractions of a given multisun $G$. Observe that this class actually contains
all even subdivisions of the standard multisun obtained from $G$ by performing all possible even contractions.

\section{$s$-words}
\label{sec:sword}
\mybreak

Let us denote by ${\cal S}$ the class of multisuns that satisfy the N-conditions. We now aim at encoding the class of even 
contractions/subdivisions of a member of $\mathcal{S}$  by certain ``cyclic'' words (see Section~2.2).
This will allow us to handle multisuns 
symbolically and, finally, to characterize the words associated with HOH-free multisuns. 

The multisuns in ${\cal S}$  can be easily encoded by a suitable and canonical labeling of the vertices of the rim as follows. 

\begin{definition}[Canonical Labeling]
Let $G$ be a member of ${\cal S}$ and let $C$ be its rim. Let the inscribed cliques of $G$ be denoted by Latin uppercase 
letters and labeled by Latin lowercase letters $a,b,\ldots$. 
Set $\Sigma=\{\epsilon,\sigma\}\cup\{x \ |\  X\,\, \text{is an inscribed clique of $G$}\}$.
Let $f: V(C)\rightarrow \Sigma$ be defined as follows:
\begin{equation}\label{eq:flabeling}
f(v)=\left\{
\begin{array}{ll}
\sigma & \text{if}\, v\, \text { belongs to more than one inscribed clique of $G$}; \\
x & \text{if}\,\, v \,\text{ belongs to the unique inscribed clique of $G$ labeled $X$};\\
\epsilon & \text{if}\,\, v\,\text{belongs to no inscribed clique of $G$}.
\end{array}
\right.
\end{equation}
The above mapping is referred to as the \emph{canonical labeling} of $G$ and the letters in $\Sigma-\{\epsilon\}$ are called the \emph{proper letters}.
\end{definition}

Let $C$ be the rim of $G\in \mathcal{S}$. The choice of a starting vertex and of an orientation of $C$ induces a linear word on $f(V(C))$ 
(the image of the canonical labeling of $G$) that we denote by $\w_G$. Such a word is simply the sequence of the labels of the vertices met 
during the traversal 
of $C$ from the chosen starting vertex and in the prescribed direction. We can now associate with the class ${\cal S}_G$ of even subdivisions of $G$ 
a cyclic word on the alphabet $f(V(C))$ as follows. Let $G'\in {\cal S}_G$ and let $\w_G$ and $\w_{G'}$ be the ``linear'' words induced by the 
canonical labeling of $G$ and $G'$, respectively. Since $G'$ is a even subdivision/contraction of $G$, it follows that $\pi(\w_{G'})\sim \pi(\w_G)$
accordingly with Definition~\ref{def:1}. 
Hence, $\w_{G'}\approx \w_G$ and $\w_{G'}$ and $\w_G$ represent the same cyclic word $\cw_G=[\w_G]$. 

Let us examine more closely how such cyclic words $\cw_G$ look like when $G\in {\cal S}$. Suppose first that $G$ has only one inscribed clique $A$ whose 
size is odd by N-\ref{com:p2}. In this case, by suitably choosing a starting vertex,  $\w_G$ has the standard form (see Figure~\ref{fig:3}~(a))
$$a\epsilon^{\mu_1}a\epsilon^{\mu_2}a\ldots a\epsilon^{\mu_{\lambda-1}}a\epsilon^{\mu_\lambda}$$ 
where the $\mu_i$'s are all even by N-\ref{com:p4} and $\lambda$ is odd by N-\ref{com:p2}.
Therefore $\w_G\approx \pi(\w_G)= a^\lambda$ and, consequently, $\cw=[a^\lambda]$ is the cyclic word associated with $G$. 

If $G$ is a member of ${\cal S}$ that has more than one inscribed clique (see Figure~\ref{fig:3}~(b)), then, starting at $\sigma$, $\w_G$ has the following standard form:
$$\sigma \epsilon^{\mu_1}x_{i_1}^{\lambda_1} \epsilon^{\mu_2} x_{i_2}^{\lambda_2}\epsilon^{\mu_3} x_{i_3}^{\lambda_3}\cdots\epsilon^{\mu_s}x_{i_{s}}^{\lambda_{s}}\epsilon^{\mu_{s+1}},$$
where, $\mu_1$ and $\mu_{s+1}$ are both positive even integers by N-\ref{com:p4}, $\mu_i$ is a positive odd integer for $i=2,\ldots,s$ by 
N-\ref{com:p5}, the $x_{i_h}$'s are labels in $\{a,b,\ldots\}$ such that $x_{i_h}\not=x_{i_{h+1}}$ $h=1,2,\ldots,s-1$ and $\lambda_1,\ldots, 
\lambda_s$ are positive integers. Moreover, the sum of the $\lambda_i$ corresponding to each proper letter is even by 
N-\ref{com:p2}. It follows that, as in the previous case, 
$$\w_G\approx \pi(\w_G)=\sigma x_{i_1}^{\lambda_1} \epsilon x_{i_2}^{\lambda_2}\epsilon x_{i_3}^{\lambda_3}\cdots\epsilon x_{i_{s}}^{\lambda_{s}}$$
and so, $\cw_G=[\sigma x_{i_1}^{\lambda_1} \epsilon x_{i_2}^{\lambda_2}\epsilon x_{i_3}^{\lambda_3}\cdots\epsilon x_{i_{s}}^{\lambda_{s}}]$.

The above reasoning motivates the following.

\begin{figure}
    \begin{center}
             \includegraphics[width=12.5cm]{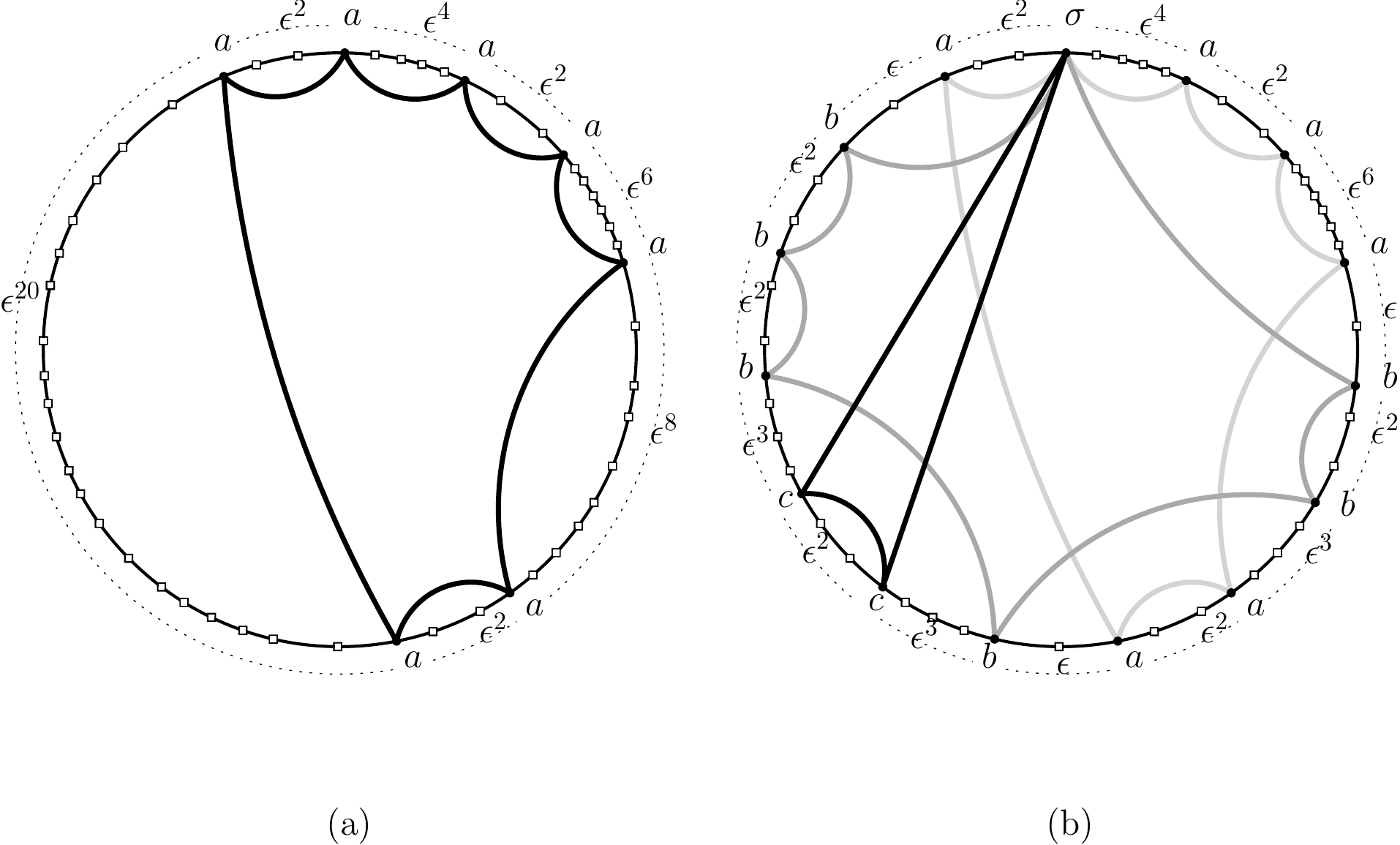}
    \end{center}
    \caption{ (a) the $s$-word $\big[a^7\big]$ is the cyclic word $[a\epsilon^2
a\epsilon^4 a\epsilon^2 a\epsilon^6 a\epsilon^8 a\epsilon^2 a\epsilon^{20}]$;
(b) the $s$-word $\big[\sigma a^3\epsilon b^2 \epsilon a^2 \epsilon b \epsilon
c^2 \epsilon b^3 \epsilon a\big]$ is the cyclic word $[\sigma \epsilon^4 a
\epsilon^2 a \epsilon^6 a \epsilon b \epsilon^2 b \epsilon^3 a \epsilon^2 a
\epsilon b \epsilon^3 c \epsilon^2 c\epsilon^3 b \epsilon^2 b \epsilon^2 b
\epsilon a \epsilon^2]$.} 
    \protect\label{fig:3}
\noindent\hrulefill%
\end{figure}

\begin{definition}[$s$-word]\label{def:swords}
\label{def:pattern} 
A cyclic word $\cw$ on the alphabet $\Sigma$ is an {\em $s$-word} if it has one of the following forms
\begin{enumerate}[{i)}]
\item\label{com:swordi} if $\Sigma=\{\epsilon,a\}$, then $\cw=[a^\lambda]$ for some odd integer $\lambda$; 
\item\label{com:swordii} if $\Sigma=\{\sigma,\epsilon,a,b,\ldots\}$, then $\cw=[\sigma x_{i_1}^{\lambda_1} \epsilon x_{i_2}^{\lambda_2}\epsilon x_{i_3}^{\lambda_3}\cdots\epsilon x_{i_{s}}^{\lambda_{s}}]$ for positive 
integers $\lambda_1,\ldots, \lambda_s$, where $x_{i_h}\not=x_{i_{h+1}}$ $h=1,2,\ldots,s-1$ and the sum of the exponents of each proper letter is even.
\end{enumerate}
\end{definition}

If $\w=\sigma x_{i_1}^{\lambda_1} \epsilon x_{i_2}^{\lambda_2}\epsilon x_{i_3}^{\lambda_3}\cdots\epsilon x_{i_{s}}^{\lambda_{s}}$, then 
$-\w=\sigma x_{i_s}^{\lambda_s} \epsilon x_{i_{s-1}}^{\lambda_{s-1}}\epsilon x_{i_{s-2}}^{\lambda_{s-2}}\cdots\epsilon x_{i_{1}}^{\lambda_{1}}$ is called the 
\emph{opposite} of $\w$. Since $-\w$ arises from $\w$ by shifting indices and reversing the order of reading, it follows that $\w\sim-\w$ 
(in particular, $\w\approx -\w$). Hence $[\w]=[-\w]$.   
\mybreak
The construction that associates members of $\mathcal{S}$ with $s$-words (via the canonical labeling) can be reversed as follows.
\mybreak
Let $\cw$ be an $s$-word and suppose first that $\Sigma$ contains at least two proper letters. 
Since $\cw$ is a cyclic word, among the representatives 
of $\cw$ there is one, say $\w$, coinciding with its own pattern and starting with $\sigma$. 
Such a representative has the form, $\sigma\uu$. Hence $\sigma\uu\approx \sigma\epsilon\epsilon\uu\epsilon\epsilon\approx 
\sigma\epsilon\epsilon\z\epsilon\epsilon$ where $\z$ is a subdivision of $\uu$ obtained by replacing any interval $xx$ of $\uu$ 
by the word $x\epsilon\epsilon x$. 
Let $\vv=v_1\ldots v_n=\sigma\epsilon\epsilon\z\epsilon\epsilon$, then the mapping $f:[n]\rightarrow \Sigma$ defined by $1\mapsto \sigma$ and 
$i\mapsto v_i$, 
$i=2,\ldots,n$ is a labeling of $C_n$. Such a labeling is the canonical labeling of a multisun $G_\cw$ with rim $C_n$ and with inscribed 
cliques $X:=\{1\}\cup f^{-1}(x)$ for $x\in \Sigma-\{\epsilon,\sigma\}$. Moreover, by construction, $\cw=[\w_{G_{\cw}}]=[\vv]$ and $G_\cw$ is the 
standard multisun of $\mathcal{S}_{G_\cw}$.
If $\Sigma=\{\epsilon,a\}$, then $\cw=[a^\lambda]$ and $G_\cw$ has a rim isomorphic to $C_{3\lambda}$ and one inscribed clique consisting of the vertices 
$1,4,7\cdots$.

\begin{definition}
For an $s$-word $\cw$, the above defined graph $G_\cw$ is called \emph{the standard multisun of} $\cw$.
\end{definition}

The straightforward construction described above proves that the class $\mathcal{S}$ and the class $\mathbb{S}$ of the $s$-words are essentialy the same 
combinatorial object. This fact is more formally summarized as follows.

\begin{prop}\label{prop:A}
$\cw$ is an $s$-word if and only if $\cw=[\w_G]$ for some $G$ in $\cal S$. If $\cw=[\w_G]$ then $G\in \mathcal{S}_{G_\cw}$.
\end{prop}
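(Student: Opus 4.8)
The plan is to prove the two equivalences by exhibiting, in both directions, the tight dictionary between the $s$-word conditions of Definition~\ref{def:swords} and the N-Conditions, and then to deduce the membership clause from the identification of pattern-equivalence with even subdivision/contraction. First I would settle the implication that $\cw=[\w_G]$ for some $G\in\mathcal{S}$ forces $\cw$ to be an $s$-word; this is essentially the computation already carried out just before Definition~\ref{def:swords}. Fixing the canonical labeling of $G$ and starting the traversal at the common vertex $\xi$ (labeled $\sigma$) when there are at least two inscribed cliques, one reads off the standard form $\sigma\epsilon^{\mu_1}x_{i_1}^{\lambda_1}\epsilon^{\mu_2}\cdots\epsilon^{\mu_s}x_{i_s}^{\lambda_s}\epsilon^{\mu_{s+1}}$. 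Here N-\ref{com:p4} forces $\mu_1$ and $\mu_{s+1}$ to be even, N-\ref{com:p5} forces $\mu_2,\ldots,\mu_s$ to be odd, and N-\ref{com:p2} forces the total exponent of each proper letter to be even, since the inscribed clique $X$ has odd size and contains $\xi$, hence has an even number of vertices distinct from $\xi$. Passing to the pattern $\pi(\w_G)$ deletes every $\epsilon^{\mathrm{even}}$ and collapses every $\epsilon^{\mathrm{odd}}$ to a single $\epsilon$, yielding exactly the second alternative of Definition~\ref{def:swords}; the one-clique case yields the first alternative with $\lambda$ odd by N-\ref{com:p2}. Thus $\cw=[\w_G]=[\pi(\w_G)]$ is an $s$-word.

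For the converse I would run the construction of $G_\cw$ given above: choose the representative $\w=\sigma\uu$ equal to its own pattern, form $\vv=\sigma\epsilon\epsilon\z\epsilon\epsilon$ by inserting $\epsilon\epsilon$ into every repeated block (so that $\vv\approx\w$), and read $\vv$ as the canonical labeling of a cycle $C_n$ with inscribed cliques $X=\{1\}\cup f^{-1}(x)$. I must then check that $G_\cw\in\mathcal{S}$, i.e.\ that it is a genuine multisun satisfying the N-Conditions. The same dictionary now runs in reverse: the $\epsilon\epsilon$ inserted between equal proper letters makes every $A$-path of order exactly $4$, giving N-\ref{com:p1}; the single $\epsilon$ between distinct proper letters makes every $AB$-path of order $3$, giving N-\ref{com:p5}; the leading and trailing $\epsilon\epsilon$ make the $A\xi$-paths even, giving N-\ref{com:p4}; and the evenness of the total exponent of each proper letter makes each inscribed clique odd, giving N-\ref{com:p2}. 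Edge-disjointness and the Helly property, hence N-\ref{com:p3}, follow because distinct proper letters never index the same rim vertex and $\sigma$ labels the only shared vertex. Diamond-freeness and the fact that the maximal $2$-cliques span the rim are immediate from the labeling, so $G_\cw$ is a multisun in $\mathcal{S}$ and, by construction, $\cw=[\w_{G_{\cw}}]$.

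Finally, for the clause that $\cw=[\w_G]$ implies $G\in\mathcal{S}_{G_\cw}$, I would observe that the defining moves of $\approx$ in Definition~\ref{def:1}, namely deleting or inserting $\epsilon\epsilon$, correspond precisely to even contractions and even subdivisions of the rim at vertices lying in no inscribed clique, while $\sim$ corresponds to a change of starting vertex and orientation. Since $[\w_G]=\cw=[\w_{G_{\cw}}]$ gives $\w_G\approx\w_{G_{\cw}}$ up to $\sim$, the two multisuns differ by a finite sequence of even subdivisions and contractions; as $G_\cw$ is the standard (minimal) representative of its class, $G$ lies in $\mathcal{S}_{G_\cw}$.

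The routine-but-delicate part, and the one I expect to be the main obstacle, is the verification in the converse direction that $G_\cw$ is a well-formed multisun: that the inscribed cliques are edge-disjoint, pairwise meet only in $\xi$, and that the insertion of $\epsilon\epsilon$-blocks creates no accidental chord of the rim and no diamond. Once this bookkeeping is in place, both the forward and backward readings of the correspondence, and the identification of $\approx$ with even subdivision/contraction, are direct.
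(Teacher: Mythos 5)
Your proposal is correct and takes essentially the same approach as the paper: the paper proves Proposition~\ref{prop:A} simply by appealing to the two constructions described immediately before it (reading off an $s$-word from the canonical labeling of a multisun in $\mathcal{S}$ via the N-conditions, and building the standard multisun $G_\cw$ from a pattern representative $\sigma\epsilon\epsilon\z\epsilon\epsilon$), together with the identification of pattern-equivalence with even subdivision/contraction, which is exactly the dictionary you set up. Your write-up merely makes explicit the verifications (that $G_\cw$ is a well-formed multisun satisfying the N-conditions, and that $\approx$-moves correspond to even subdivisions/contractions of the rim) that the paper declares ``straightforward'' and leaves to the reader.
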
 

\begin{remark}
\label{rem:A}

By Proposition \ref{prop:A}, the only thing that really matters in defining a multisun is the way in which the vertices of the inscribed cliques 
occur circularly on the rim. The essence of this fact is captured by the definition of pattern. More precisely, the class ${\cal S}_G$ is an element of the quotient of ${\cal S}$ by the relation 
\emph{$G\bowtie G'$ $\Leftrightarrow$ $G$ is an even subdivision or contraction of $G'$} and the maps 

\begin{equation}\label{eq:mapinverse}
\begin{array}{ccc}
\psi:{\cal S}/\bowtie\,\rightarrow \mathbb{S} & &\phi:\mathbb{S}\rightarrow {\cal S}/\bowtie\\ 
\qquad {\cal S}_G\mapsto [\w_G] & & \quad\cw\mapsto {\cal S}_{G_\cw}
\end{array} 
\end{equation}
are inverse of each other.
\end{remark}

Let $\cw$ be an $s$-word on an alphabet $\Sigma$ with at least two proper letters, $C$ be the rim of the standard multisun $G_\cw$ and 
$f$ the canonical labeling of $G_\cw$. For $u,\,v\in V(C)$, denote 
by $d_C(u,v)$ be the shortest path on $C$ between $u$ and $v$.
Then $\cw$ induces a linear order on $\Sigma-\{\epsilon\}$ 
denoted by $\preceq$ as follows.
\begin{subequations}
\begin{equation}\label{eq:labelorder1}
\text{$\sigma$ is the smallest element in}\,\,(\Sigma-\{\epsilon\},\preceq)
\end{equation}
and for $x\not=y$
\begin{equation}\label{eq:labelorder2}
x\preceq y \Longleftrightarrow \min_{v\in f^{-1}(x)}d_C(v,\xi)<\min_{v\in f^{-1}(y)}d_C(v,\xi)
\end{equation}
\end{subequations}
Hence we have the finite chain $\sigma\prec a\prec b\prec c\cdots$. Clearly this order depends only on the pattern of the representatives of 
$\cw$, therefore it is determined by $\cw$ and it is called the linear order {\em induced by $\cw$}. 

\begin{prop}
\label{prop:transit} 
Let $\cw$ be an $s$-word on $\Sigma$. Then the number of occurrences of $\epsilon$ in any representative $\w$ of $\cw$ is even while the number 
$s$ of the exponents of the proper letters in $\w$ is odd. Therefore such parameters are determined by $\cw$.
\end{prop}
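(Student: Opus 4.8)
The plan is to deduce both parities from two occurrence-invariants of the cyclic word together with the single external fact that the underlying multisun has odd order. First I would use Proposition~\ref{prop:A} to fix a multisun $G \in \mathcal{S}$ with $\cw = [\w_G]$, so that the odd order $n$ of $G$ becomes available; this is the only ingredient that is not purely symbolic, and it is essential, since the conditions of Definition~\ref{def:swords} pin down the parity of $\sum_h \lambda_h$ but say nothing directly about the parity of $n$.

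The first observation I would establish is that, for every proper letter, its number of occurrences is the same in all representatives of $\cw$. This is because $\approx$ is generated by even contraction and even subdivision, which only delete or insert copies of $\epsilon$, and by the dihedral relation $\sim$, which permutes positions without changing any letter count; passing through patterns, $\pi$ itself alters only the $\epsilon$'s. Hence each representative contains exactly one $\sigma$, and the total number $p$ of occurrences of the remaining proper letters equals $\sum_{h=1}^{s}\lambda_h$. By the $s$-word condition this is a sum of per-letter exponent-sums, each even, so $p$ is even.

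The heart of the argument is a length-parity count. For any representative $\w$, the length of $\w$ and the length of its pattern $\pi(\w)$ agree modulo $2$ (each even contraction removes the two symbols of $\epsilon\epsilon$), and $\pi(\w)\sim\pi(\w_G)$ forces these pattern lengths to coincide; tracing back, the length of $\w$ is congruent to $n$, hence odd. Writing $e$ for the number of $\epsilon$'s and using $1+e+p=\text{length}(\w)$ with $p$ even, I conclude that $e$ is even, which is the first assertion. Applying the same conclusion to the specific representative $\pi(\w_G)=\sigma x_{i_1}^{\lambda_1}\epsilon\cdots\epsilon x_{i_s}^{\lambda_s}$, whose $\epsilon$-count is exactly $s-1$, I get that $s-1$ is even and hence $s$ is odd. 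The case $\Sigma=\{\epsilon,a\}$ is immediate, since the representative $a^\lambda$ has no $\epsilon$ and one proper block while $\lambda$ is odd by definition.

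I expect the only real obstacle to be interpretational: one must read ``the number $s$ of the exponents of the proper letters'' as the number of proper blocks of the \emph{pattern} (the $s$ of Definition~\ref{def:swords}), not as the number of maximal proper-letter runs of an arbitrary representative---the latter is not invariant, since an even subdivision carried out inside a block splits one run into two. With that reading the closing clause, that these parameters are determined by $\cw$, is automatic: the lone $\sigma$, the parity of $p$, and the pattern-based $s$ are all manifestly unchanged under $\approx$.
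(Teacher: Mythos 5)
Your proof is correct and follows essentially the same route as the paper's: both arguments reduce to the parity count $1+\sum_h\lambda_h+\#\{\text{occurrences of }\epsilon\}\equiv n \pmod 2$, using the odd order of the underlying multisun and the evenness of the per-letter exponent sums from the $s$-word definition, and then read off the parity of $s$ from the $s-1$ occurrences of $\epsilon$ in the pattern. Your write-up is in fact more careful on one point: the paper asserts that the number of proper-letter exponents is the same in $\w$ and $\pi(\w)$, which fails for arbitrary representatives (an even subdivision inside a block splits one run into two), whereas you correctly observe that $s$ must be read off the pattern, where it is invariant.
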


\begin{proof}
By the definition of cyclic word, the number of occurrences of $\epsilon$ has the same parity in $\w$ 
and $\pi(\w)$ while the number of exponents of the proper letters is the same in $\w$ and $\pi(\w)$. 
If $\cw=[a^\lambda]$ there is nothing to prove. We can thus suppose that $\cw$ is as in \ref{com:swordii}) of 
Definition~\ref{def:swords}. 
Observe that $1+\sum_{i=1}^s\lambda_i+(\text{number of occurrences of $\epsilon$}$)
has the same parity as the order of the rim of $G_\cw$ and such an order is odd by the definition of multisun (the term 1 accounts for $\sigma$). 
Since $\sum_{i=1}^s\lambda_i$ is even by N-\ref{com:p2}, it follows that the number of occurrences of $\epsilon$ is such while $s$ is odd.
\end{proof}

Our last device is a formal way to encode the operation of taking sub-multisuns. Recall that, if $G$ is a multisun with $p$ inscribed cliques, 
such an operation consists of removing the edge-set of a subset of $q<p$ cliques (so we are not allowed to remove all the inscribed cliques). 

Let thus $\cw$ be an $s$-word on $\Sigma$ and let $\w$ be any of its representatives. Assume that $\Sigma$ contains $p$ proper letters and 
$p\geq 2$. Hence $\sigma\in \Sigma$. If $\{x_1,\ldots,x_q\}\subseteq \Sigma-\{\epsilon,\sigma\}$, then we denote by
$\w\arrowvert_{x_1=\epsilon,\ldots x_q=\epsilon}$ the linear word $\uu$ on $(\Sigma-\{x_1,\ldots,x_q\})$ defined as follows:
\mybreak
--if $q<p-1$, then $\uu$ is obtained from $\w$ by setting to $\epsilon$ all the occurrences of the letters in the set $\{x_1,\ldots,x_q\}$ 
\mybreakk
--if $q=p-1$, then $\uu$ is obtained from $\w$ by setting to $\epsilon$ all the occurrences of the letters in the set $\{x_1,\ldots,x_q\}$ and by setting $\sigma$ to $y$ where $y$ is the unique proper letter in $(\Sigma-\{x_1,\ldots,x_q\})$.
\mybreak
For instance, if $\w=\sigma a\epsilon b \epsilon c^2\epsilon b \epsilon a^3$ then $\w\arrowvert_{a=\epsilon}=\sigma\epsilon^2 b \epsilon c^2\epsilon b \epsilon^4$ while $\w\arrowvert_{a=\epsilon, b=\epsilon}=c\epsilon^4 c^2\epsilon^6$.

We now set 
$$\cw\arrowvert_{x_1=\epsilon,\ldots x_q=\epsilon}=[\w\arrowvert_{x_1=\epsilon,\ldots x_q=\epsilon}]$$

and call it the \emph{projection} of $\cw$ on $\Sigma -\{x_1,\ldots,x_q\}$. It is easily checked that 
$$[\w]=[\vv]\Longrightarrow [\w\arrowvert_{x_1=\epsilon,\ldots x_q=\epsilon}]=[\vv\arrowvert_{x_1=\epsilon,\ldots x_q=\epsilon}].$$ 
Therefore projection is a well defined operation. Referring to the examples above we have, $\cw\arrowvert_{a=\epsilon}=[\sigma b \epsilon c^2\epsilon b]$ (because $\sigma\epsilon^2 b \epsilon c^2\epsilon b \epsilon^4$ $\approx$ $\pi(\sigma\epsilon^2 b \epsilon c^2\epsilon b \epsilon^4)$)and $\cw\arrowvert_{a=\epsilon, b=\epsilon}=[c^3]$ (because $c\epsilon^4 c^2\epsilon^6$ $\approx$ $\pi(c\epsilon^4 c^2\epsilon^6)=c^3$).
Notice that if $\Sigma$ contains exactly one proper letter, then the projection is not defined: in this case we would have $q=p$. 
\mybreak

\section{Sunoids and Sunwords}
\label{sec:sunword}

In this section, we focus on a subclass of ${\cal S}$ that will turn out to be equivalent to HOH-free multisuns:
the class of multisuns in ${\cal S}$ that hereditarily satisfy the N-conditions. 

\begin{definition}[Sunoids]
\label{def:fsunoid}
A \emph{sunoid} is a multisun $G$ that hereditarily satisfies the N-conditions, i.e., each sub-multisun of $G$ satisfies the N-conditions. 
The class of sunoids is denoted by ${\cal S}^*$. 
\end{definition}

If ${\mathcal S}^{**}$ denotes the class of HOH-free multisuns, then we clearly have the containment among classes 
${\mathcal S}^{**}\subseteqq {\mathcal S}^*\subseteqq {\mathcal S}$. 
In the next section we actually prove that ${\mathcal S}^{**}={\mathcal S}^*$.

\comment{
By \eqref{eq:mapinverse}, the $s$-words are precisely the images of the classes $\mathca{S}_G$ within cyclic words. 
When $G$ runs within sunoids, the corresponding words describe a subclass of $s$-words. If one recalls the definition of canonical labeling, 
then it is clear that setting letter $x$ to $\epsilon$ in any representative of an $s$-word of a sunoid amounts to deleting the edge set of 
the inscribed clique whose label is $x$. Hence the $s$-word of a sunoid has the property that, as long as the word contains at least two 
proper letters, by setting to $\epsilon$ all the occurrences of any given proper letter results in an $s$-word (projection). 
}
As observed in Section~\ref{sec:sword},  projecting is the same as taking sub-multisuns (up to some technicalities).
Since sunoids hereditarily satisfy the N-conditions their corresponding $s$-words are hereditarily $s$-words.
In view of this property, $s$-words of sunoids deserve a special name.

\begin{definition}
A sunword is the $s$-word of a sunoid. The class of sunwords is denoted by $\mathbb{S}^*$
\end{definition}

Both the $s$-words in Fig.~\ref{fig:3} are examples of sunwords.
Hence, characterizing sunoids among multisuns satisfying the N-conditions is the same as characterizing sunwords among $s$-words. 
This will be done in the following where we show that sunwords are the $s$-words that satisfy certain parity conditions on the exponents 
of the proper letters plus a sort of ``continuity'' property with respect to a linear order on the proper letters. 
By Proposition \ref{prop:A} this yields a characterization of sunoids via the maps in \eqref{eq:mapinverse}. 

The next property is the essential property of sunwords within the $s$-words. To understand it, it suffices to observe that taking a 
sub-multisun is tantamount that projecting the corresponding sunword on the complement of the labels of the cliques removed and resort to the fact 
that sunoids are hereditarily such. 

\begin{prop}\label{prop:B}
Every projection of a sunword is a sunword. The class $\mathbb{S}^*$ is thus closed under projection.
\end{prop}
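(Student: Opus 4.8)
The plan is to prove Proposition~\ref{prop:B} by establishing a clean dictionary between the combinatorial operation of projection on $s$-words and the geometric operation of taking sub-multisuns, and then invoking the hereditary definition of sunoids. Let $\cw$ be a sunword on an alphabet $\Sigma$ with $p\geq 2$ proper letters, and let $\{x_1,\ldots,x_q\}\subseteq \Sigma-\{\epsilon,\sigma\}$ with $q\leq p-1$ (so that projection is defined). Write $G=G_\cw$ for the standard multisun of $\cw$, which is a sunoid by hypothesis. The key observation, already foreshadowed in the commented-out passage and in the remark preceding the statement, is that setting a proper letter $x$ to $\epsilon$ in a representative $\w$ of $\cw$ corresponds \emph{exactly} to deleting the edge set of the inscribed clique $X$ from $G$. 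This is immediate from the canonical labeling: a vertex receives label $x$ precisely when it lies in the inscribed clique $X$ and in no other, so relabelling all such vertices by $\epsilon$ turns them into rim-only vertices, which is what removal of $E(X)$ accomplishes.

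First I would make this correspondence precise. Let $G'$ be the sub-multisun of $G$ obtained by removing the edge sets of the cliques $X_1,\ldots,X_q$. I claim $G'$ is a genuine multisun and that its canonical labeling induces the linear word $\w\arrowvert_{x_1=\epsilon,\ldots,x_q=\epsilon}$, hence the cyclic word $\cw\arrowvert_{x_1=\epsilon,\ldots,x_q=\epsilon}$. Here one must handle the two cases of the projection definition separately. When $q<p-1$ there remain at least two inscribed cliques sharing the vertex $\xi$, so $\xi$ still carries label $\sigma$ and the relabelling is simply the substitution $x_i\mapsto\epsilon$; the word obtained is exactly $\w\arrowvert_{x_1=\epsilon,\ldots,x_q=\epsilon}$ by definition. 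When $q=p-1$ only one inscribed clique $Y$ survives, so $\xi$ now belongs to a \emph{unique} inscribed clique and its canonical label becomes $y$ rather than $\sigma$; this is precisely the extra clause in the definition of $\w\arrowvert_{x_1=\epsilon,\ldots,x_q=\epsilon}$ that resets $\sigma$ to $y$. Thus in both cases the word read off $G'$ is the asserted projection, and $\cw\arrowvert_{x_1=\epsilon,\ldots,x_q=\epsilon}=[\w_{G'}]$.

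Next I would verify that $G'$ really is a sunoid. Since $G'$ is itself a sub-multisun of the sunoid $G$, and a sunoid hereditarily satisfies the N-conditions, every sub-multisun of $G'$ is in turn a sub-multisun of $G$ (removing more inscribed cliques from $G'$ removes a larger set of inscribed cliques from $G$) and therefore satisfies the N-conditions. Hence $G'$ hereditarily satisfies the N-conditions, i.e.\ $G'\in {\cal S}^*$ is a sunoid. Consequently $\cw\arrowvert_{x_1=\epsilon,\ldots,x_q=\epsilon}=[\w_{G'}]$ is the $s$-word of a sunoid, that is, a sunword. Because the projection of $\cw$ on any admissible subset of proper letters is realized this way, the class $\mathbb{S}^*$ is closed under projection, as required.

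The step I expect to be the main obstacle is the bookkeeping at the boundary case $q=p-1$, where the distinguished symbol $\sigma$ has to be converted to the surviving proper letter $y$: one must check that the resulting word is a \emph{bona fide} $s$-word of the single-clique form \ref{com:swordi}) (up to taking its pattern) and that this matches the canonical labeling of a one-clique multisun, whose distinguished vertex $\xi$ is no longer special. Care is also needed to confirm that removing edge sets indeed yields a multisun in the sense of Definition~\ref{def:np-pair}—in particular that the maximal cliques of size $2$ of $G'$ still span the rim $C$, which holds because deleting inscribed-clique edges cannot destroy any rim edge (the rim is edge-disjoint from the inscribed cliques by property S\ref{com:nomonochromet2}). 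Once these routine verifications are in place, the closure of $\mathbb{S}^*$ under projection follows directly from the hereditary nature of sunoids.
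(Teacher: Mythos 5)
Your proposal is correct and follows essentially the same route as the paper: both translate projection into the removal of inscribed-clique edge sets from the standard multisun (via the correspondence of Proposition~\ref{prop:A} and \eqref{eq:mapinverse}), and then invoke the fact that a sub-multisun of a sunoid is again a sunoid, since its sub-multisuns are sub-multisuns of the original. The only difference is one of detail: the paper treats a single-letter projection tersely via the diagram $\cw\mapsto \mathcal{S}_{G_\cw}\mapsto\mathcal{S}_{G'}\mapsto\cw'$, whereas you spell out the multi-letter case and the boundary case $q=p-1$ (the $\sigma\mapsto y$ relabeling), which the paper leaves implicit.
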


\begin{proof}
Let $\cw$ be an $s$-word with at least two proper letters. Let $F_X$ be the edge-set of the inscribed clique $X$ of the standard multisun of 
$\cw$ and let $x$ be the label of $X$. Moreover, let $\cw'=\cw\arrowvert_{x=\epsilon}$ and let $G'$ be the sub-multisun obtained form $G_\cw$ by 
deleting the edges of $F_X$ (remark that $G'$ is not a standard multisun). Finally let $\psi$ and $\phi$ be the applications defined by 
\eqref{eq:mapinverse}. The following diagram proves now the statement
$$\xymatrix@1{
\cw  \ar@{|->}[rr]^{\phi} && {\cal S}_{G_\cw}\ar@{|->}[rr]^{G_\cw-F_X} && {\cal S}_{G'}\ar@{|->}[rr]^{\psi} && \cw'}.$$
Since $\psi$ takes the class ${\cal S}_{G'}$ of any multisun to the corresponding $s$-word, it takes the class of a sunoid into its sunword. 
Therefore $\cw'$ is a sunword, because $G'$ is a sunoid. 
\end{proof}
We are now in position to begin with the characterization of sunwords.
\begin{lemma}
\label{lemma:00}
Let $\cw$ be a sunword and $\sigma\prec a\prec b\cdots $ be the induced linear order on $\Sigma-\{\epsilon\}$. Then $\cw$ admits a representative 
$\w=\sigma a^{\lambda_1} \epsilon x_{i_2}^{\lambda_2}\epsilon x_{i_3}^{\lambda_3}\cdots\epsilon x_{i_{s}}^{\lambda_{s}}$ where $x_{i_2}\not=a$ and such that $\w$ satisfies the following \emph{parity conditions}: 
\begin{enumerate}[{\rm (i)}]
\item\label{com:001} $\lambda_1$ and $\lambda_s$ are both odd; 
\item\label{com:002} $\lambda_h$ is odd if and only if $x_{i_{h-1}}\ne x_{i_{h+1}}$, $h=2,\ldots,s-1$;
\end{enumerate}
\end{lemma}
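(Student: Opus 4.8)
The plan is to realize $\cw$ geometrically and then extract each parity from a single, well-chosen sub-multisun. Since $\cw$ is a sunword we may write $\cw=[\w_G]$ for a sunoid $G$ (take $G=G_\cw$, the standard multisun of $\cw$, which is a sunoid); by Definition~\ref{def:fsunoid} the graph $G$ and \emph{every} sub-multisun of $G$ satisfy the N-conditions. I will first fix a convenient representative via the order $\preceq$, and then read off conditions (i) and (ii) by evaluating one N-condition inside the sub-multisun in which the relevant inscribed clique has been deleted. Throughout, $\xi$ denotes the common vertex of the inscribed cliques (it lies in every one of them, by N-3), and $C$ is the rim.

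First I would pin down the representative. Because $a$ is $\preceq$-minimal among the proper letters, \eqref{eq:labelorder2} says that the rim-vertex lying in some inscribed clique that is closest to $\xi$ carries the label $a$; traversing $C$ from $\xi$ towards that vertex meets only $\epsilon$'s before it, and therefore produces a representative $\w=\sigma a^{\lambda_1}\epsilon x_{i_2}^{\lambda_2}\cdots\epsilon x_{i_s}^{\lambda_s}$ with $x_{i_1}=a$. That $x_{i_2}\neq a$ is then automatic: two adjacent pattern blocks of the same letter would be separated in the full word by a path joining two vertices of the same clique with no clique-vertex inside, i.e.\ an $A$-path, which has an even number of vertices by N-1 and hence contributes no $\epsilon$ to the pattern; so equal adjacent blocks would have merged. (This is exactly the condition $x_{i_h}\neq x_{i_{h+1}}$ of the $s$-word definition.)

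The technical device is a parity count for rim-paths in $G$. For a subpath $P$ of $C$ with endpoints $u,v$ whose interior contains precisely the clique-vertices $w_1,\dots,w_r$ of $G$, I claim $\#V(P)\equiv r+D\pmod 2$, where $D$ is the number of consecutive pairs of the sequence $u,w_1,\dots,w_r,v$ lying in \emph{different} inscribed cliques (with $\xi$ regarded as lying in every clique). Indeed, cutting $P$ at $w_1,\dots,w_r$ gives $\#V(P)=\sum(\text{gap length})-r$, and each gap is either an $A$-path or $A\xi$-path — even, by N-1 or N-4 — when its two endpoints share a clique, or an $AB$-path — odd, by N-5 — otherwise; hence $\sum(\text{gap length})\equiv D\pmod 2$. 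Most of the care in the write-up goes here, in checking that every path used below genuinely has clean interior and the stated endpoints.

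With the formula in hand, condition (ii) drops out by deleting the clique $X:=X_{i_h}$ (legitimate since $p\geq 2$). In $G-F_X$ the $\lambda_h$ vertices of block $h$ become ordinary rim-vertices, so the last vertex of block $h-1$ (in $Y:=X_{i_{h-1}}$) and the first vertex of block $h+1$ (in $Z:=X_{i_{h+1}}$) bound a single path $P$ with $\#V(P)\equiv\lambda_h\pmod2$ (the formula gives $D=2$, $r=\lambda_h$). But in $G-F_X$ this $P$ is a $YZ$-path, hence odd by N-5 when $Y\neq Z$, and a $Y$-path, hence even by N-1 when $Y=Z$; therefore $\lambda_h$ is odd iff $x_{i_{h-1}}\neq x_{i_{h+1}}$. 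Condition (i) is the same computation with one endpoint equal to $\xi$: deleting the clique $A$ (resp.\ $X_{i_s}$) turns the path from $\xi$ to the first vertex of block $2$ (resp.\ from the last vertex of block $s-1$ to $\xi$) into a $\xi$-path of the surviving clique, even by N-4; since now $D=1$ and $r=\lambda_1$ (resp.\ $\lambda_s$), evenness forces $\lambda_1$ (resp.\ $\lambda_s$) odd. I expect the main obstacle to be conceptual rather than computational, namely the recognition that each parity must be read inside the sub-multisun from which the pertinent clique has been removed, so that block $h$ collapses into the interior of a single $YZ$-path whose length parity is then dictated by N-5 (or N-1).
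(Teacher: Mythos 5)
Your proof is correct and takes essentially the same route as the paper's: where you delete the edge-set of the clique $X_{i_h}$ (resp.\ $A$, $X_{i_s}$) and invoke the N-conditions in the sub-multisun $G-F_{X_{i_h}}$, the paper projects the sunword onto $\Sigma-\{x_{i_h}\}$ and invokes the N-conditions in the standard multisun of the projected word, which is the same operation under the paper's own dictionary between projections and sub-multisuns (and the paper's appeal to the opposite word $-\w$ for $\lambda_s$ is just your symmetric computation at the other end). Your rim-path parity formula $\#V(P)\equiv r+D \pmod 2$ is a clean repackaging of the gap count that the paper performs implicitly when it identifies the projected interval $x_{i_{h-1}}^{\lambda_{h-1}}\epsilon^{\lambda_h+2}x_{i_{h+1}}^{\lambda_{h+1}}$ with (the image of) an $X_{i_{h-1}}X_{i_{h+1}}$-path and applies N-5, N-4 and N-1.
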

\begin{proof}
It is clear that a representative $\w$ with $\sigma$ followed by $a$ exists: $a$ is the label of the vertex $v^*$ of $G_\cw$ closest to $\sigma$ 
among those that are not labeled $\epsilon$; thus, by walking along the rim of $G_\cw$ starting from $\sigma$ toward $v^*$, one gets a word whose 
pattern represents $\cw$ and where $a$ follows $\sigma$. Let us prove that $\cw$ satisfies the parity conditions. Since $\cw$ is hereditary, 
the projection $\cw'$ of $\cw$ on $\Sigma-\{a\}$ is a sunword. Hence 
$$\cw'=[\sigma \epsilon^{\lambda_1} \epsilon x_{i_2}^{\lambda_2}\epsilon x_{i_3}^{\lambda_3}\cdots\epsilon x_{i_{s}}^{\lambda_{s}}]=
[\sigma \epsilon^{\lambda_1+1} x_{i_2}^{\lambda_2}\epsilon x_{i_3}^{\lambda_3}\cdots\epsilon x_{i_{s}}^{\lambda_{s}}]$$ 
and $\sigma\epsilon^{\lambda_1+1}x_{i_2}^{\lambda_2}$ is the image of $\xi X_{i_2}$-path in $G_{\cw'}$. Therefore
by N-\ref{com:p4}, $\lambda_1+1$ has to be even and $\lambda_1$ is odd. Since $-\w=
\sigma x_{i_{s}}^{\lambda_{s}}\epsilon\cdots x_{i_3}^{\lambda_3}\epsilon x_{i_2}^{\lambda_2} \epsilon a^{\lambda_1}$ (the opposite of $\w$) represents the 
same cyclic word as $\w$, it follows that by applying the same reasoning to the leftmost word in the chain above, one concludes that 
$\lambda_{s+1}$ is odd and this establishes (i).
\mybreak
To prove (ii), observe that the interval $x_{h-1}^{\lambda_{h-1}}\epsilon x_h^{\lambda_{h}}\epsilon x_{h+1}^{\lambda_{h+1}}$ of $\cw$ maps to the interval 
$x_{h-1}^{\lambda_{h-1}}\epsilon^{1+\lambda_{h}+1}x_{h+1}^{\lambda_{h+1}}$ of  
$\cw'=[\w\arrowvert _{x_{h}=\epsilon}]$. Since $\cw'$ is a sunword, then such an interval is the image of an $X_{h-1}X_{h+1}$-path in the sunoid 
$G_{\cw'}$. Therefore $\lambda_h+2$ is odd if and only if $x_{h-1}\ne x_{h+1}$ by N-\ref{com:p5}. But $\lambda_h+2$ has the same parity as 
$\lambda_h$ and (ii) follows. 
\end{proof}

\begin{lemma}
\label{lem:2lettere}
Let $\cw$ be a sunword on the alphabet $\Sigma$ and let $\sigma\prec a\prec b\prec c\cdots $ be the induced linear order on $\Sigma-\{\epsilon\}$.
\begin{itemize} 
\item[--] If $\Sigma=\{\epsilon,a\}$, then $\cw=[a^\lambda]$ for some positive odd integer $\lambda$. 
\item[--] If $\Sigma=\{\epsilon,\sigma,a,b\}$, then 
$$\cw=[\sigma a^{\lambda_1}\epsilon b^{\lambda_2}\epsilon a^{\lambda_3} \cdots b^{\lambda_{s-1}}\epsilon a^{\lambda_{s}}]$$
where $\lambda_1$ and $\lambda_s$ are both odd while all the other $\lambda_i$'s are even.
\item[--] If $\Sigma=\{\epsilon,\sigma,a,b,c\ldots\}$, then
\begin{equation}\label{eq:wordform}
\cw=[\sigma a^{\lambda_{1}}\epsilon\widetilde{\w}\epsilon a^{\lambda_{s}}]
\end{equation}
where $\lambda_1$ and $\lambda_s$ are both odd and $\widetilde{\w}\in (\Sigma-\{\sigma\})^\ast$ is of the form 
$b^{\lambda_{2}}\epsilon \uu \epsilon b^{\lambda_{s-1}}$. 
\end{itemize}
\end{lemma}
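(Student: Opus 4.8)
The plan is to handle the three cases in order of increasing difficulty, using the two-letter case as the engine for the general one. The case $\Sigma=\{\epsilon,a\}$ needs nothing new: a sunword is in particular an $s$-word, so Definition~\ref{def:swords}(\ref{com:swordi}) already forces $\cw=[a^\lambda]$ with $\lambda$ odd.

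For $\Sigma=\{\epsilon,\sigma,a,b\}$ I would start from the representative $\w=\sigma a^{\lambda_1}\epsilon x_{i_2}^{\lambda_2}\cdots\epsilon x_{i_s}^{\lambda_s}$ produced by Lemma~\ref{lemma:00}, with $x_{i_2}\neq a$ and $\lambda_1,\lambda_s$ odd. Since only two proper letters are available and consecutive blocks must carry different letters, the blocks strictly alternate $a,b,a,b,\dots$; as $s$ is odd by Proposition~\ref{prop:transit} and the word opens with $a$, it must close with $a$, which is exactly the displayed alternating shape. The interior parities are then automatic from Lemma~\ref{lemma:00}(\ref{com:002}): alternation gives $x_{i_{h-1}}=x_{i_{h+1}}$ for every interior $h$, so each interior $\lambda_h$ is even while $\lambda_1,\lambda_s$ stay odd.

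The case of three or more proper letters is the substantial one, and I expect its crux---likely the main obstacle---to be a \emph{nesting} statement around $\xi$: for any two proper letters with $x\prec x'$, the letter $x$ is met strictly before $x'$ when one leaves $\xi$ along the rim in \emph{either} direction. This is delicate precisely because the order $\preceq$ of \eqref{eq:labelorder2} only compares the globally closest vertices of the two cliques, so a naive reading leaves open that $x'$ could be reached first on one side. The device that breaks the symmetry is that the two-letter case just established is itself \emph{symmetric}: by Proposition~\ref{prop:B} the projection of $\cw$ onto $\{x,x'\}$ is again a sunword on two proper letters, and by that case its representative carries the \emph{same} (smaller) letter at both ends. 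Since projection only deletes the other cliques and then contracts, it preserves the circular arrangement of the surviving $x$- and $x'$-vertices; hence the letter $p$ sitting at both ends is the one met first on both sides of $\xi$. A one-line estimate then finishes: if $p$ precedes $q$ both clockwise and counter-clockwise, then $\min$ of the two $p$-distances is smaller than $\min$ of the two $q$-distances, so $p$ is globally closest and therefore $p=x$, which is the nesting claim.

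With nesting available the third bullet follows quickly. Applying it to every pair $\{a,x\}$ shows that $a$ precedes all other proper letters in both directions, so the block met first going the reverse way is again $a$: the word closes with $a^{\lambda_s}$, and $\lambda_1,\lambda_s$ are odd by Lemma~\ref{lemma:00}(\ref{com:001}). Chaining nesting over all pairs linearly orders the first-appearance positions on each side as $a,b,c,\dots$, so on either side the first non-$a$ block carries the letter $b$; this identifies the second block as $b^{\lambda_2}$ and the penultimate block as $b^{\lambda_{s-1}}$, which are genuinely distinct blocks because the presence of a third proper letter forces $s\ge 5$. Writing $\uu$ for what lies between them yields $\widetilde{\w}=b^{\lambda_2}\epsilon\uu\epsilon b^{\lambda_{s-1}}$, as required.
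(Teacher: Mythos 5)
Your proposal is correct, and its first two bullets coincide with the paper's proof: the one-letter case is immediate from the definition of $s$-words, and the two-letter case follows from the representative of Lemma~\ref{lemma:00}, forced alternation, the oddness of $s$ (Proposition~\ref{prop:transit}), and the parity conditions. Where you genuinely diverge is the third bullet. The paper pins down the last block by a single pairwise projection (if $\w$ ended in $x\neq a$, the projection onto $\{\epsilon,\sigma,a,x\}$ would start with $a$ and end with $x$, contradicting the two-letter form), and then identifies $x_{i_2}=x_{i_{s-1}}=b$ by projecting \emph{out} the letter $a$ and applying the form \eqref{eq:wordform} to the resulting sunword on $\Sigma-\{a\}$ --- an implicit induction on the alphabet size, which requires tracking how exponents merge under projection ($\rho_1\geq\lambda_2$, $\rho_r\geq\lambda_{s-1}$). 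You instead prove a global nesting statement --- for every pair $x\prec x'$ of proper letters, $x$ is met before $x'$ on \emph{both} sides of $\xi$ --- using pairwise projections, Proposition~\ref{prop:B}, the two-letter case, and a short min-distance computation reconciling the two-sided picture with the one-sided definition \eqref{eq:labelorder2}; the lemma then reads off directly (your check that $s\geq 5$, via Proposition~\ref{prop:transit} and the presence of a third letter, correctly guarantees that the two $b$-blocks are distinct). Notably, the paper states exactly this nesting property only \emph{after} the lemma, as an informal consequence ("this can be deduced by repeatedly projecting on a set of two proper letters"); you invert that logical order and make it the engine of the proof. What your route buys: no induction on the alphabet, no exponent bookkeeping under projection, and an explicit geometric explanation of why both ends of a sunword look alike. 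What the paper's route buys: given that projections and \eqref{eq:wordform} are reused immediately afterwards (e.g., in Theorem~\ref{thm:fund2}), its recursive step is shorter and sets up the machinery in the form later proofs consume.
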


\begin{proof}
If $\Sigma=\{\epsilon,a\}$, then each $s$-word is a sunword because no projection is possible. Therefore $\cw$ has the stated form by 
Definition~\ref{def:swords}. 
\mybreak
Suppose now that $\Sigma=\{\epsilon,\sigma,a,b\}$ with $\sigma\prec a\prec b$. 
Let $\w$ be a representative of $\cw$ as in Lemma \ref{lemma:00}. 
Accordingly, $\w$ is either of the following forms:
\begin{subequations}
\begin{equation}\label{eq:w1}
\sigma a^{\lambda_1}\epsilon b^{\lambda_2}\epsilon a^{\lambda_3}\cdots a^{\lambda_{s-1}}\epsilon b^{\lambda_{s}},
\end{equation}
\begin{equation}\label{eq:w2}
\sigma a^{\lambda_1}\epsilon b^{\lambda_2}\epsilon a^{\lambda_3} \cdots b^{\lambda_{s-1}}\epsilon a^{\lambda_{s}}
\end{equation}
\end{subequations} 
The number $s$ of exponents of the proper letters is odd by Proposition \ref{prop:transit}. Hence $\w$ cannot have the form \eqref{eq:w1} 
because $a$ and $b$ occur the same number of times implying that $s$ is even. Therefore $\w$ is of the form in \eqref{eq:w2} as stated. 
The fact that $\lambda_1$ and $\lambda_s$ are both odd while all the other exponents are even now follows directly from Lemma \ref{lemma:00}.
\mybreak
Suppose finally that $\Sigma=\{\epsilon,\sigma,a,b,c,\cdots\}$ with $\sigma\prec a\prec b\prec c\cdots$. 
As above, let $\w$ be a representative of $\cw$ as in Lemma \ref{lemma:00}.
Let us prove that $\w=\sigma a^{\lambda_1}\epsilon\widetilde{\w}\epsilon a^{\lambda_s}$ where $\widetilde{\w}\lhd \w$. 

We argue as follows. If $\w=\sigma a^{\lambda_1}\epsilon \w'\epsilon x^{\lambda_s}$, for some $x\in \Sigma-\{\epsilon,\sigma,a\}$, and $\w'\lhd \w$, 
then the projection of $\w$ on $\Sigma - \{\epsilon,\sigma,a,x\}$ would not be of the form \eqref{eq:w2}. 
Hence $\w=\sigma a^{\lambda_1}\epsilon x_{i_2}^{\lambda_2}\epsilon x_{i_3}^{\lambda_3}\epsilon \cdots\epsilon x_{i_{s-1}}^{\lambda_{s-1}}\epsilon a^{\lambda_s}$ 
with $x_{i_2}\not=a$, $x_{i_{s-1}}\not=a$ and $\widetilde{\w}=x_{i_2}^{\lambda_2}\epsilon x_{i_3}^{\lambda_3}\epsilon \cdots\epsilon x_{i_{s-1}}^{\lambda_{s-1}}$. 
This establishes \eqref{eq:wordform}. 

It remains to show that $x_{i_2}=x_{i_{s-1}}=b$. Set $\w'=\widetilde{\w}\arrowvert_{a=\epsilon}$. By projecting onto $\Sigma-\{a\}$, after recalling 
that $\lambda_1$ and $\lambda_s$ are both odd, one gets

$$\w\arrowvert_{a=\epsilon}=\big(\sigma a^{\lambda_1}\epsilon x_{i_2}^{\lambda_2}\epsilon x_{i_3}^{\lambda_3}\epsilon \cdots\epsilon x_{i_{s-1}}^{\lambda_{s-1}}\epsilon a^{\lambda_s}\big)\arrowvert_{a=\epsilon}\approx \sigma \widetilde{\w}\arrowvert_{a=\epsilon}\approx \sigma x_{i_2}^{\rho_1}\epsilon\ldots \epsilon x_{i_s}^{\rho_r}=\sigma\w'$$

for some integer $r$ with $r\leq s-2$ and some integers $\rho_1,\ldots,\rho_r$ such that $\rho_1\geq \lambda_2$ and $\rho_r\geq \lambda_{s-1}$. 
This because setting $a$ to $\epsilon$ shorten the pattern of $\widetilde{\w}$ and might affect the exponents of the proper letters. For instance, 
if $a$ is interlaced by the same proper letter $z$, then once $a$ is set to $\epsilon$ the exponent of $z$ might increase. 
Now $\cv=[\sigma\w']$ is a sunword being the projection of a sunword. Therefore, by \eqref{eq:wordform}, 
$\w'= b^{\rho_1}\epsilon \uu \epsilon b^{\rho_r}$ where $\uu\in (\Sigma-\{\sigma,a\})^\ast$ and the lemma follows. 
\end{proof}

Sunwords exhibit a strong symmetrical shape and the intervals like $a^{\lambda_2}\epsilon b^{\lambda_3}\epsilon\uu \epsilon b^{\lambda_{s-1}}\epsilon a^{\lambda_1}$ appear moderately palindrome. What is however remarkable, is that the linear order $\prec$ is made compatible with the 
action of the dihedral group on the rim, by forcing the vertices labeled by the proper letters of $\Sigma$ to be ranked by the distance 
from $\xi$ regardless of the orientation we choose. That is, if $\w$ is a representative of a sunword as in Lemma~\ref{lemma:00}, then if a proper letter 
$y$ occurs for the first time after $x$, then $y$ occurs for the first time after $x$ in $-\w$ as well. 
This can be deduced by repeatedly projecting on a set of two 
proper letters. However, sunwords have an even 
stronger structure, namely if $y$ occurs for the first time right after $x$ and $z$ occurs right after $y$ then $y$ cannot appear between 
letters different from $x$ and $z$, that is, sunwords behave with a sort of ``continuity''. 
This crucial property is formalized in the following. 


\begin{definition}
Let $\cw=[x_{i_0} x_{i_1}^{\lambda_1} \epsilon x_{i_2}^{\lambda_2}\epsilon x_{i_3}^{\lambda_3}\cdots\epsilon x_{i_{s}}^{\lambda_{s}}]$ be an $s$-word 
where $x_{i_0}=\sigma$. Let $\sigma\prec a\prec b\cdots $ be the linear order induced  by $\cw$ on $\Sigma-\{\epsilon\}$. Two letters $x$ and $y$ of $\Sigma-\{\epsilon\}$ form a \emph{cover} pair if $x\preceq y$ and there does not exist a proper letter $z\ne x,y$ such that $x\prec z\prec y$.

Two letters $x_{i_h}$ and $x_{i_{h+1}}$, $h\geq 0$ (sums are modulo $s+1$) are a \emph{jump on $x$ and $y$} in $\cw$ if $x$ 
and $y$ is not a cover pair, and either $x_{i_h}=x$, $x_{i_{h+1}}=y$ or $x_{i_h}=y$, $x_{i_{h+1}}=x$ . An $s$-word is \emph{jump-free} if it contains no jump for any $\{x,\,y\}\subseteq\Sigma-\{\epsilon\}$.
\end{definition}

We remark here explicitly that if $\preceq$ is the order induced by $\cw$ and if $\cw'$ is a projection of $\cw$ onto $\Sigma'\subseteq \Sigma$, 
then the linear order $\preceq'$ induced by $\cw'$ on $\Sigma'-\{\epsilon\}$ is precisely the restriction of $\preceq$ on $\Sigma'$. 
Therefore, with some abuse of language, we use the same symbol for the order $\preceq$ and its restrictions, because this does not cause 
confusion.

\begin{theorem}
\label{thm:fund2}
If $\cw$ is a sunword, then $\cw$ is jump-free. 
\end{theorem}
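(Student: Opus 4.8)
The plan is to reduce an arbitrary jump to one between the extreme proper letters of a three-letter sunword, and then to extract a parity contradiction from a single two-letter projection. First I would observe that $\sigma$ can never take part in a jump: by Lemma~\ref{lem:2lettere} every sunword with at least two proper letters has a representative $\sigma a^{\lambda_1}\epsilon\cdots\epsilon a^{\lambda_s}$, so in the standard representative starting at $\sigma$ the symbol $\sigma$ is flanked on both sides by the smallest proper letter $a$; since $\{\sigma,a\}$ is a cover pair, no jump involves $\sigma$. Hence a jump is a jump on a pair $\{x,y\}$ of proper letters with $x\prec y$ for which some proper letter $z$ satisfies $x\prec z\prec y$. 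I would then project $\cw$ onto $\{\epsilon,\sigma,x,z,y\}$, i.e. set to $\epsilon$ every proper letter other than $x,z,y$. By Proposition~\ref{prop:B} the result $\cw'$ is again a sunword; the order it induces is the restriction of $\preceq$, so $z$ still separates $x$ and $y$; and since $x,y$ were already adjacent in $\cw$ (separated only by $\epsilon$ in the pattern), deleting the other proper letters leaves them adjacent. Thus $\cw'$ is a three-letter sunword whose smallest and largest proper letters form a jump, and it suffices to show this is impossible.

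So I would fix a sunword on $\{\epsilon,\sigma,a,b,c\}$ with $\sigma\prec a\prec b\prec c$ and assume, for a contradiction, that its pattern has an $a$-block adjacent to a $c$-block. Write a representative $\w=\sigma x_{i_1}^{\lambda_1}\epsilon\cdots\epsilon x_{i_s}^{\lambda_s}$ as in Lemma~\ref{lem:2lettere}, with $x_{i_1}=x_{i_s}=a$, $x_{i_2}=x_{i_{s-1}}=b$, and $s$ odd by Proposition~\ref{prop:transit}. The main bookkeeping tool is a uniform reading of the parity rule of Lemma~\ref{lemma:00}: padding the block sequence by $x_{i_0}=x_{i_{s+1}}=\sigma$, one checks that $\lambda_h$ is odd if and only if $x_{i_{h-1}}\neq x_{i_{h+1}}$, now for every $h\in\{1,\dots,s\}$, the two endpoint cases being subsumed because $\sigma$ differs from each proper letter.

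I would then examine the projection $\cw\arrowvert_{b=\epsilon}$ onto $\{a,c\}$. On one hand it is a sunword, hence by Lemma~\ref{lem:2lettere} it equals an alternating word $[\sigma a^{\mu_1}\epsilon c^{\mu_2}\epsilon\cdots\epsilon a^{\mu_r}]$ whose two end ($a$-)blocks are odd and whose interior blocks are all even. On the other hand each exponent $\mu_j$ is, modulo $2$, the sum of those $\lambda_h$ whose blocks coalesce into it when the $b$-blocks are deleted, and each such $\lambda_h$ has parity $[x_{i_{h-1}}\neq x_{i_{h+1}}]$ by the rule above. The goal is to show that the assumed $a$-$c$ adjacency forces one of these merged exponents to violate the parity required of $\cw\arrowvert_{b=\epsilon}$.

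This last step is the crux and the step I expect to be the main obstacle, because the contradiction need not surface at the adjacency itself. Depending on whether the far neighbour of the offending $c$-block is an $a$ or a $b$, and on how the surrounding $a$-blocks merge across deleted $b$'s, the failing position may be an interior $c$-block of $\cw\arrowvert_{b=\epsilon}$ that inherits an odd exponent (a $c$ with one $a$- and one $b$-neighbour lands between two $a$'s), or an $a$-block whose merged exponent acquires the wrong parity, as already in the pattern $a\,c\,a$, where two odd $a$-contributions merge into an even first block. I would therefore localize at the leftmost $a$-$c$ adjacency, determine exactly which blocks of $\w$ coalesce under $b\mapsto\epsilon$ around it, and compute the parity of the resulting block from the uniform rule; a short case analysis on the letters bordering that merged run then yields in every case a block of $\cw\arrowvert_{b=\epsilon}$ whose parity contradicts Lemma~\ref{lemma:00}. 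This rules out the $a$-$c$ adjacency, settling the three-letter case and, via the reduction above, Theorem~\ref{thm:fund2}.
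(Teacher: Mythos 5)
Your proposal follows exactly the same route as the paper's proof: the same reduction (via Proposition~\ref{prop:B} and the fact that the induced order restricts under projection) to a sunword on $\{\epsilon,\sigma,a,b,c\}$ whose jump must be on $a$ and $c$, the same exclusion of $\sigma$ from jumps via Lemma~\ref{lem:2lettere}, and the same plan of projecting $b\mapsto\epsilon$ and hunting for a violation of the parity conditions of Lemma~\ref{lemma:00}. Your ``uniform'' restatement of the parity rule (padding the block sequence with $\sigma$ at both ends) is correct and is a tidy repackaging of conditions (i) and (ii). All of this part is sound.

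However, the proof is not complete: the step you yourself flag as the crux --- showing that the leftmost $a$-$c$ adjacency always produces, after the projection $b\mapsto\epsilon$, a block whose parity contradicts Lemma~\ref{lemma:00} --- is asserted, not carried out, and this case analysis is essentially the entire content of the paper's proof. Concretely, the paper splits on the orientation of the leftmost jump. For $\cdots\epsilon a^{\lambda_t}\epsilon c^{\lambda_{t+1}}$ it first uses minimality of $t$ to force $\lambda_t$ odd (otherwise the block preceding $a^{\lambda_t}$ would be a $c$-block, giving an earlier jump), so that block is $b^{\lambda_{t-1}}$; it then takes the longest postfix $\vv\epsilon b^{\lambda_{t-1}}\epsilon a^{\lambda_t}\epsilon c^{\lambda_{t+1}}$ with $\vv$ an interval on $\{\epsilon,a,b\}$ and splits again: if $\vv$ starts with $a$, minimality of $t$ forces $\vv$ to reach back to $\sigma$, and the merged first $a$-block of $\w\arrowvert_{b=\epsilon}$ has even exponent, contradicting condition (i); if $\vv$ starts with $b$, the merged $a$-block lands between two $c$-blocks with odd exponent, contradicting condition (ii). The reversed orientation $\cdots\epsilon c^{\lambda_t}\epsilon a^{\lambda_{t+1}}$ needs a parallel argument with $\vv$ an interval on $\{\epsilon,b,c\}$, where minimality and maximality force $\vv$ to start and end with $b$, making the merged $c$-block odd between two $a$-blocks. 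Your sketch correctly anticipates both failure modes (an even first block; an odd interior block between equal letters), and the leftmost-jump localization you propose is indeed the tool that makes the bookkeeping work, so the plan is viable; but as written, the verification that \emph{every} configuration falls into one of these failure modes is missing, and that verification --- with its delicate use of minimality and of the maximal postfix structure --- is precisely where the work of Theorem~\ref{thm:fund2} lies.
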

\begin{proof}
The statement is clearly true when $\cw$ has at most two proper letters because of the second part of Lemma~\ref{lem:2lettere}.  

We therefore assume that the alphabet $\Sigma$ has at least three proper labels. Moreover, since if $\cw$ contains a jump on $x$ and $z$, then the 
projection of $\cw$ on $\Sigma -\{\epsilon,\sigma,x,y,z\}$ contains a jump on $x$ and $z$ for each $y$ such that $x\prec y\prec z$, 
it is not restrictive to prove the theorem when $\Sigma$ is $\{\epsilon,\sigma,a,b,c\}$ with $a\prec b\prec c$. 
So, suppose by contradiction that $\cw$ has a jump. Such a jump cannot be on $\sigma$ and $x$ for $x\in \{b,c\}$ because $\sigma$ occurs only once 
and because of the third part of Lemma~\ref{lem:2lettere}. Therefore it must be jump on $a$ and $c$. Hence, if 
$\w$ is a representative of $\cw$ defined as in \eqref{eq:wordform}, then such a jump occurs on the right of $\sigma$. 
It follows that there exists an integer $t$ such that the prefix
$$\uu_0=\sigma a^{\lambda_1}\epsilon\cdots \epsilon x_{i_t}^{\lambda_{t}}$$
is jump-free while
$$\uu=\sigma a^{\lambda_1}\epsilon\cdots \epsilon x_{i_t}^{\lambda_{t}}\epsilon x_{i_{t+1}}^{\lambda_{t+1}}$$
is not jump-free.
Therefore, either $x_{i_t}=a$, $x_{i_{t+1}}=c$ and $\uu=\sigma a^{\lambda_1}\epsilon\cdots \epsilon a^{\lambda_{t}}\epsilon c^{\lambda_{t+1}}$ or $x_{i_t}=c$, $x_{i_{t+1}}=a$ and $\uu=\sigma a^{\lambda_1}\epsilon\cdots \epsilon c^{\lambda_{t}}\epsilon a^{\lambda_{t+1}}$.
\mybreak
{\bf  Case 1. $\uu=\sigma a^{\lambda_1}\epsilon\cdots \epsilon a^{\lambda_{t}}\epsilon c^{\lambda_{t+1}}$}. 

\noindent Since $a\prec b\prec c$, $t\geq 3$ by Lemma \ref{lem:2lettere}.
If $\lambda_t$ is even, then Lemma \ref{lemma:00} implies that $\uu$ contains the postfix 
$c^{\lambda_{t-1}}\epsilon a^{\lambda_{t}}\epsilon c^{\lambda_{t+1}}$, contradicting the choice of $t$.  
Hence, $\lambda_t$ is odd and $\uu$ contains the postfix $b^{\lambda_{t-1}}\epsilon a^{\lambda_{t}}\epsilon c^{\lambda_{t+1}}$. 
Let $\q=\vv\epsilon b^{\lambda_{t-1}}\epsilon a^{\lambda_{t}}\epsilon c^{\lambda_{t+1}}$
be the longest postfix of $\uu$ with the property that $\vv$ is an interval on $\{\epsilon,a,b\}$. 

If $\vv$ starts with $a$, then, by the maximality of $\q$, it is not difficult to see that $\uu$ is of the following form
\begin{equation}\label{eq:qui2}
\uu= \sigma \epsilon \dots \underbrace{a^{\lambda_1}\epsilon b^{\lambda_2}\cdots}_{\vv}\epsilon b^{\lambda_{t-1}}\epsilon a^{\lambda_{t}}\epsilon c^{\lambda_{t+1}}.
\end{equation}

Thus $\uu\arrowvert_{b=\epsilon}=\sigma\epsilon a^{\lambda'}\epsilon c^{\lambda_{t+1}}$,
where $\lambda'$ is the sum of the exponents $\lambda_i$, $i\geq 1$, of $a$ in \eqref{eq:qui2}. Among such exponents $\lambda_1$ and $\lambda_t$ are both odd: the former by \eqref{com:001} of Lemma~\ref{lemma:00}, while the latter by the assumption. All the other exponents are even 
by  \eqref{com:002} of Lemma~\ref{lemma:00}. It follows that $\lambda'$ is even. 
Since $\uu\arrowvert_{b=\epsilon}\lhd \w\arrowvert_{b=\epsilon}$,  
$\lambda'$ even implies that $\w\arrowvert_{b=\epsilon}$ contradicts ~\eqref{com:001} of Lemma~\ref{lemma:00}.  

It follows that $\vv$ starts with $b$. Hence, there is an integer $l> 3$ such that 

$$c^{\lambda_l}\epsilon\q=c^{\lambda_l}\epsilon\vv\epsilon b^{\lambda_{t-1}}\epsilon a^{\lambda_t}\epsilon c^{\lambda_{t+1}}\lhd\uu\lhd \w.$$

Since the interval $\vv\epsilon b^{\lambda_{t-1}}$ in the above formula is a word on $\{a,b\}$ starting and ending with $b$,  
the exponents of all the occurrences of $a$ in $\vv$ are even. 
But now the exponent $\lambda''$ of $a$ in 

$$c^{\lambda_l}\epsilon \q\arrowvert_{b=\epsilon}=c^{\lambda_l}\epsilon  a^{\lambda''}\epsilon c^{\lambda_{t+1}}$$

is the sum of $\lambda_t$ plus the exponents of $a$ in $\vv$. Since the latter are all even, $\lambda''$ has the same parity of $\lambda_t$ and therefore it is odd, $\lambda_t$ being odd. But this again contradicts \eqref{com:002} of Lemma~\ref{lemma:00}. 
\mybreak
{\bf Case 2. $\uu=\sigma a^{\lambda_1}\epsilon\cdots \epsilon c^{\lambda_{t}}\epsilon a^{\lambda_{t+1}}.$}

Let $$\q=\vv\epsilon c^{\lambda_t}\epsilon a^{\lambda_{t+1}}$$ 
be the longest postfix of $\uu$ with the property that $\vv$ is an interval on $\{b,c\}$. The minimality of $t$ and the maximality of $\q$ (and hence of $\vv$) imply that $\vv$ starts and ends with $b$. Hence $\lambda_t$ is odd and, for some $r\geq 3$, 
\begin{equation}\label{eq:case}
\z:=a^{\lambda_r}\epsilon \vv\epsilon c^{\lambda_t}\epsilon a^{\lambda_{t+1}}\lhd \uu\lhd \w.
\end{equation}
Now the exponent $\theta$ of $c$ in $\z\arrowvert_{b=\epsilon}$ is the sum of $\lambda_t$ and the exponents of $c$ in $\vv$ which are all even by 
\eqref{com:002} of Lemma~\ref{lemma:00}. Hence $a^{\lambda_r}\epsilon  c^\theta\epsilon a^{\lambda_t}\lhd \w\arrowvert_{b=\epsilon}$ with $\theta$ odd and this contradicts \eqref{com:002} of Lemma~\ref{lemma:00}. 

We conclude that $\cw$ is is jump-free.
\end{proof}

\begin{coro}\label{coro:parallel}
Let $\cw$ be a sunword and let $\w$ be a representative of $\cw$ as in \eqref{eq:wordform} of Lemma~\ref{lem:2lettere}. Let $z$ be the greatest 
element in $(\Sigma-\{\epsilon\},\prec)$. Then the exponents of $z$ in $\w$ are all even.
\end{coro}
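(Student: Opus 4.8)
The plan is to read the parity of each $z$-block straight off the jump-freeness of $\cw$ (Theorem~\ref{thm:fund2}) combined with the parity criterion of Lemma~\ref{lemma:00}. Write the chosen representative as $\w=\sigma a^{\lambda_1}\epsilon x_{i_2}^{\lambda_2}\epsilon\cdots\epsilon x_{i_{s-1}}^{\lambda_{s-1}}\epsilon a^{\lambda_s}$, so that $x_{i_1}=x_{i_s}=a$. Since $\w$ is taken in the form \eqref{eq:wordform}, its pattern contains the proper letter $b$, hence the alphabet has at least two proper Latin letters and the greatest element $z$ of $(\Sigma-\{\epsilon\},\prec)$ satisfies $z\succeq b\succ a$; in particular $z\ne a$. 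Consequently no block carrying $z$ can occupy position $1$ or $s$, and every occurrence of $z$ is a block $x_{i_h}^{\lambda_h}$ with $2\le h\le s-1$. For such an $h$ both neighbouring blocks $x_{i_{h-1}}^{\lambda_{h-1}}$ and $x_{i_{h+1}}^{\lambda_{h+1}}$ are present, and they carry proper letters other than $\sigma$ (the symbol $\sigma$ occurs only as the leading letter $x_{i_0}$).

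The key step is to pin down these two neighbours. Let $y$ be the immediate predecessor of $z$ in $\prec$, fix a block $x_{i_h}=z$ with $2\le h\le s-1$, and put $x:=x_{i_{h-1}}$. Consecutive blocks of a word in standard form carry distinct letters, so $x\ne z$, and since $z$ is the greatest proper letter we have $x\prec z$. If $x\ne y$, then necessarily $x\prec y$ (otherwise $y\prec x\prec z$ would contradict that $y$ immediately precedes $z$), whence $x\prec y\prec z$ shows that $\{x,z\}$ is not a cover pair; but then the adjacent blocks $x_{i_{h-1}},x_{i_h}$ form a jump on $x$ and $z$, contradicting Theorem~\ref{thm:fund2}. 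Hence $x_{i_{h-1}}=y$, and the identical argument applied to $x_{i_{h+1}}$ yields $x_{i_{h+1}}=y$.

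It now follows that $x_{i_{h-1}}=x_{i_{h+1}}=y$, so in particular $x_{i_{h-1}}=x_{i_{h+1}}$. By \eqref{com:002} of Lemma~\ref{lemma:00}, $\lambda_h$ is odd if and only if $x_{i_{h-1}}\ne x_{i_{h+1}}$; since the two neighbours coincide, $\lambda_h$ is even. As $h$ ranges over all positions carrying $z$, every exponent of $z$ in $\w$ is even, as claimed. I expect the only delicate point to be the bookkeeping at the ends of the word: one must check that no $z$-block lands in position $1$ or $s$ (which is exactly why $z\ne a$ and the form \eqref{eq:wordform} are used) and that neither neighbour of an interior $z$-block is the once-occurring symbol $\sigma$, so that the cover-pair/jump dichotomy applies to two genuine proper letters. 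Once these boundary cases are cleared, no computation is needed and the conclusion is a direct combination of jump-freeness and \eqref{com:002} of Lemma~\ref{lemma:00}.
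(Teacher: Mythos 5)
Your proof is correct and takes essentially the same approach as the paper: jump-freeness (Theorem~\ref{thm:fund2}) forces the letters before and after each power of $z$ to coincide, and then condition \eqref{com:002} of Lemma~\ref{lemma:00} yields evenness. You merely spell out what the paper leaves implicit, namely that both neighbours of an interior $z$-block must equal the immediate predecessor of $z$ in $\prec$, and that no $z$-block can occupy position $1$ or $s$ of the representative in \eqref{eq:wordform}.
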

\begin{proof}
Since a sunword is jump-free, it follows that the letters before and after each power of $z$ in $\w$, coincide. The result now follows 
from Lemma~\ref{lemma:00}.  
\end{proof}

\begin{coro}\label{coro:parallel1}
Let $\cw$ be a sunword and let $\uu$ be an interval of any representative $\w$ of $\cw$. Let $l(\uu)$ and $m(\uu)$ be the lowest and greatest 
elements of $(\spt(\uu)-\{\epsilon\},\prec)$. Then $\spt(\uu)$ contains all the letters of $\Sigma-\{\epsilon\}$ between $l(\uu)$ and $m(\uu)$.
\end{coro}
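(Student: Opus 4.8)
The plan is to derive this ``the support is an order-interval'' statement directly from the jump-freeness proved in Theorem~\ref{thm:fund2}. First I would strip $\uu$ down to its proper-letter skeleton: reading the interval $\uu$ from left to right and discarding every occurrence of $\epsilon$, let $y_1,y_2,\dots,y_k$ be the maximal runs of equal proper letters, so that $y_j\neq y_{j+1}$ for all $j$. Because $\uu$ is a contiguous factor of $\w$, the blocks $y_1,\dots,y_k$ are exactly a contiguous stretch of the proper-letter blocks of $\w$; in particular each consecutive pair $y_j,y_{j+1}$ is a pair of distinct proper letters that are cyclically consecutive in $\cw$ (only $\epsilon$'s separate them), and this block-adjacency pattern is the same in every representative of $\cw$. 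The set we must understand is then $\spt(\uu)-\{\epsilon\}=\{y_1,\dots,y_k\}$, and the lowest and greatest of these are $l(\uu)$ and $m(\uu)$ by definition.

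Next I would feed this into jump-freeness. Since $\cw$ is a sunword, Theorem~\ref{thm:fund2} guarantees it is jump-free, so any two cyclically consecutive distinct proper letters form a cover pair, i.e.\ they are neighbours in the chain $\sigma\prec a\prec b\prec\cdots$. Writing $r_j$ for the rank (position in that chain) of $y_j$, the cover-pair condition says precisely that $|r_{j+1}-r_j|=1$ for every $j=1,\dots,k-1$. Thus the ranks $r_1,\dots,r_k$ form a walk on the integers whose every step has size exactly one.

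The heart of the argument is then a discrete intermediate-value observation about such a unit-step walk. Set $r_{\min}=\min_j r_j$ and $r_{\max}=\max_j r_j$; these are the ranks of $l(\uu)$ and $m(\uu)$. Pick indices $p,q$ with $r_p=r_{\min}$ and $r_q=r_{\max}$, say $p\le q$ after possibly reversing the reading direction. Following the walk from position $p$ to position $q$, the value moves from $r_{\min}$ to $r_{\max}$ in $\pm1$ steps, hence attains every integer in between. Therefore $\{r_1,\dots,r_k\}$ is the whole integer interval $[r_{\min},r_{\max}]$, which is exactly the assertion that $\spt(\uu)$ contains every letter of $\Sigma-\{\epsilon\}$ lying $\prec$-between $l(\uu)$ and $m(\uu)$.

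I expect the intermediate-value step to be routine; the only genuinely delicate point is the bookkeeping in the first paragraph. One must make sure that the adjacencies $y_j,y_{j+1}$ actually used are adjacencies of the cyclic word $\cw$ (so that jump-freeness legitimately applies to them) and that the notion of ``consecutive distinct proper letters'' does not depend on the chosen representative $\w$. This is where I would invoke the invariance of the proper-letter adjacency pattern under pattern-equivalence and the dihedral action, and I would state it explicitly before applying Theorem~\ref{thm:fund2}. Note that Corollary~\ref{coro:parallel} is not needed for this argument; jump-freeness alone suffices.
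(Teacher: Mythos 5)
Your proposal is correct and takes essentially the same approach as the paper, whose entire proof is the one-line observation that if a letter between $l(\uu)$ and $m(\uu)$ were missing then $\uu$ would contain a jump. Your unit-step-walk and discrete intermediate-value argument is simply the direct (rather than contrapositive) elaboration of that same observation, resting on the same key input, namely jump-freeness from Theorem~\ref{thm:fund2}, and your closing remark that Corollary~\ref{coro:parallel} is not needed matches the paper as well.
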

\begin{proof}
Just observe that if a letter is missing then $\uu$ contains a jump.
\end{proof}

We have just proved that if $\cw$ is a sunword, then $\cw$ satisfies the parity conditions (Lemma~\ref{lemma:00}) and is 
jump-free (Theorem~\ref{thm:fund2}), i.e., these conditions are necessary for an $s$-word to be a sunword. 
We now show that such conditions are also sufficient and therefore characterize sunwords within $s$-words (Theorem~\ref{thm:fund3}). 
To this end we need some more intermediate results.

\begin{lemma}
\label{lem:suff1}
If $\cw$ is an $s$-word on at least two proper letters and $\cw$ satisfies the parity conditions, then 
$\cw\arrowvert_{y=\epsilon}$ is an $s$-word for each proper letter $y$.
\end{lemma}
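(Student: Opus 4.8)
The plan is to fix, as furnished by the parity-conditions hypothesis, a representative $\w=\sigma a^{\lambda_1}\epsilon x_{i_2}^{\lambda_2}\epsilon\cdots\epsilon x_{i_s}^{\lambda_s}$ of $\cw$ of the shape described in Lemma~\ref{lemma:00} (so that $a$ is the $\prec$-smallest proper letter, $x_{i_2}\ne a$, and both (i) and (ii) hold), to fix a proper letter $y\ne\sigma$, and simply to compute the pattern of $\w\arrowvert_{y=\epsilon}$. I would then check the two defining requirements of Definition~\ref{def:swords}: the alternating block shape (no two consecutive powers of the same proper letter, single $\epsilon$ separators, and $\sigma$ flanked directly by proper powers) and the evenness of the exponent sum of each surviving proper letter.

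For the block shape I would argue locally at each maximal power $y^{\lambda_h}$ of the removed letter. In the pattern each such power is flanked by single $\epsilon$'s, so setting it to $\epsilon$ turns the surrounding interval $x_{i_{h-1}}^{\lambda_{h-1}}\epsilon\, y^{\lambda_h}\epsilon\, x_{i_{h+1}}^{\lambda_{h+1}}$ into $x_{i_{h-1}}^{\lambda_{h-1}}\epsilon^{\lambda_h+2}x_{i_{h+1}}^{\lambda_{h+1}}$. Condition (ii) is exactly the right bookkeeping: when the flanking letters coincide it forces $\lambda_h$ even, so $\epsilon^{\lambda_h+2}$ contracts to $\phi$ and the two equal powers merge; when they differ it forces $\lambda_h$ odd, so $\epsilon^{\lambda_h+2}$ contracts to a single $\epsilon$ separating two distinct powers. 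Either way no two consecutive powers of the same letter are produced. The only exceptional positions are the powers $a^{\lambda_1}$ and $x_{i_s}^{\lambda_s}$ adjacent to $\sigma$: if $y$ is one of these, deleting it yields $\sigma\epsilon^{\lambda_1+1}\cdots$ (respectively $\cdots\epsilon^{\lambda_s+1}\sigma$), and condition (i) makes $\lambda_1,\lambda_s$ odd, so these $\epsilon$-runs have even length and contract completely, restoring the requirement that $\sigma$ be directly adjacent to proper powers.

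For the exponent sums the key remark is that relabelling $y$ as $\epsilon$ neither creates nor destroys any occurrence of a letter $x\ne y$, and that the merging of equal blocks only regroups the occurrences already present; hence the exponent sum of each surviving $x$ is unchanged and stays even. When at least two proper letters other than $\sigma$ survive (the case $q<p-1$, in which $\sigma$ is kept) this already yields an $s$-word of the second form of Definition~\ref{def:swords}. When only one proper letter $w$ survives ($q=p-1$), the projection also relabels $\sigma$ as $w$; here I would observe that all interior $\epsilon$-runs are even by (i)--(ii) as above, so every power of $w$ coalesces into a single block whose exponent is the even sum of the $w$-exponents plus one for the former $\sigma$, hence odd, which is precisely the form $[w^\lambda]$ with $\lambda$ odd.

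I expect the delicate point to be the treatment of the two powers adjacent to $\sigma$ and of the transition $q=p-1$. One must check that a cyclic word in which $\sigma$ has acquired an $\epsilon$-neighbour genuinely fails to be an $s$-word --- no rotation or reflection of the canonical pattern can remove that $\epsilon$ --- so that condition (i) is really needed and not merely cosmetic; and, in the $q=p-1$ regime, that after demoting $\sigma$ every $\epsilon$-run arising from an interior power of $y$ is even, so that the word collapses to a single power of $w$. Both points reduce entirely to the parity statements (i) and (ii), so once the local contraction analysis above is in place the remaining verification is routine.
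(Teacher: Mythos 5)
Your proof is correct and follows essentially the same route as the paper's: both arguments compute the pattern of $\w\arrowvert_{y=\epsilon}$ directly from the parity conditions, using condition (ii) to decide whether each deleted interior power $y^{\lambda_h}$ collapses to a single $\epsilon$ (distinct flanking letters, $\lambda_h$ odd) or merges its two flanking powers (equal flanking letters, $\lambda_h$ even), and condition (i) to make the $\epsilon$-runs adjacent to $\sigma$ vanish when $y$ occupies the first or last power. The only differences are organizational: the paper groups the same bookkeeping into maximal two-letter intervals $\x_j$ with $\spt(\x_j)\subseteq\{\epsilon,x,y\}$ rather than working occurrence by occurrence, and your explicit treatment of the projection that leaves a single proper letter (where $\sigma$ is relabeled and the word must collapse to $[w^\lambda]$ with $\lambda$ odd) is in fact more complete than the paper's proof, which passes over that case in silence.
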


\begin{proof}
Let $\cw=[\w]$ where $\w=\sigma x_{i_1}^{\lambda_1} \epsilon x_{i_2}^{\lambda_2}\epsilon x_{i_3}^{\lambda_3}\cdots\epsilon x_{i_{s}}^{\lambda_{s}}$. 
Clearly, $y$ occurs among the $x_{i_j}$'s and each time it occurs, it is located between two (not necessarily distinct) letters of
$\Sigma-\{\epsilon\}$. 
\mybreak
Suppose first that $x_{i_1}\not=y$ and $x_{i_s}\not=y$. In this case, $\w$ can be written as follows

\begin{equation}\label{eq:nuova1}
\sigma\x_1\epsilon\x_2\epsilon\x_3\epsilon\cdots\epsilon \x_r
\end{equation}

where $r\leq s$ and for $j=1,\ldots r$, $\x_j$ is an interval of $\w$ such that $\spt(\x_j)\subseteq \{\epsilon,x,y\}$ for some 
proper letter $x\not=y$ and $\x_j$ is defined as follows:
\begin{itemize}
\setlength{\itemsep}{-1mm}
\setlength{\topsep}{0pt}
\setlength{\partopsep}{0pt}
\item[--] 
\label{com:type1}  $\x_j=y^{\lambda_m}$ if $x_{i_{m-1}}$ and $x_{i_{m+1}}$ are
two distinct proper letters both different from $y$ for some $m\geq j$;
\item [--] 
\label{com:type2} $\x_j=x^{\lambda_{l}}\epsilon y^{\lambda_{l+1}}\epsilon x^{\lambda_{l+2}}\epsilon y^{\lambda_{l+3}}\cdots \epsilon x^{\lambda_{m}}$ 
if both  $x_{i_{l-2}}$ and $x_{i_{m+2}}$ are different from $x$, for some $l$ and $m$ such that $j\leq l\leq m$. Notice that if 
$y\not\in\spt(\x_j)$ then $l=m$.
\end{itemize}

The definition of $\x_j$ and the parity conditions imply that
\begin{enumerate}[a)]
\setlength{\itemsep}{-1mm}
\setlength{\topsep}{0pt}
\setlength{\partopsep}{0pt}
\item\label{com:n0} if $\x_j=y^{\lambda_m}$ then $\lambda_m$ is odd;
\item\label{com:void} if $\x_j=x^{\lambda_{l}}\epsilon y^{\lambda_{l+1}}\epsilon x^{\lambda_{l+2}}\epsilon y^{\lambda_{l+3}}\cdots \epsilon x^{\lambda_{m}}$, then
\begin{enumerate}[1.]
\item\label{com:ni} $\lambda_h$ is even for $h=l+1\ldots,m-1$;
\item\label{com:nii} if $x_{i_{l-1}}\ne y$ then $\lambda_{l}$ is odd;
 analogously if $x_{i_{m+1}}\ne y$ then $\lambda_{m}$ is odd.
\end{enumerate}
\end{enumerate}
Let $\w'=\w\arrowvert_{y=\epsilon}$ and, for $j=1,\ldots,r$, let ${\hat \x}_j=\x_j\arrowvert_{y=\epsilon}$. Hence 
$$\w'=\sigma{\hat \x}_1\epsilon{\hat \x}_2\epsilon{\hat \x}_3\epsilon\cdots\epsilon {\hat \x}_r.$$

Now, if $\x_j=y^{\lambda_m}$, then ${\hat \x}_j=\epsilon^{\theta_j}$ where $\theta_j=\lambda_m$ with $\lambda_m$ is odd and so,
$\epsilon {\hat \x}_j\epsilon=\epsilon^{\theta_j+2}$. 

By \eqref{com:ni} and \eqref{com:nii}, if 
$\x_j=x^{\lambda_{l}}\epsilon y^{\lambda_{l+1}}\epsilon x^{\lambda_{l+2}}\epsilon y^{\lambda_{l+3}}\cdots \epsilon x^{\lambda_{m}}$
where $x$ is the unique proper letter in $\spt(\x_j) - \{y\}$, then ${\hat \x}_j=x^{\theta_j}$ with 
$\theta_j=\lambda_{l}+\lambda_{l+2}+\cdots+\lambda_{m}$ and so, $\epsilon {\hat \x}_j\epsilon=\epsilon x^{\theta_j} \epsilon$.

Therefore, 
\begin{equation}\label{eq:nuova2}
\w'=\sigma {\hat x}_{i_1}^{\theta_1}\epsilon {\hat x_{i_2}}^{\theta_2}\ldots\epsilon {\hat x}_{i_q}^{\theta_q},
\end{equation} 
where $q$ equals the number of $\x_j$'s such that $\spt(\x_j)\not=\{y\}$, $\{i_1\ldots,i_q\}$ is the image of an injection from 
$\{1,\ldots,q\}$ into $\{1,2,\ldots,r\}$ and the ${\hat x}_{i_h}$'s are proper letters in $\Sigma-\{y\}$ such that 
${\hat x}_{i_h}\not={\hat x}_{i_{h+1}}$, $h=1\ldots q-1$.
\mybreak
Suppose now that $y=x_{i_1}$. In this case set $\x_1=y^{\lambda_1}$ and let $\x_j$ be defined as above for $j\geq 2$. Since 
${\hat \x}_1=\epsilon^{\lambda_1}$ and $\lambda_1$ is odd by the parity conditions, it follows that 
${\hat \x}_1\epsilon=\epsilon^{\lambda_1+1}$ is the empty word in the pattern of $\w'$ and therefore $\sigma$ is followed by 
the first occurrence of $\hat{\x}_2$.
\mybreak
Finally, if  $y=x_{i_s}$, then set $\x_r=y^{\lambda_s}$ and $\x_j$ be defined as above for $j\leq r-1$. Thus $\epsilon {\hat \x}_r=\epsilon^{\lambda_s+1}$ 
and, again by the parity conditions, $\epsilon {\hat \x}_r$ is replaced by the empty word in the pattern of $\w'$. 
Therefore $\w'$ ends with the last occurence of $\hat{\x}_{r-1}$. The lemma is now completely proved.
\end{proof}
The following result provides the base step for the inductive proof of Theorem~\ref{thm:fund3}. 

\begin{lemma}
\label{lem:suff3}
If $\cw$ is a jump-free $s$-word with exactly two proper letters that satisfies the parity conditions, then $\cw$ is a sunword.
\end{lemma}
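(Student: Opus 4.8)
The plan is to prove that the standard multisun $G_\cw$ is a sunoid; since $\cw=[\w_{G_\cw}]$, this is exactly the assertion that $\cw$ is a sunword. By definition a sunoid is a multisun that, together with each of its sub-multisuns, satisfies the N-conditions. As $\cw$ is an $s$-word we have $G_\cw\in\mathcal{S}$, so $G_\cw$ itself already satisfies the N-conditions by Proposition~\ref{prop:A}; hence the entire task reduces to controlling the proper sub-multisuns.

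Here the two-letter hypothesis is what makes the argument short. Writing $a\prec b$ for the two inscribed-clique labels, a proper sub-multisun is obtained by deleting the edge-set of some but not all inscribed cliques, and with only two cliques the only possibilities are to delete exactly one of them. Thus there are precisely two proper sub-multisuns, each a single-clique multisun, and by the correspondence between taking sub-multisuns and projecting (the diagram in the proof of Proposition~\ref{prop:B}) they are represented by $\cw\arrowvert_{b=\epsilon}$ and $\cw\arrowvert_{a=\epsilon}$. The first real step is to apply Lemma~\ref{lem:suff1}: since $\cw$ satisfies the parity conditions, both projections are again $s$-words.

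Each projection has a single proper letter, so it is of the form $[a^\nu]$ (respectively $[b^\mu]$) with odd exponent, and its standard multisun carries a single inscribed clique. Such a multisun lies in $\mathcal{S}$ and therefore satisfies the N-conditions, and it admits no proper sub-multisun at all, so nothing deeper must be checked at this level. Because the N-conditions are preserved under even subdivision and contraction, every member of the class $\mathcal{S}_{G_{\cw\arrowvert_{b=\epsilon}}}$---in particular the actual sub-multisun $G_\cw-F_B$ of $G_\cw$---satisfies the N-conditions, and symmetrically for the deletion of $A$. Assembling these observations, $G_\cw$ and all of its sub-multisuns satisfy the N-conditions, so $G_\cw$ is a sunoid and $\cw$ is a sunword.

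I expect essentially no hard analytic step here; the weight of the result is carried by Lemma~\ref{lem:suff1}, and the parity conditions enter precisely to feed that lemma, without which a projection could fail the ``even sum of exponents'' requirement and cease to be an $s$-word. The jump-free hypothesis is not really the obstacle for two labels: with $s$ odd by Proposition~\ref{prop:transit}, the forced alternation makes the canonical representative begin and end with $a$, so the unique non-cover pair $(\sigma,b)$ never occurs in consecutive position and the word is automatically jump-free; I would remark on this rather than invoke it. The one point I would treat carefully---the only place a gap could hide---is the bookkeeping that identifies each single-clique sub-multisun with the correct projection and confirms that, for two labels, the two single-clique sub-multisuns genuinely exhaust all proper sub-multisuns, so that the hereditary property is verified in full.
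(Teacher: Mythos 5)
Your proof is correct, and it rests on the same reduction as the paper's: with exactly two proper letters, the only proper sub-multisuns of $G_\cw$ are the two single-clique ones, so being a sunword amounts to checking that $\cw\arrowvert_{a=\epsilon}$ and $\cw\arrowvert_{b=\epsilon}$ are $s$-words. The difference lies in how that check is done. The paper does it by direct computation: from the alternating form $\w=\sigma a^{\lambda_1}\epsilon b^{\lambda_2}\epsilon\cdots\epsilon b^{\lambda_{s-1}}\epsilon a^{\lambda_s}$ it writes the projections explicitly as $\bigl[b^{1+\lambda_2+\lambda_4+\cdots+\lambda_{s-1}}\bigr]$ and $\bigl[a^{1+\lambda_1+\lambda_3+\cdots+\lambda_s}\bigr]$ and reads off from the parity conditions that both exponents are odd. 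You instead invoke Lemma~\ref{lem:suff1}, whose hypotheses (an $s$-word on at least two proper letters satisfying the parity conditions) are met verbatim; this is legitimate, since Lemma~\ref{lem:suff1} is proved independently of Lemma~\ref{lem:suff3} and no circularity arises in the induction of Theorem~\ref{thm:fund3}. What the paper's computation buys is visibility of the bookkeeping peculiar to the one-surviving-letter case, namely the replacement of $\sigma$ by the remaining letter, which is where the ``$+1$'' in the exponents comes from; the proof of Lemma~\ref{lem:suff1} is written with at least two surviving letters in mind and glosses over that replacement, so your route inherits a slight imprecision of the cited lemma, though it is sound given the lemma as stated. What your route buys, besides reusing an already-proved lemma, is a fully explicit unwinding of the definition of sunoid (the paper compresses this into ``it suffices to check''), and your side remark is also correct: jump-freeness is automatic for two proper letters, since the standard form forces alternation, $s$ is odd by Proposition~\ref{prop:transit}, and the order convention makes both cyclic neighbours of $\sigma$ equal to $a$, so the unique non-cover pair $(\sigma,b)$ never occurs consecutively; this matches the fact that Lemma~\ref{lem:2lettere} derives the same alternating shape without invoking jump-freeness, and shows the paper's appeal to it in this base case is redundant.
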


\begin{proof}
By jump-freeness, $\cw=[\w]$ where 
$\w=\sigma a^{\lambda_1}\epsilon b^{\lambda_2}\epsilon a^{\lambda_3}\epsilon b^{\lambda_4}\cdots \epsilon b^{\lambda_{s-1}}\epsilon a^{\lambda_s}$. 
By the parity conditions, $\lambda_1$ and $\lambda_s$ are both odd while all other $\lambda_i$'s are even. If 
suffices to check that $\cw\arrowvert_{a=\epsilon}$ and $\cw\arrowvert_{b=\epsilon}$ are both $s$-words (notice that there are no other possible 
projections).

By the definition of projection onto a set consisting of exactly one proper letter, it follows that 
\begin{itemize}
\item[]$\cw\arrowvert_{a=\epsilon}=\big[\w\arrowvert_{a=\epsilon}\big]=\big[b^{1+\lambda_2+\lambda_4+\cdots+\lambda_{s-1}}\big]$ and 
\item[]$\cw\arrowvert_{b=\epsilon}=\big[\w\arrowvert_{b=\epsilon}\big]=\big[a^{1+\lambda_1+\lambda_3+\cdots+\lambda_s}\big].$
\end{itemize}
Since $\cw$ satisfies the parity conditions, both $1+\lambda_2+\lambda_4+\cdots+\lambda_{s-1}$ and $1+\lambda_1+\lambda_3+\cdots+\lambda_s$ 
are odd and the thesis follows.
\end{proof}

\begin{theorem}
\label{thm:fund3} 
Let $\cw=[\w]$ be an $s$-word with at least two proper letters from $\Sigma$. If $\cw$ is jump-free and $\cw$ 
satisfies the parity conditions, then $\cw$ is a sunword.
\end{theorem}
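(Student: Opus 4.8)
The plan is to prove the converse direction of the characterization by induction on the number $p$ of proper letters in $\Sigma$, using Lemma~\ref{lem:suff3} as the base case ($p=2$) and Lemma~\ref{lem:suff1} as the inductive engine. The goal is to show that the standard multisun $G_\cw$ is a sunoid, i.e.\ that $G_\cw$ and all of its sub-multisuns satisfy the N-conditions. By Proposition~\ref{prop:A} and the definition of sunoid, it suffices to show that every projection $\cw\arrowvert_{x_1=\epsilon,\ldots,x_q=\epsilon}$ (with $q<p$) is again an $s$-word that satisfies the parity conditions and is jump-free: for then the projected words encode multisuns satisfying the N-conditions, and since projection realizes exactly the operation of taking sub-multisuns (Proposition~\ref{prop:B} and the discussion in Section~\ref{sec:sword}), hereditary satisfaction of the N-conditions follows.

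Concretely, I would first reduce to single projections. Since any projection onto a subset can be obtained by iterating single-letter projections $\cw\mapsto\cw\arrowvert_{y=\epsilon}$, it is enough to prove the following closure statement: \emph{if $\cw$ is a jump-free $s$-word on at least three proper letters satisfying the parity conditions, then for every proper letter $y$ the projection $\cw\arrowvert_{y=\epsilon}$ is again a jump-free $s$-word satisfying the parity conditions.} Granting this, the full theorem follows by induction: the base $p=2$ is Lemma~\ref{lem:suff3}; for $p\ge 3$, every single projection lands in the class of jump-free parity-satisfying $s$-words on $p-1$ proper letters, which by the inductive hypothesis are sunwords, so every sub-multisun of $G_\cw$ is a sunoid, whence $G_\cw$ itself satisfies the N-conditions hereditarily and is a sunoid; that is, $\cw$ is a sunword.

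The core work is thus verifying that a single projection preserves all three properties. That $\cw\arrowvert_{y=\epsilon}$ is an $s$-word is exactly Lemma~\ref{lem:suff1}, whose explicit description~\eqref{eq:nuova2} gives the new exponents $\theta_j$ as sums of old exponents; I would invoke it verbatim. For the parity conditions on $\w'$, I would read off the parities of the new exponents $\theta_j=\lambda_l+\lambda_{l+2}+\cdots+\lambda_m$ directly from the case analysis \eqref{com:n0}--\eqref{com:nii} inside the proof of Lemma~\ref{lem:suff1}: since the interior exponents $\lambda_{l+1},\ldots,\lambda_{m-1}$ are even by \eqref{com:ni} and only the boundary exponents $\lambda_l,\lambda_m$ can be odd, the parity of $\theta_j$ is governed by exactly how many of the two ends carry an odd exponent, which by \eqref{com:nii} is in turn controlled by whether the neighbouring proper letters equal $y$. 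A careful bookkeeping of these boundary parities against the required conditions \eqref{com:001}--\eqref{com:002} for $\w'$ finishes the parity part. For jump-freeness of $\w'$, I would use Corollary~\ref{coro:parallel1}: deleting $y$ cannot create a jump because the induced order $\preceq'$ on $\Sigma-\{y\}$ is the restriction of $\preceq$ (as remarked before Theorem~\ref{thm:fund2}), and a jump in $\w'$ would pull back to a jump or a gap in the support of an interval of $\w$, contradicting that $\cw$ is jump-free.

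The main obstacle I anticipate is the parity bookkeeping at the boundaries of the blocks $\x_j$, in particular the interaction of the three special cases where $y$ is the first or last proper letter (so that $\sigma$ absorbs a block into the empty word, shifting which letter follows $\sigma$) with the requirement that $\w'$ still have odd first and last exponents. Here one must track how the distinguished symbol $\sigma$ migrates and confirm that the new first/last proper exponents remain odd; this is where the evenness of $\lambda_1+1$ (equivalently the oddness of $\lambda_1$) and the symmetric statement for $\lambda_s$ are used decisively, together with the fact, from the proof of Lemma~\ref{lem:suff1}, that an odd $y$-block flanked by $\sigma$ collapses cleanly. Once these boundary cases are dispatched the interior cases are routine, and the induction closes.
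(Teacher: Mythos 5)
Your proposal is correct and follows essentially the same route as the paper's own proof: induction on the number of proper letters with Lemma~\ref{lem:suff3} as the base case, Lemma~\ref{lem:suff1} to get that a single projection $\cw\arrowvert_{y=\epsilon}$ is an $s$-word, and then verification that single projections preserve the parity conditions (via the block decomposition and boundary-exponent bookkeeping, exactly the paper's Claim and Cases~1--2) and jump-freeness (via the fact that the induced order restricts), closing the induction through iterated projections. If anything, your pullback argument for jump-freeness is more explicit than the paper's one-line assertion that jump-freeness is inherited by projection, though you should cite the underlying observation (jump-freeness forces interval supports to be order-convex) directly rather than Corollary~\ref{coro:parallel1}, which is stated for sunwords and would otherwise look circular.
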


\begin{proof}
The proof is by induction on $|\Sigma|=n$. If $n=2$ the thesis follows from Lemma~\ref{lem:suff3}. 
The inductive hypothesis is the following: every projection of an $s$-word on $n-1$ letters that is jump-free and satisfies the parity 
conditions is an $s$-word. We need to prove the same statement for an $s$-word $\cw$ on $n$ letters. 
First observe that the projections of $\cw$ consist of $\cw|_{y=\epsilon}$ for each $y\in \Sigma$ and the projections of $\cw|_{y=\epsilon}$ for each $y\in \Sigma$.
Now $\cw|_{y=\epsilon}$ is an $s$-word for each $y\in \Sigma$, by Lemma~\ref{lem:suff1}. 
So, if we prove that $\cw|_{y=\epsilon}$ is jump-free and satisfies the parity conditions for each proper letter $y$, then the
thesis will follow by inductive hypothesis because $\cw|_{y=\epsilon}$ is an $s$-word on $n-1$ letters. 
Since the property of being jump-free is inherited by projection, to prove the theorem it suffices to show that $\cw|_{y=\epsilon}$ satisfies 
the parity conditions for each proper letter $y$. 
\mybreak
Since $\cw$ is jump-free and satisfies the parity conditions, $\w$ has the form

\begin{center}
$\w=\sigma a^{\lambda_1}\epsilon b^{\lambda_2}\epsilon \vv \epsilon b^{\lambda_{s-1}}\epsilon a^{\lambda_s}$ 
with $\lambda_1$ and $\lambda_s$ odd, and $\vv$ is jump-free. 
\end{center}

Let $\cw'=\cw|_{y=\epsilon}$ and $\w'=\w|_{y=\epsilon}$. Hence $\cw'=[\pi(\w')]$.
Let $\sigma\prec \hat a\prec \hat b\prec \hat c\prec \cdots$ be the restriction of $\prec$ to $\Sigma-\{\epsilon,y\}$. 
By Lemma \ref{lem:suff1}, $\cw'$ is a jump-free $s$-word on $\Sigma-\{y\}$ and so, 
$\cw'=[\sigma \hat a^{\theta_1}\epsilon \hat b^{\theta_2}\epsilon{\hat \vv}\epsilon \hat b^{\theta_{r-1}}\epsilon \hat a^{\theta_r}],$ for some $r\leq s$ 
We now show that $\cw'$ satisfies the parity conditions.
\mybreak
Let us prove first condition \eqref{com:001} in Lemma~\ref{lemma:00}.
\mybreak
\textbf{Claim.}
{\it$\theta_1$ and $\theta_r$ are both odd.}
\mybreak
Write $\w$ as  $\sigma\x\epsilon\uu\epsilon\z$, where $\x$ and $\z$ are intervals of $\w$ having 
$\spt(\x)=\spt(\z)=\{\epsilon,a,b\}$ and maximal with this property. Hence $\x$ begins with $a$ and ends with $b$, 
$\z$ begins with $b$ and ends with $a$ and $\uu$ begins and ends with $c$, because $\cw$ is jump-free. 
So, for some $h$ and $k$ such that $2\leq h\leq k\leq s-1$, one has 
$\x=a^{\lambda_1}\epsilon b^{\lambda_2}\epsilon\cdots a^{\lambda_{h-1}}\epsilon b^{\lambda_h}$ and
$\z=b^{\lambda_k}\epsilon a^{\lambda_{k+1}}\epsilon\cdots b^{\lambda_{s-1}}\epsilon a^{\lambda_s}$. 

\noindent
By the parity conditions 
\begin{equation}
\label{eq:paritycons}
\text{$\lambda_1$, $\lambda_h$, $\lambda_k$ and $\lambda_s$ are odd while all the other exponents in $\x$ and $\y$ are even.}
\end{equation}
Denote by $\x'$, $\z'$ and $\uu'$ the projection of $\x$, $\z$ and $\uu$, respectively, onto $\Sigma-\{y\}$.
Now $\w'=\sigma\x'\epsilon\uu'\epsilon\z'$. 
If $c\preceq y$, then $\hat a=a$, $\hat b=b$. As a consequence, $\x'=\x$ and $\z'=\z$ and so, $\theta_1=\lambda_1$ and 
$\theta_r=\lambda_s$. Therefore $\theta_1$ and $\theta_r$ are both odd. 

Suppose that $y=b$. In this case, $\hat a=a$, $\hat c=c$, $\x'=a^{\lambda_1+\lambda_3+\cdots+\lambda_{h-1}}\epsilon^{\lambda_h}$ and $\z'=\epsilon^{\lambda_k}a^{\lambda_{k+1}+\lambda_{k+3}+\cdots+\lambda_s}$. Hence
$$\pi(\w')=\sigma a^{\lambda_1+\lambda_3+\cdots+\lambda_{h-1}}\epsilon\uu'\epsilon a^{\lambda_{k+1}+\lambda_{k+3}+\cdots+\lambda_s}=\sigma a^{\theta_1}\epsilon\uu'\epsilon a^{\theta_r}$$
with $\theta_1$ and $\theta_r$ both odd because of \eqref{eq:paritycons}.
\mybreak
Analogously, if $y=a$, then $\hat a=b$, $\hat b=c$, $\x'=\epsilon^{\lambda_1+1}b^{\lambda_2+\lambda_4+\cdots+\lambda_{h}}$ and $\z'=b^{\lambda_k+\lambda_{k+2}+\cdots\lambda_{s-1}}\epsilon^{\lambda_s+1}$. Hence 
$$\pi(\w')=\sigma b^{\lambda_2+\lambda_4+\cdots+\lambda_h}\epsilon\uu'\epsilon b^{\lambda_k+\lambda_{k+2}+\cdots+\lambda_{s-1}}=\sigma \hat a^{\theta_1}\epsilon\uu'\epsilon 
\hat a^{\theta_r}$$
with $\theta_1$ and $\theta_r$ both odd because of \eqref{eq:paritycons}. This shows that the parity condition (i) in Lemma~\ref{lemma:00}
is satisfied by $\cw'$. 
\hfill ({\it End of Claim}) 
\mybreak
Finally, we prove that $\cw'$ satisfies the parity condition \eqref{com:002} in Lemma~\ref{lemma:00}. To this end consider an interval of 
$\pi(\w')$ of the form  
$t^{\theta_{i-1}}\epsilon u^{\theta_i}\epsilon v^{\theta_{i+1}}$ where $t$, $u$ and $v$ are proper letters in $\Sigma-\{y\}$. As in the proof 
of Lemma \ref{lem:suff1}, we know that such an interval is the image (under projection and under $\pi$) of the interval 
$\z=\tw\kk_1 \uu \kk_2\vv\lhd \w$ where
\begin{itemize}
\setlength{\itemsep}{-1mm}
\setlength{\topsep}{0pt}
\setlength{\partopsep}{0pt}
\item[--] $\tw,\uu,\vv\lhd \w$ are such that $t\in \spt(\tw)\subseteq \{\epsilon,t,y\}$, $u\in \spt(\uu)\subseteq \{\epsilon,u,y\}$, $v\in \spt(\vv)\subseteq \{\epsilon,v,y\}$; hence $\tw\arrowvert_{y=\epsilon}=t^{\theta_{i-1}}$, $\uu\arrowvert_{y=\epsilon}=u^{\theta_i}$ and $\vv\arrowvert_{y=\epsilon}=v^{\theta_{i+1}}$; 
\item[--] $\kk_1,\kk_2\lhd \w$ are such that $\epsilon \in \spt(\kk_h)\subseteq \{\epsilon,y\}$ for $h=1,2$; hence either $\kk_h=\epsilon$ or $\kk_h=\epsilon y^\lambda\epsilon$ for some $\lambda$ odd. In any case $\kk_h\arrowvert_{y=\epsilon}=\epsilon$ for $h=1,2$. 
\end{itemize}

We now show that the parity conditions hold knowing that such conditions hold for $\z$. Let us first rule out the easiest case, namely when 
$y\not\in \spt({\uu})$. In this case, $\uu=u^{\lambda_l}$ for some $l\geq i$ and $\kk_h=\epsilon$, $h=1,2$. Hence $\theta_i=\lambda_l$. 
Since $\w$ satisfies the parity conditions, $\lambda_l$ (and hence $\theta_i$) is even or odd according to whether or not  $u$ is interlaced 
by the same letter. But $u$ is interlaced by the 
same letter or not according to whether or not $\spt(\tw)-\{\epsilon,y\}=\spt(\vv)-\{\epsilon,y\}$. Hence if $t=v$, then $\theta_i$ 
is even otherwise $\theta_i$ is odd and the parity conditions hold in this case. 

We therefore assume throughout the rest of the proof that $y\in\spt(\uu)$. 

\smallskip\noindent
{\bf Case~1.} {\it If $t=v$, then $\theta_i$ is even. } 

       Since $t=v$ and $\cw$ is jump-free, only two cases may occur, namely either $t\preceq y \preceq u$ or 
       $u\preceq y \preceq t$. Possibly by replacing $\w$ by $-\w$ we may suppose without loss of generality that 
       $t\preceq y \preceq u$. Hence $\kk_1=\epsilon y^{\lambda_l}\epsilon$ for some $l$ and $\kk_2=\epsilon y^{\lambda_m}\epsilon$ for some 
       $m\geq l+2$ and $\z=\tw \epsilon y^{\lambda_l}\epsilon \uu\epsilon y^{\lambda_m} \epsilon \tw$. We distinguish two cases.

\begin{itemize}
\setlength{\itemsep}{-1mm}
\setlength{\topsep}{0pt}
\setlength{\partopsep}{0pt}
\item[-] If $\uu=u^{\lambda_{l+1}}$, then $\theta_i=\lambda_{l+1}$, $m=l+2$ and since $\cw$ satisfies the parity conditions $\lambda_{l+1}$ (and, consequently, $\theta_i$) is even $u$ being interlaced by the same letter proper letter $y$. So we are done in this case.
\item[-] If $\uu=u^{\lambda_{l+1}}\epsilon y^{\lambda_l}\epsilon\cdots \epsilon u^{\lambda_{m-1}}$, then $\theta_i$ has the same parity as $\lambda_{l+1}+\lambda_{m-1}$ (by \eqref{eq:nuova1} and \eqref{com:nii}).
Moreover, since both the first and the last occurence of $u$ in $\uu$ are interlaced by $y$, $\lambda_{l+1}+\lambda_{m-1}$ (and, consequently, $\theta_i$) is even. 
\end{itemize}

\smallskip\noindent
{\bf Case~2.} {\it If  $t\not=v$, then $\theta_i$ is odd.} 

As before, under the assumptions of the theorem, only two cases may occur, namely $t\preceq y \preceq u\preceq v$ or $t\preceq u\preceq y \preceq v$ and, by the same reasons given above, we may assume without loss of generality, that $t\preceq y \preceq u\preceq v$. Hence $\kk_1=\epsilon y^{\lambda_l}\epsilon$ for some $l$ and $\kk_2=\epsilon$. Accordingly, $\z=\tw \epsilon y^{\lambda_l}\epsilon \uu\epsilon \vv$. Again, we distinguish two cases.
\begin{itemize}
\setlength{\itemsep}{-1mm}
\setlength{\topsep}{0pt}
\setlength{\partopsep}{0pt}
\item[-]If $\uu=u^{\lambda_{l+1}}$, then $\theta_i=\lambda_{l+1}$; since $\cw$ satisfies the parity conditions it follows that $\lambda_{l+1}$ (and, consequently, $\theta_i$) is odd $u$ being interlaced by two different proper letters, namely, $y$ and the first occurrence of $\vv$ which is $v$. So we are done in this case.
\item[-] If $\uu=u^{\lambda_{l+1}}\epsilon y^{\lambda_l}\epsilon\cdots \epsilon u^{\lambda_{m-1}}$, then $\theta_i$ has the same parity as $\lambda_{l+1}+\lambda_{m-1}$, where $\lambda_{m-1}$ is the exponent of the last occurrence of $u$ in $\uu$ (still by \eqref{eq:nuova1} and \eqref{com:nii}). By the parity conditions for $\cw$, $\lambda_{l+1}$ is even because it is the exponent of the first occurrence of $u$ in $\uu$ and such an occurrence is interlaced by the same letter $y$ (by \eqref{com:nii}). On the other hand $\lambda_{m-1}$ is odd because the last occurrence of $u$ in $\uu$ is interlaced by $y$ and the first occurrence of $\vv$ which is $v$. Therefore $\lambda_{l+1}+\lambda_{m-1}$ (and, consequently, $\theta_i$) is odd.
\end{itemize}
The proof is thus completed.
\end{proof}

The results of Theorem~\ref{thm:fund2} and Theorem~\ref{thm:fund3} can be equivalently and simultaneously stated as follows.
\begin{theorem}\label{thm:fund0}
An $s$-word on an alphabet $\Sigma$ is a sunword if and only if it is jump-free and satisfies the parity conditions.
\end{theorem}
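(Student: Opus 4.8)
The plan is to derive Theorem~\ref{thm:fund0} directly from the three results already in hand, namely Lemma~\ref{lemma:00}, Theorem~\ref{thm:fund2} and Theorem~\ref{thm:fund3}, after first separating off the degenerate case in which $\Sigma$ has a single proper letter. Write $p$ for the number of proper letters of $\Sigma$; the argument splits according to whether $p=1$ or $p\geq 2$.

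First I would dispose of the case $p=1$, which lies outside the reach of the three cited results since each is phrased for the normal form containing $\sigma$. Here $\Sigma=\{\epsilon,a\}$ and, by Definition~\ref{def:swords}, $\cw=[a^\lambda]$ with $\lambda$ odd. Such a word contains no pair of distinct proper letters, so it is vacuously jump-free, and it satisfies the parity conditions trivially (nothing is imposed beyond the oddness of $\lambda$, which already holds). On the other side of the equivalence, the standard multisun $G_\cw$ attached to $[a^\lambda]$ by Proposition~\ref{prop:A} has exactly one inscribed clique and therefore admits no proper sub-multisun; consequently hereditary satisfaction of the N-conditions collapses to the N-conditions themselves, which hold because $[a^\lambda]$ is an $s$-word. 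Thus $[a^\lambda]$ is always a sunword, and both sides of the biconditional are true, so the equivalence holds in this case.

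Assuming now $p\geq 2$, I would argue the two implications separately. For the forward direction, suppose $\cw$ is a sunword: Lemma~\ref{lemma:00} then produces a representative of $\cw$ satisfying the parity conditions, while Theorem~\ref{thm:fund2} shows $\cw$ is jump-free; together these are precisely the two properties asserted. For the reverse direction, suppose $\cw$ is an $s$-word that is jump-free and satisfies the parity conditions; since $p\geq 2$, Theorem~\ref{thm:fund3} applies verbatim and concludes that $\cw$ is a sunword. Combining the two cases and the two implications yields the stated equivalence.

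The proof is therefore essentially a matter of bookkeeping, and the only point requiring genuine care—rather than a direct citation—is the case $p=1$, where one must verify by hand that a single inscribed clique has no proper sub-multisuns so that the unadorned N-conditions already certify sunoid-ness. I do not anticipate any real obstacle beyond keeping the normal-form hypotheses of Lemma~\ref{lemma:00} and Theorem~\ref{thm:fund3} aligned with the convention that jump-freeness and the parity conditions hold vacuously when there are fewer than two proper letters.
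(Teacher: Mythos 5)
Your proof is correct and takes essentially the same route as the paper, which offers no separate argument for this theorem but presents it precisely as the simultaneous restatement of Theorem~\ref{thm:fund2} (sunword $\Rightarrow$ jump-free), Lemma~\ref{lemma:00} (sunword $\Rightarrow$ parity conditions) and Theorem~\ref{thm:fund3} (the converse when there are at least two proper letters). Your explicit handling of the single-proper-letter case is sound bookkeeping that the paper treats only implicitly, inside the proof of Lemma~\ref{lem:2lettere}, where it is noted that every $s$-word on $\{\epsilon,a\}$ is automatically a sunword because no projection is possible.
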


\section{Minimally unbalanced diamond-free graphs}
\label{sec:mnb}

\mybreak
We finally come back to the problem of characterizing HOH-free multisuns and 
we show that sunoids are precisely the HOH-free multisuns. 
In other words we prove that the multisuns that hereditarily satisfy the N-conditions are hereditarily odd hole free. 
This result joint with Theorem~\ref{thm:fund1} provides a complete characterization of HOH-multisuns.

\begin{remark}\label{rem:1sunoid}
The $N$-conditions are sufficient to characterize HOH-free multisuns with exactly one inscribed clique.
In fact, let $G$ consist of an odd cycle $C$ with an odd clique $A$ inscribed so that each $A$-path has even order. 
Then $G$ does not contain odd holes and since the only submultisun of $G$ is $G$ itself it follows trivially that $G$ is HOH-free. 
\end{remark}

\begin{lemma}
\label{lemma:sunhole}
Let $\Gamma$ be a hole in a sunoid $G$. Then there exists a sub-multisun $G_0$ of $G$ with $q$ inscribed cliques $X_1\ldots,X_q$, a permutation $i_1,i_2\ldots,i_q$ of $\{1,\ldots,q\}$ and $q$ vertex-disjoint paths $P_1,\ldots,P_q$ such that
\begin{itemize}
\item[--] $P_h$ is a $X_{i_i}X_{i_{h+1}}$-path of $G_0$ for $h=1,\ldots,q-1$ and $P_q$ is $X_1X_q$-path of $G_0$. 
\item[--] $V(\Gamma)=V(P_1)\cup V(P_2)\cup\cdots\cup V(P_{q-1})\cup V(P_q)$.
\end{itemize}
\end{lemma}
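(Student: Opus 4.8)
The plan is to read off the decomposition directly from the way the induced cycle $\Gamma$ alternates between the rim and the inscribed cliques. Since $G$ is a multisun, every maximal clique of $G$ is either an edge of the rim $C$ or an inscribed clique, and by property S1 these are pairwise edge-disjoint; hence each edge of $\Gamma$ lies in a \emph{unique} maximal clique and is accordingly classified as a \emph{rim edge} or a \emph{clique edge}. The first step is to control how $\Gamma$ meets a fixed inscribed clique $X$. Because $\Gamma$ is induced, any two vertices of $V(\Gamma)\cap X$ are adjacent in $G$ and therefore consecutive on $\Gamma$; since at most two vertices of a cycle of length $\geq 4$ can be pairwise consecutive, $\#(V(\Gamma)\cap X)\leq 2$, with equality exactly when these two vertices span a single edge of $\Gamma$ lying in $X$. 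Thus each inscribed clique contributes at most one clique edge to $\Gamma$, and the cliques that do contribute meet $\Gamma$ in precisely the two endpoints of that edge.

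Next I would let $X_1,\dots,X_q$ be the (pairwise distinct) inscribed cliques that contribute an edge to $\Gamma$, and take $G_0$ to be the sub-multisun of $G$ obtained by deleting the edge sets of all \emph{other} inscribed cliques, so that the inscribed cliques of $G_0$ are exactly $X_1,\dots,X_q$. One checks $q\geq 1$: a hole using only rim edges would have to coincide with $C$, which is impossible since any inscribed clique furnishes a chord of $C$; hence $G_0$ is a genuine sub-multisun (we never delete all inscribed cliques). Deleting from the cycle $\Gamma$ its $q$ clique edges leaves exactly $q$ pairwise vertex-disjoint arcs $P_1,\dots,P_q$, each consisting of rim edges only (hence a subpath of the rim $C$ of $G_0$) and together covering all of $V(\Gamma)$; this immediately gives the required partition $V(\Gamma)=\bigcup_h V(P_h)$.

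It then remains to verify that each arc is an $X_{i_h}X_{i_{h+1}}$-path of $G_0$ for the cyclic order $i_1,\dots,i_q$ in which $\Gamma$ visits the contributing cliques. Traversing $\Gamma$ in a fixed orientation, the clique edges appear in a cyclic order; the arc between the edge in $X_{i_h}$ and the next clique edge, in $X_{i_{h+1}}$, has one endpoint in $X_{i_h}$ and the other in $X_{i_{h+1}}$, which gives the endpoint condition (the closing arc joining $X_{i_q}$ back to $X_{i_1}$ is the path $P_q$ of the statement). For the inner vertices, suppose an inner vertex $v$ of some arc lay in a contributing clique $X_j$; then $v\in V(\Gamma)\cap X_j$, which by the first step equals the two endpoints of the clique edge in $X_j$, so $v$ would be incident in $\Gamma$ to a clique edge---contradicting that both $\Gamma$-edges at an inner vertex are rim edges. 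Hence no inner vertex lies in any inscribed clique of $G_0$, and each $P_h$ is an $AB$-path as claimed.

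The one delicate point, and the place I expect to have to argue carefully, is the degenerate arc: an arc can collapse to a single vertex precisely when two consecutive clique edges share an endpoint, and such a shared endpoint lies in two distinct inscribed cliques, so by N-\ref{com:p3} it must be the common vertex $\xi$. Here $\xi$ simultaneously plays the role of an endpoint in both neighbouring cliques (since $\xi$ belongs to every inscribed clique), so the single-vertex arc still qualifies as a (length-$0$) $X_{i_h}X_{i_{h+1}}$-path; and because $\xi$ then carries an incident clique edge, it is never an inner vertex of an arc, which is exactly what keeps the inner-vertex argument of the previous paragraph valid. This is essentially the only role played by the sunoid hypothesis, namely through N-\ref{com:p3} guaranteeing that the inscribed cliques pairwise intersect exactly in $\xi$.
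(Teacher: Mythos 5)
Your proof is correct, and its skeleton is the same as the paper's: each inscribed clique meets the hole $\Gamma$ in at most one edge, the cliques actually met become the inscribed cliques of the sub-multisun $G_0$, and deleting those edges from $\Gamma$ leaves the desired system of $AB$-paths. Two differences are worth recording. The first is cosmetic: you build $G_0$ directly from the cliques that contribute an edge to $\Gamma$, while the paper takes a sub-multisun containing $\Gamma$ with the fewest inscribed cliques and uses minimality to show that each of its cliques contributes an edge; the resulting $G_0$ is the same. The second difference is substantive, and it is a point in your favour. The paper's proof asserts that ``$\sigma$ is not used by $\Gamma$'' (because $\xi$ is adjacent to all properly labeled vertices), so its clique edges are pairwise nonadjacent and degenerate arcs never arise. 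That assertion is false: take the sunoid whose sunword is $[\sigma a\epsilon b^{2}\epsilon a]$, i.e.\ the rim $p_1p_2\cdots p_{13}$ with inscribed triangles $A=\{p_1,p_4,p_{11}\}$ and $B=\{p_1,p_6,p_9\}$; then $p_1p_4p_5p_6$ is a hole through $\xi=p_1$, using the clique edges $p_1p_4\in A$ and $p_6p_1\in B$. The paper's adjacency argument only excludes holes through $\xi$ that contain a clique edge not incident to $\xi$; when both edges of $\Gamma$ at $\xi$ are clique edges, as above, no contradiction arises. So the degenerate one-vertex arc you isolate in your last paragraph genuinely occurs, and the lemma holds only under your reading of the definition, namely that the single vertex $\xi$, lying in both $X_{i_h}$ and $X_{i_{h+1}}$, counts as a one-vertex $X_{i_h}X_{i_{h+1}}$-path. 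Happily, this reading is also what keeps the downstream use of the lemma sound: a one-vertex path has odd order, consistent with N-\ref{com:p5}, so the parity count in Theorem~\ref{thm:summarizing} showing that holes in sunoids are even still goes through (in the example, $4=3+1$). In short, your argument, with the degenerate case made explicit, is the correct proof, and it repairs a gap in the paper's own.
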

\begin{proof}
Let $G$ be a sunoid with rim $G$ and let $\Gamma$ be a hole in $G$. Among all those sub-multisuns of $G$ containing $\Gamma$ let $G_0$ 
have the least possible number of inscribed cliques. Let $X_1\ldots X_q$ be the inscribed cliques of $G_0$ and let $F_1,\ldots, F_q$ be 
the corresponding edge-sets. If $q=1$ then we are done, because the vertex-sets of holes in sunoids with exactly one inscribed cliques $A$, 
are precisely the $A$-paths. We therefore suppose that $q\geq 2$. Since $G$ is a sunoid, then $G_0$ is such. In particular $G_0$ satisfies 
N-\ref{com:p3}. Let $\xi$ be the common vertex to $X_1,\ldots,X_q$ and let $f_0$ be the canonical labeling of $G_0$ where $\Sigma_0=f_0(V(C))=\{\epsilon,\sigma,x_1\ldots,x_q\}$. We say that $\Gamma$ uses label $x$ if some vertex of $\Gamma$ is labeled $x$. We notice that:
\begin{itemize}
\item[--] $\#(E(\Gamma)\cap F_i)\leq 1$ because $\Gamma$ is a hole and $X_i$ is a clique;
\item[--] $E(\Gamma)\cap F_i=\emptyset$ for no $i\in \{1,\ldots,q\}$ because if $E(\Gamma)\cap F_i $ were empty for some $i\in \{1,\ldots,q\}$, then $E(\Gamma)\subseteq G_0-F_i$ and $\Gamma$ would be an odd hole in a sub-multi-sun with less inscribed cliques than $G_0$ contradicting the choice of $G_0$.
\item[--] $\sigma$ is not used by $\Gamma$, because, since $\xi$ is adjacent to all properly labeled vertices, $\Gamma$ could not be a hole.
\end{itemize}
We therefore conclude that $\#(E(\Gamma)\cap F_i)=1$ for each $i\in \{1,\ldots,q\}$. 
Moreover, since $(X_i-\xi)\cap (X_j-\xi)=\emptyset$ if $i\not=j$, $i,\,j\in  \{1,\ldots,q\}$, it follows that the edges of $\Gamma$ 
contained in the inscribed cliques of $G_0$ are not adjacent and, therefore, each proper label of $\Sigma_0$ is used exactly twice. 
Thus all vertices of $U=V(\Gamma)-f_0^{-1}(\Sigma_0-\{\epsilon,\sigma\})$ are labeled $\epsilon$. 

Let $\{e_i\}=E(\Gamma)\cap F_i$ for each $i\in \{1,\ldots,q\}$ and orient $\Gamma$. The orientation of $\Gamma$ induces a permutation $i_1 i_2 i_3\ldots i_q$ of 
$\{1,2\ldots,q\}$ as follows: $i_k=j$ if $e_j$ is $k$-th edge not on the rim we meet traveling along $\Gamma$ according to the prescribed 
orientation. 
\mybreak

The crucial observation is that $U$ induces $q$ vertex disjoint path each one being the interior of some $X_{i_h}X_{i_{h+1}}$-path of $G_0$. 
To see this, consider the graph $\Gamma-\{e_1,\ldots,e_q\}$. Since $\Gamma$ is a hole and the $e_i$'s are pairwise nonadjacent, 
$\Gamma-\{e_1,\ldots,e_q\}$ consists of $q$-paths pairing the endpoints of the $e_i$'s. 
Since vertices with the same proper label are paired in $\Gamma$ by the $e_i$'s, the pairing of the endpoints of the $e_i$'s in 
$\Gamma-\{e_1,\ldots,e_q\}$ consists of $q$ paths pairing vertices with different proper labels. It follows that such $q$ paths 
are paths connecting vertices in different inscribed cliques of $G_0$ and such that each vertex of their interior is labeled $\epsilon$. 
By the definition of $AB$-paths, each of the latter paths is precisely  an $X_{i_h}X_{i_{h+1}}$-path for some $h=1,\ldots, q$ (sums are 
taken mod $q$). Such paths are moreover, vertex disjoint, and the union of their vertex sets is $V(\Gamma)$ as required.
\end{proof}

\comment{

\begin{remark}
It is worth noticing that if $G$ is a sunoid and $\hat{G}$ is the standard multisun of $\mathcal{S}_G$, then the holes of $\hat{G}$ have order that is a multiple of 3. Indeed the order of any $XY$-path is 3. Hence the vertices of any hole $\Gamma$ of $\hat{G}$ are $3q$, by the lemma.
\end{remark}
}

In the language of $s$-words, the statement of Lemma \ref{lemma:sunhole} is equivalent to the following.

\begin{lemma}
\label{lemma:sunhole1}
Let $G$, $G_0$, $\Sigma_0$ and $\Gamma$ as in Lemma \ref{lemma:sunhole} and let $q$ be the number of inscribed cliques of $G_0$. Let $\cw_0$ 
be the $s$-word of $G_0$ and let $\w_0$ be any representative of $\cw_0$ coinciding with is own pattern. Then, either $q=1$ and $\Gamma$ 
has even order, or $\Gamma$ is represented by a sequence of pairs 

$$(x_{i_1},x_{i_2})(x_{i_2},x_{i_3})\ldots (x_{i_{q-1}},x_{i_q})(x_{i_q},x_{i_1})$$

such that, for each $h=1\ldots, q$, either $x_{i_h}^{\eta_h}\epsilon x_{i_{h+1}}^{\eta_{h+1}}$ or $x_{i_{h+1}}^{\eta_{h+1}}\epsilon x_{i_h}^{\eta_h}$ is an interval of $\w_0$ for some integers 
$\eta_1\ldots\eta_q$ (sums are taken modulo $q$). 
\end{lemma}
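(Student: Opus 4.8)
The plan is to obtain Lemma~\ref{lemma:sunhole1} by reading the geometric conclusion of Lemma~\ref{lemma:sunhole} through the canonical labeling, turning each of the combinatorial paths $P_h$ into an interval of the pattern word $\w_0$. First I would dispose of the case $q=1$: here $G_0$ has a single inscribed clique $A$, and, as recorded inside the proof of Lemma~\ref{lemma:sunhole}, the vertex set of a hole in a sunoid with one inscribed clique is exactly that of an $A$-path. Since $G$ is a sunoid, so is $G_0$, whence $G_0$ satisfies N-\ref{com:p1}; thus this $A$-path has even order and $\Gamma$ has even order, which is the first alternative of the statement.

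For $q\geq 2$ I would first pin down the shape of $\w_0$. Because $\w_0$ coincides with its own pattern it contains no factor $\epsilon\epsilon$, so its proper letters occur in maximal runs $x^{\eta}$, and any two consecutive runs are separated by exactly one $\epsilon$. Indeed, two rim-vertices bearing distinct proper labels and joined by an all-$\epsilon$ subpath form an $X_iX_j$-path with $i\neq j$, which has odd order by N-\ref{com:p5}; the odd number of interior $\epsilon$'s collapses to a single $\epsilon$ in the pattern, whereas an even $\epsilon$-block (as would occur between two vertices of one clique, N-\ref{com:p1}) collapses to nothing and merges the run. Hence reading $\w_0$ is the same as recording the cyclic sequence of inscribed-clique vertices met along the rim, one $\epsilon$ marking each transition between different cliques.

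Next I would translate the paths. By Lemma~\ref{lemma:sunhole} each $P_h$ is an $X_{i_h}X_{i_{h+1}}$-path with $i_h\neq i_{h+1}$ (indices mod $q$), whose interior is entirely $\epsilon$-labelled. Let $v_h\in X_{i_h}$ and $w_h\in X_{i_{h+1}}$ be its endpoints. Travelling along $P_h$ from $v_h$ the first rim-vertex is already $\epsilon$, so $v_h$ is the outermost vertex of its run facing $w_h$; likewise $w_h$ is the outermost vertex of its run facing $v_h$, and between the two runs lies the single $\epsilon$ produced by the odd interior of $P_h$. Consequently the run of $v_h$ and the run of $w_h$ are adjacent in $\w_0$, so that, writing $\eta_h$ and $\eta_{h+1}$ for their exponents, one of $x_{i_h}^{\eta_h}\epsilon x_{i_{h+1}}^{\eta_{h+1}}$ or $x_{i_{h+1}}^{\eta_{h+1}}\epsilon x_{i_h}^{\eta_h}$ is an interval of $\w_0$, the alternative depending only on whether the fixed orientation of the rim traverses $P_h$ from $v_h$ to $w_h$ or the other way. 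Letting $h$ run from $1$ to $q$ and recalling that consecutive paths are linked through the clique-edges $e_{i_{h+1}}\subseteq X_{i_{h+1}}$ produces the cyclic sequence of pairs $(x_{i_1},x_{i_2})(x_{i_2},x_{i_3})\cdots(x_{i_q},x_{i_1})$ claimed by the lemma.

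The delicate point, and the one I would watch most carefully, is the passage from ``all-$\epsilon$ interior'' to ``adjacent runs of the pattern'': it rests entirely on the parity bookkeeping of the N-conditions applied to $G_0$ (N-\ref{com:p5} for the single surviving $\epsilon$ between distinct cliques, N-\ref{com:p1} for the disappearance of $\epsilon$-blocks inside a run), which is legitimate since $G_0$ is a sunoid. I would also record one caveat about the notation: the exponent attached to a fixed proper letter at two successive pairs need not agree, because the two endpoints of a clique-edge $e_{i_{h+1}}$ may fall in two different runs of $x_{i_{h+1}}$; accordingly the integers $\eta_1,\dots,\eta_q$ are to be understood as the run-exponents relevant to each individual crossing rather than as quantities intrinsic to the letters.
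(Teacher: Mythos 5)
Your proposal is correct and follows essentially the same route as the paper's proof: take the vertex-disjoint $X_{i_h}X_{i_{h+1}}$-paths supplied by Lemma~\ref{lemma:sunhole}, traverse $\Gamma$ in a fixed orientation, and read each path through the canonical labeling as an interval $x_{i_h}^{\eta_h}\epsilon\, x_{i_{h+1}}^{\eta_{h+1}}$ (or its reverse) of $\w_0$. In fact you spell out more carefully than the paper does why the all-$\epsilon$ interiors collapse correctly in the pattern (via N-\ref{com:p1} and N-\ref{com:p5}), and your closing caveat about the run-exponents at the two endpoints of a clique-edge is a legitimate reading of the paper's slightly abusive $\eta$-notation.
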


\begin{proof}
For $i=1,\ldots,q$ let $u_i$ and $v_i$ be the end-vertices of $e_i$ where $e_i$ is defined in the proof of Lemma \ref{lemma:sunhole},
i.e., $\{e_i\}=E(\Gamma)\cap F_i$. 
Recall that $u_i$ and $v_i$ are both labeled $x_i$, $i=1\ldots,q$. The orientation of $\Gamma$ induces an orientation of each $e_i$, 
$i=1,\ldots,q$. Suppose without loss of generality that the orientation chosen for $\Gamma$ orients $e_1$ from $u_1$ to $v_1$. 
Hence, by Lemma \ref{lemma:sunhole}, for $h=1\ldots,q$ the path $P_h$, occurring as one of the vertex disjoint paths factorizing $V(G_0)$, 
is an $X_{i_h}X_{i_{h+1}}$-path of $G_0$. 
Hence, traversing $\Gamma$ according to the prescribed orientation and starting from $v_1$, we encounter 
the $X_{i_h}X_{i_{h+1}}$-path $P_h$ of $G_0$ and we represent it as a pair $(x_{i_h},x_{i_{h+1}})$;
then, each time we reach a vertex $u_{i_{h+1}}$ labeled $x_{i_{h+1}}$ we walk through $e_{h+1}$ and reach 
the vertex $v_{i_{h+1}}$ of $\Gamma$ that is labeled $x_{i_{h+1}}$, for $h=1,\ldots,q-2$. 

Since $P_h$ is an $X_{i_h}X_{i_{h+1}}$-path in $G_0$, it follows that either $x_{i_h}^{\eta_h}\epsilon x_{i_{h+1}}^{\eta_{h+1}}$ or $x_{i_{h+1}}^{\eta_{h+1}}\epsilon x_{i_h}^{\eta_h}$ is an interval of $\w_0$, for some $\eta_1\ldots\eta_q$,  
\end{proof}

We are now ready to state and prove the characterizations of HOH-free multisuns and, equivalently, of minimally unbalanced diamond-free graphs. 
The reader should recall that such graphs are those graphs whose clique-matrix is minimally non-balanced in the sense that, removing any 
vertex from the graph the resulting clique-matrix is balanced while the clique-matrix of the graph is not balanced. 

\begin{theorem}\label{thm:summarizing}
Let $G$ be a multisun. The following statements are equivalent
\begin{enumerate}[{\rm (1)}]
\item\label{com:summ1} $G$ is a sunoid.
\item\label{com:summ2} The $s$-word of $G$ is a sunword.   
\item\label{com:summ3} $G$ is HOH-free.
\end{enumerate}
\end{theorem}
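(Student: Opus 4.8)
The plan is to establish the three-way equivalence by proving $(\ref{com:summ1})\Leftrightarrow(\ref{com:summ2})$ separately and then closing the loop with $(\ref{com:summ1})\Leftrightarrow(\ref{com:summ3})$. The implication $(\ref{com:summ1})\Rightarrow(\ref{com:summ2})$ is immediate from the definition of a sunword as the $s$-word of a sunoid. For $(\ref{com:summ2})\Rightarrow(\ref{com:summ1})$ I would argue that if $[\w_G]$ is a sunword, then $[\w_G]=[\w_H]$ for some sunoid $H$; by Proposition~\ref{prop:A} and the bijection $\psi,\phi$ of Remark~\ref{rem:A} this forces $\mathcal{S}_G=\mathcal{S}_H$, i.e.\ $G\bowtie H$. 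Since the N-conditions, and hence the property of being a sunoid, are invariant along a $\bowtie$-class (even subdivision/contraction preserves them and carries sub-multisuns to sub-multisuns), $G$ is a sunoid as well. The genuinely substantive part of the theorem is the equivalence of being a sunoid with being HOH-free.

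A preliminary observation drives both remaining implications: a sub-multisun of a sub-multisun is again a sub-multisun (one merely deletes more inscribed edge-sets), so both \emph{sunoid} and \emph{HOH-free} are hereditary under passing to sub-multisuns. Consequently, to prove $(\ref{com:summ3})\Rightarrow(\ref{com:summ1})$ it suffices to note that every sub-multisun $G'$ of an HOH-free $G$ is again HOH-free and hence satisfies the N-conditions by Theorem~\ref{thm:fund1}; as this holds for all sub-multisuns, $G$ hereditarily satisfies the N-conditions and is a sunoid. Symmetrically, every sub-multisun of a sunoid is a sunoid, so to prove $(\ref{com:summ1})\Rightarrow(\ref{com:summ3})$ it is enough to establish the single-level statement \emph{a sunoid contains no odd hole}; applying it to each sub-multisun then yields hereditary odd-hole-freeness.

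To prove that a sunoid $G$ has no odd hole, I would suppose $\Gamma$ is a hole and invoke Lemma~\ref{lemma:sunhole} to obtain a sub-multisun $G_0$ with $q$ inscribed cliques and $q$ vertex-disjoint paths $P_1,\dots,P_q$, each an $X_{i_h}X_{i_{h+1}}$-path, with $V(\Gamma)=\bigcup_h V(P_h)$; vertex-disjointness gives $\#V(\Gamma)=\sum_h \#V(P_h)$. The cases $q\le 2$ are quickly settled: for $q=1$ the single path is an $A$-path, of even order by N-\ref{com:p1} (this is Remark~\ref{rem:1sunoid}); for $q=2$ each $P_h$ is an $AB$-path between two distinct cliques, hence of odd order by N-\ref{com:p5}, so $\#V(\Gamma)$ is the sum of two odd numbers and is even. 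In both cases $\Gamma$ is even.

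The main obstacle is ruling out $q\ge 3$. Here I would use Lemma~\ref{lemma:sunhole1}, which records that for each $h$ one of $x_{i_h}^{\eta_h}\epsilon x_{i_{h+1}}^{\eta_{h+1}}$, $x_{i_{h+1}}^{\eta_{h+1}}\epsilon x_{i_h}^{\eta_h}$ is an interval of a pattern-representative $\w_0$ of the $s$-word $\cw_0$ of $G_0$. Since $G_0$ is a sub-multisun of the sunoid $G$ it is itself a sunoid, so $\cw_0$ is a sunword and therefore jump-free by Theorem~\ref{thm:fund0}. Jump-freeness forces every such interval of two consecutive distinct proper powers to be a \emph{cover pair} in the induced linear order $\prec$. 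Thus the cyclic sequence $x_{i_1},x_{i_2},\dots,x_{i_q},x_{i_1}$, being a permutation of all $q$ proper letters of $G_0$, is a Hamiltonian cycle in the cover graph of $(\{x_1,\dots,x_q\},\prec)$. But this poset is a subchain of a total order whose least element $\sigma$ (unused by $\Gamma$) lies below every $x_i$, so its cover graph is simply the path $x_1-x_2-\cdots-x_q$; a path on $q\ge 3$ vertices admits no Hamiltonian cycle. This contradiction shows $q\le 2$, completing the proof that a sunoid is odd-hole-free and hence, by the heredity observation, HOH-free.
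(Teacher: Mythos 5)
Your proof is correct and takes essentially the same approach as the paper: (1)$\Leftrightarrow$(2) and (3)$\Rightarrow$(1) are dispatched exactly as in the text, and your argument that a sunoid has no odd hole is the paper's own, combining Lemmas~\ref{lemma:sunhole} and~\ref{lemma:sunhole1} with jump-freeness (Theorem~\ref{thm:fund0}) to force $q\leq 2$, then using N-\ref{com:p1} and N-\ref{com:p5} for parity. Your ``no Hamiltonian cycle in the cover graph of a chain'' formulation for ruling out $q\geq 3$ is just a repackaging of the paper's observation that some consecutive pair in the cyclic permutation must fail to be a cover pair, producing a jump.
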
  
\begin{proof}
We have already shown that \eqref{com:summ1}$\Leftrightarrow$\eqref{com:summ2}. Since HOH-free multisuns satisfies the N-conditions 
hereditarily, we have that \eqref{com:summ3}$\Rightarrow$\eqref{com:summ1}. It remains to show that \eqref{com:summ2}$\Rightarrow$
\eqref{com:summ3}. To this end let $\cw$ be the $s$-word of $G$. Hence $G$ is in $\mathcal{S}_{G_\cw}$. To prove that $G_\cw$ is HOH-free it suffices 
to prove that $G_{\cw'}$ id odd hole free for any projection $\cw'$ of $\cw$. On the other hand, since any projection of a sunword is a 
sunword it suffices to prove that $G_\cw$ itself is odd hole-free or, equivalently, that each hole $\Gamma$ of $G_\cw$ has even order. 
Since $\cw$ is a sunword, then Lemma \ref{lemma:sunhole1} applies and if $G_\cw$ contains a hole $\Gamma$, then $\Gamma$ is contained 
in some sub-multisun $G_0$ with $q$ inscribed cliques. We show that $\Gamma$ has indeed even order. 

By Lemma \ref{lemma:sunhole}, we may assume that $q\geq 2$, otherwise we are done. Hence, by Lemma~\ref{lemma:sunhole1}, 
$\Gamma$  is represented by the sequence $(x_{i_1},x_{i_2})(x_{i_2},x_{i_3})\ldots(x_{i_{q-1}},x_{i_q})(x_{i_q},x_{i_1})$ and  
there is a word $\w_0$ on $\{\epsilon,\sigma,x_1\ldots, x_q\}\subseteq\Sigma$ 
such that $[\w_0]$ is a projection of $\cw$.
Notice that $G_0$ is in $\mathcal{S}_{G_{\cw_0}}$, where $\cw_0=[\w_0]$. 
Clearly $i_1 i_2 i_3\ldots i_q$ is a permutation $\rho$ of $1 2 3\cdots q$. Let $\preceq$ be the linear order induced by 
$\cw_0$. 
We may suppose that $x_1\preceq x_2\preceq x_3\ldots\preceq x_q$ possibly by re-labeling (in this case the permutation $\rho$ 
changes accordingly). 
\mybreak
We claim that $q=2$.
For, if $q>2$, then there is at least one index $m\in\{1,2,\ldots q\}$  such that $|\rho^{-1}(i_m)-\rho^{-1}(i_{m-1})|>1$. Hence, for 
some $m$, $x_{i_m}$ and $x_{i_{m+1}}$ do not form
a cover pair in $\preceq$. Since, by Lemma~\ref{lemma:sunhole1}, either $x_{i_h}^{\eta_h}\epsilon x_{i_{h+1}}^{\eta_{h+1}}\lhd \w_0$ or $x_{i_{h+1}}^{\eta_{h+1}}\epsilon x_{i_h}^{\eta_h}\lhd \w_0$ for $h=1\ldots, q\, \imod{q}$, we conclude that 
$\cw_0$ contains a jump involving $x_{i_m}$ and $x_{i_{m+1}}$, contradicting that $\cw_0$ is jump-free. Therefore $q\leq 2$ and, 
consequently, $q=2$ because we are assuming $q\geq 2$.
\mybreak
Now, $q=2$ implies that $V(\Gamma)=P_1\cup P_2$ for some two vertex disjoint $X_{i_1}X_{i_2}$-paths of $G_0$. 
Since $P_1$ and $P_2$ have the same odd parity by N-\ref{com:p5}, we conclude that $\Gamma$ has even order. 
Therefore we proved that each member of $\mathcal{S}_{G_\cw}$, in particular $G$, is HOH-free.
\end{proof}

As a consequence of the previous theorem, sunoids, HOH-free multisuns and sunwords are (essentially) the same thing. 
In particular, minimally unbalanced diamond-free graphs that are not odd holes are precisely the multisuns 
whose $s$-word is a sunword. 
Since we know how to build and recognize sunwords, we also know how to build and recognize HOH-free multisuns and, 
consequently, the structure of minimally unbalanced diamond-free graphs. Indeed, we have the following.

\begin{coro}\label{coro:utile}
Let $G$ be a diamond free graph. Then $G$ is not balanced if and only if it contains either an odd hole or a sunoid as an 
induced subgraph. 
\end{coro}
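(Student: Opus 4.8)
The plan is to deduce the corollary directly from the two structural results already established, namely Corollary~\ref{cor:2sec2} (minimally unbalanced diamond-free graphs are exactly odd holes and HOH-free multisuns) and Theorem~\ref{thm:summarizing} (for a multisun, being HOH-free is equivalent to being a sunoid), together with the hereditary nature of balancedness recalled in the Introduction. The one point to pin down first is that the class of balanced graphs is closed under taking induced subgraphs, so that $G$ is unbalanced if and only if it contains some \emph{minimally} unbalanced induced subgraph. This holds because the clique-matrix of an induced subgraph of $G$ is the up-matrix of a column submatrix of $\A_G$ (Lemma~\ref{lemma:2sec}), and a matrix is balanced exactly when every submatrix of it is.

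For the ($\Leftarrow$) direction I would show that both forbidden subgraphs are unbalanced. If $G$ contains an induced odd hole $C_n$ with $n\geq 5$ odd, then its clique-matrix is congruent to $\C_n$, which is itself an odd cycle matrix, so $g(\A_{C_n})=n<\infty$ and the hole is unbalanced. If instead $G$ contains an induced sunoid $S$, then Theorem~\ref{thm:summarizing} identifies $S$ as an HOH-free multisun, whence Corollary~\ref{cor:2sec2} makes $S$ minimally unbalanced, and in particular unbalanced. In either case $G$ has an unbalanced induced subgraph, and by heredity $G$ is itself unbalanced.

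For the ($\Rightarrow$) direction, assume $G$ is unbalanced and pick an unbalanced induced subgraph $H$ of $G$ of smallest possible order. Then every proper induced subgraph of $H$ is balanced, so $H$ is minimally unbalanced. Being an induced subgraph of the diamond-free graph $G$, $H$ is diamond-free as well, so Corollary~\ref{cor:2sec2} applies and shows that $H$ is either an odd hole or an HOH-free multisun; in the latter case Theorem~\ref{thm:summarizing} lets us replace ``HOH-free multisun'' by ``sunoid''. Thus $G$ contains an induced odd hole or an induced sunoid, as required.

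I do not expect a genuine obstacle: the statement is a bookkeeping consequence of the preceding characterizations. The only care needed is to invoke the heredity of balancedness correctly in both directions---once to pass from an arbitrary unbalanced graph down to a minimally unbalanced induced subgraph, and once to propagate unbalancedness back up from a forbidden induced subgraph to $G$.
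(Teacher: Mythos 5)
Your proof is correct and follows exactly the route the paper intends: the corollary is stated there without proof as an immediate consequence of Corollary~\ref{cor:2sec2} and Theorem~\ref{thm:summarizing}, and your argument supplies precisely the missing bookkeeping (heredity of balancedness via Lemma~\ref{lemma:2sec}, passing to a minimally unbalanced induced subgraph, and translating ``HOH-free multisun'' into ``sunoid''). Nothing to correct.
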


\section{Consequences}
\label{sec:consequence}

In this section we briefly consider some consequences of our result.

\subsection{Algorithmic consequence}

Sunoids can be used to provide a graph-theoretical interpretation of the algorithm of Conforti and Rao designed in \cite{CR92a} to recognize balanced matrices assuming the existence of a routine testing perfection. We show that our characterization implies exactly the same algorithm when specialized to linear matrices. 

\begin{prop}\label{prop:oucharact}
Let $\A$ be a linear matrix. Then $\A$ is balanced if and only if $\A$ does not contain any submatrix congruent to $\C_3$ and $G_\A$ is a 
diamond-free graph that contains neither an odd-hole nor an HOH-free multisun as an induced subgraph. 
\end{prop}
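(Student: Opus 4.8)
The plan is to transfer balancedness back and forth between the matrix $\A$, its up-matrix $\A^\uparrow$, and the graph $G_\A$, and then to read off the forbidden subgraphs from Corollary~\ref{coro:utile}. Throughout I use that $\A$ is balanced if and only if $g(\A)=\infty$, and that a submatrix congruent to $\C_3$ is exactly a triangle matrix, so the hypothesis that $\A$ contains no submatrix congruent to $\C_3$ is equivalent to $g(\A)\neq 3$. Since $\A$ is linear, as soon as $g(\A)\neq 3$ Lemma~\ref{lemma:AB} applies and gives that $\A^\uparrow=\A_{G_\A}$ is the clique-matrix of the diamond-free graph $G_\A$; in particular $G_\A$ is automatically diamond-free in every case we care about. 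Finally, by Theorem~\ref{thm:summarizing} the sunoids are precisely the HOH-free multisuns, so the forbidden lists in the proposition and in Corollary~\ref{coro:utile} coincide.

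For the forward implication I would argue as follows. If $\A$ is balanced then $g(\A)=\infty$, so $\A$ has no triangle matrix and Lemma~\ref{lemma:AB} yields $\A^\uparrow=\A_{G_\A}$. Since $\A^\uparrow$ is a \emph{row} submatrix of $\A$, every odd cycle submatrix of $\A^\uparrow$ is one of $\A$; hence $g(\A^\uparrow)\geq g(\A)=\infty$, so $\A^\uparrow=\A_{G_\A}$ is balanced, i.e. $G_\A$ is balanced. By Corollary~\ref{coro:utile} (together with Theorem~\ref{thm:summarizing}) the graph $G_\A$ then contains neither an odd hole nor an HOH-free multisun, which is the right-hand side of the statement.

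The converse is the substantive direction. Assume the right-hand side holds. Then $g(\A)\neq 3$, Lemma~\ref{lemma:AB} again gives $\A^\uparrow=\A_{G_\A}$, and Corollary~\ref{coro:utile} gives that $G_\A$ is balanced, so $g(\A^\uparrow)=\infty$. It remains to upgrade balancedness of $\A^\uparrow$ to balancedness of $\A$. Suppose not: then $g(\A)=g$ is finite and, since $g\neq 3$, we have $g\geq 5$, so $\A$ has a submatrix $\C\cong\C_g$; let $S$ be its $g$ columns and let $\DD$ be the column submatrix of $\A$ on $S$, so that the $g$ cycle rows restrict on $S$ to a copy of $\C_g$. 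The key step will be to exhibit $\C_g$ as a submatrix of $\A^\uparrow$, contradicting $g(\A^\uparrow)=\infty$.

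This is where I expect the only real difficulty, and it is handled by linearity. First, no row of $\DD$ can dominate a cycle row: a distinct row lying above a cycle row (whose support on $S$ is $\{j,j+1\}$) would share the two entries in columns $j$ and $j+1$ with it, producing a submatrix ${1\,1\brack 1\,1}$ and contradicting linearity; hence the $g$ cycle rows are maximal in $\DD$. Second, every maximal row of $\DD$ is the restriction to $S$ of some row of $\A^\uparrow$: if $\mathbf{m}=\rb|_S$ is maximal in $\DD$, pick a maximal (undominated) row $\rc$ of $\A$ with $\rb\leq\rc$; then $\mathbf{m}\leq\rc|_S$ forces $\mathbf{m}=\rc|_S$ by maximality of $\mathbf{m}$ in $\DD$, and $\rc$ is a row of $\A^\uparrow$. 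Combining the two observations, the $g$ cycle rows are realised as restrictions to $S$ of $g$ distinct rows of $\A^\uparrow$ (distinct because their restrictions to $S$, being the distinct rows of $\C_g$, differ); selecting those rows and the columns $S$ displays $\C_g$ as a submatrix of $\A^\uparrow$, whence $g(\A^\uparrow)\leq g<\infty$, the desired contradiction. Therefore $g(\A)=\infty$ and $\A$ is balanced, completing the proof.
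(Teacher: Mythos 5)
Your proof is correct and follows essentially the same route as the paper: both directions pass through the up-matrix, identify $\A^\uparrow\cong\A_{G_\A}$ via Lemma~\ref{lemma:AB}, invoke the characterization of unbalanced diamond-free graphs, and use linearity to transfer balancedness between $\A$ and $\A^\uparrow$. Your two-step maximality argument in the converse is just a spelled-out version of the paper's one-line remark that, by linearity, the rows of $\A$ not surviving into $\A^\uparrow$ have at most one nonzero entry and hence cannot occur in any odd cycle submatrix.
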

\begin{proof}
If $\A$ is a balanced matrix, then $\A$ does not contain submatrices congruent to $\C_n$ for any odd integer $n$ and, in particular, $\A$ 
does not contain submatrices congruent to $\C_3$. Since $\A$ is linear $G_\A$ is diamond-free and since $\A$ is balanced, so is $G_\A$. 
Hence, the necessity follows by Corollary \ref{cor:2sec2}. Let us prove the sufficiency. By Lemma~\ref{lemma:AB}, $\A$ is conformal 
because $\A$ is a linear matrix that does not contain any submatrix congruent to $\C_3$. 
Therefore $\A^\uparrow\cong \A_{G_{\A}}$. Since by Corollary \ref{cor:2sec2}, $G_\A$ is a balanced graph, it follows that $\A^\uparrow$ is a 
balanced matrix being the clique matrix of a balanced graph. To complete the proof it suffices to show that the following two statements 
about a linear matrix are equivalent
\begin{enumerate}[(i)]
\setlength{\itemsep}{-1mm}
\setlength{\topsep}{0pt}
\setlength{\partopsep}{0pt}
\item\label{com:zi} $\A$ is balanced;
\item\label{com:zii} $\A^\uparrow$ is balanced;
\end{enumerate}
Clearly \eqref{com:zi}$\Rightarrow$\eqref{com:zii}. The fact that \eqref{com:zi}$\Rightarrow$\eqref{com:zii} follows at once by the 
following remark due straightforwardly to the linearity of $\A$:

{\em the rows of $\A$ which are not copies of rows of $\A^\uparrow$ have exactly one nonzero entry}.

\noindent
Hence if $\A$ contains an odd cycle submatrix so does $\A^\uparrow$. Therefore if $\A^\uparrow$ is balanced so is $\A$. 
\end{proof} 

\begin{remark}
\label{rem:linbal}
In the proposition above we showed that when $\A$ is linear, then $\A$ is balanced if and only if $\A^\uparrow$ is such. This fact is 
no longer true for general matrices. Take for instance, ${\C_3 \brack 1\,1\,1}$. Then ${\C_3 \brack 1\,1\, 1}^\uparrow=[1\,1\,1]$ is balanced while 
${\C_3 \brack 1\,1\,1}$ is not. 
\end{remark}

We can now describe the algorithm whose correctness relies on the next proposition.
\mybreak
\mybreak
Let $\A$ have $n$ columns and let $R_j$ be the indices of the rows of $\A$ that intersect column $j$ of $\A$ in a 1. Let $\#R_j=m_j$ and denote 
by $\mathcal{A}^{(j)}$ the class of submatrices of $\A$ obtained by removing a set of $m_j-2$ rows from $\A$. Furthermore, denote by 
$\mathcal{G}^{(j)}$ the class of graphs $\big(G_{\A'} \ |\ \A'\,\text{is in}\,\, \mathcal{A}^{(j)}\big)$.

\begin{prop}\label{prop:algorithm}
Let $\A$ be a linear matrix with $g(\A)\geq 5$. If $G_\A$ does not contain odd holes, then $\A$ is balanced if and only if 
no member of $\mathcal{G}^{(j)}$ contains odd holes.
\end{prop}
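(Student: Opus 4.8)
The plan is to reduce the balancedness of $\A$ to the absence of a sunoid in $G_\A$, and then to link sunoids with odd holes that surface once a single column is forced to carry exactly two $1$'s. Recall that a member $\A'$ of $\mathcal{A}^{(j)}$ is obtained by deleting $m_j-2$ of the rows of $R_j$, so that column $j$ of $\A'$ has exactly two $1$'s, and that $G_{\A'}$ is a spanning subgraph of $G_\A$ (same columns, but deleting rows can only destroy adjacencies, never create them). By Proposition~\ref{prop:oucharact} together with Remark~\ref{rem:linbal}, $\A$ is balanced if and only if $\A^\uparrow=\A_{G_\A}$ is balanced, i.e.\ if and only if $G_\A$ is balanced; and since $G_\A$ is diamond-free (Lemma~\ref{lemma:AB}) and odd-hole-free by hypothesis, Corollary~\ref{coro:utile} shows that $\A$ is balanced if and only if $G_\A$ contains no induced sunoid. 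The two implications will be handled separately: the first holds for every column $j$ and is elementary, while the second needs only one well-chosen column and uses the structure of sunoids.

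For the direction ``$\A$ balanced $\Rightarrow$ no member of $\mathcal{G}^{(j)}$ has an odd hole'' I would argue contrapositively, for an arbitrary $j$, that an odd hole in some $G_{\A'}$ forces an odd cycle submatrix in $\A$. Let $H=v_1\cdots v_m$ be an induced odd cycle of $G_{\A'}$; since $g(\A)\ge 5$ we have $m\ge 5$. For each edge $v_iv_{i+1}$ choose a surviving row $r_i$ with $1$'s in columns $v_i$ and $v_{i+1}$. Because $H$ is induced and $m\ge5$, no surviving row can meet three of the $v_k$ (that would create a triangle, hence a chord of $H$), so each $r_i$ meets $\{v_1,\dots,v_m\}$ exactly in $\{v_i,v_{i+1}\}$; the same remark forces the $r_i$ to be pairwise distinct. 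Thus rows $r_1,\dots,r_m$ and columns $v_1,\dots,v_m$ induce in $\A'$, and therefore in $\A$, a submatrix congruent to $\C_m$, whence $g(\A)\le m<\infty$ and $\A$ is unbalanced.

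For the converse, when $\A$ is unbalanced the reduction above gives an induced sunoid $S\subseteq G_\A$; let $C$ be its rim (an odd Hamiltonian cycle, so $V(S)=V(C)$) and let $\xi$ be the common vertex of the inscribed cliques, which exists by N-\ref{com:p3}. I take $j=\xi$. If $u,w$ are the rim-neighbours of $\xi$ and $K_u,K_w$ are the maximal cliques of $G_\A$ containing the rim edges $\xi u,\xi w$, then, as $S$ is induced in the diamond-free graph $G_\A$, one has $K_u\cap V(S)=\{\xi,u\}$ and $K_w\cap V(S)=\{\xi,w\}$. Define $\A'$ by keeping $K_u,K_w$ and deleting every other row of $R_\xi$ (the inscribed-clique rows through $\xi$, and any single-entry row at $\xi$, which affects $G_{\A'}$ not at all); then column $\xi$ has exactly two $1$'s and $\A'\in\mathcal{A}^{(\xi)}$. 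Since $G_{\A'}\subseteq G_\A$ we have $G_{\A'}[V(C)]\subseteq G_\A[V(S)]=S$, so it remains to check that $G_{\A'}[V(C)]$ is exactly $C$: each rim edge off $\xi$ survives because its unique maximal clique in $G_\A$ avoids $\xi$ (else it would meet $V(S)$ in three vertices), the two rim edges at $\xi$ survive through $K_u,K_w$, and every inscribed edge is destroyed because its unique maximal clique runs through $\xi$ and was deleted, while $K_u,K_w$ miss the inscribed neighbours of $\xi$ (here N-\ref{com:p1} guarantees that $u,w$ lie in no inscribed clique). Hence $G_{\A'}[V(C)]=C$ is an odd hole of $G_{\A'}$.

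The main obstacle is precisely this last verification: that a single row-deletion step at $\xi$ collapses the induced subgraph on $V(C)$ to the bare rim without leaving any residual chord. This forces one to combine, in one stroke, that $S$ is induced (so $G_{\A'}[V(S)]\subseteq S$), that $G_\A$ is diamond-free so each edge lies in a unique maximal clique (pinning down which row carries each rim or inscribed edge), the Helly/common-vertex property N-\ref{com:p3} (so that the rows through $\xi$ account for all inscribed edges simultaneously), and N-\ref{com:p1} (so that $K_u,K_w$ cannot accidentally reinstate an inscribed edge at $\xi$). A secondary point worth stating cleanly is the quantifier: the easy direction holds for every column $j$, whereas the converse uses only $j=\xi$, so the biconditional is most safely read as ``$\A$ is balanced if and only if for every column $j$ no member of $\mathcal{G}^{(j)}$ contains an odd hole''.
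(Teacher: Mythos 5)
Your proof is correct and takes essentially the same route as the paper: reduce balancedness of $\A$ to the absence of an HOH-free multisun (equivalently, a sunoid) in $G_\A$ via Proposition~\ref{prop:oucharact}, and then expose the rim of such a multisun as an odd hole in the member of $\mathcal{G}^{(j)}$ obtained by keeping only the two rim-edge rows through the apex $\xi$. If anything, you are more thorough than the paper on two points it leaves implicit, namely the ``easy'' direction (an odd hole in some $G_{\A'}$ yields a submatrix of $\A$ congruent to $\C_m$) and the verification, via linearity and uniqueness of maximal cliques, that no chord of the rim survives the row deletions, so the rim is genuinely induced in $G_{\A'}$.
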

\begin{proof}
Since $g(\A)\geq 5$, it follows $\A$ is conformal by Lemma~\ref{lemma:AB}. By Corollary~\ref{cor:2sec2} and Proposition~\ref{prop:oucharact}, 
$\A$ is balanced if and only if $G_\A$ does not contain odd holes and HOH-free multisuns. Therefore, under the 
assumptions, $\A$ is not balanced if and only $G_\A$ contains some HOH-free multisun. Suppose that $G_\A$ contains an induced HOH-free multisun 
$H$ with rim $C$. The unique vertex labeled $\sigma$ in $H$ corresponds to the $j$-th column of $\A$. Let $\mathcal{K}^{(j)}$ be the set of maximal 
cliques of $H$ containing $\sigma$. 
The cliques in $\mathcal{K}^{(j)}$ are the inscribed cliques of $H$ plus the two edges of the rim, $e$ and $f$, say, 
incident to $j$. Therefore, the graph $H'$ obtained by removing from $H$ the edge-sets of the maximal cliques in $\mathcal{K}^{(j)}-\{e,f\}$, 
is an odd hole and $H'$ is a subgraph of some member of $\mathcal{G}^{(j)}$. We conclude that $G_\A$ contains an induced 
HOH-free multisun if and only if, for some $j=1,\ldots,n$, some member of $\mathcal{G}^{(j)}$ contains an odd hole.
\end{proof}

\begin{remark}\label{rem:holesindf}
It is worth noticing that for diamond-free graphs, odd hole-freeness is equivalent to perfectness.
\end{remark}

The algorithm is now rather trivial. It consists of polynomially many calls to a routine \texttt{Odd\_Holes} that tests whether a given 
diamond-free graph contains an odd hole: it returns \texttt{YES} if an odd hole is found and \texttt{NO} otherwise. Since the latter problem is 
solvable in polynomial-time, so is the problem of testing whether a given linear matrix or (equivalently) a diamond-free graph is balanced.
Indeed let $\A$ be a linear matrix. Check first if $g(\A)\geq 5$. If not $\A$ is not balanced, else $\A$ is conformal. Call 
\texttt{Odd\_Holes} on $G_\A$. If the routine returns \texttt{YES}, then the algorithm stops with the declaration that $\A$ is not balanced. 
Else, for $j=1,\ldots, n$ one calls \texttt{Odd\_Holes} on each of the ${m_j\choose m_j-2}={m_j\choose 2}$ members of $\mathcal{G}^{(j)}$. 
If all such tests fail the matrix $\A$ is balanced.

\subsection{Dyck-paths} 

As mentioned throughout the paper sunoids exhibit a large amount of geometrical structure. We show now that, rather surprisingly, sunoids and 
hence balanced linear matrices and balanced diamond-free graphs, have intimate relationships with other objects in enumerative combinatorics.
\mybreak
A \emph{Dyck-path} is a lattice path in $\mathbb{R}^2$ whose points $P_0,P_1,\ldots P_{2n}$ have nonnegative ordinates and satisfy the following 
relations: 
 
$$P_0=(0,0), \,\ P_{2n}=(0,2n), \,\ P_{i+1}-P_i\in \{(1,-1),(1,1)\}.$$ 

The number $n$ is the \emph{semi-length} of the Dyck-path. Hence, one might think of this path as evolving under the following rules: 
it starts at the origin, it ends in $(0,2n)$ and if it reaches a point $P_i$, then it moves to the next point 
$P_{i+1}$ by either a ``down-step''  or an ``up-step'' so that it never falls beyond the $x$-axis. 

Dyck-paths are represented by {\em Dyck-words}, namely words on a two letters alphabet, say $\{L,R\}$, such that no prefix of the word has 
more $R$’s than $L$’s. For instance, $RRRLLL$, $LRLLRR$, $LRLRLR$, $LLRRLR$, $LLRLRR$ are the Dyck-words of length 6. 
If one thinks of the symbol $L$ as an open parenthesis and of $R$ as a closed parenthesis, Dyck-words 
represents expressions with $n$ pairs of parentheses that are correctly matched also known as \emph{legal bracketings} in \cite{BM96}.

Dyck-words of length $2n$ are enumerated by the Catalan number $C_n$ so that, agreeing with the most common definition of Catalan number, 
$C_n$ is the number of way of pairing $2n$ parentheses. 
The bijection between Dyck-words and Dyck-paths is established as follows: let $\vv=v_1\cdots v_{2n}$ 
be a Dyck-word on $\{L,R\}$ and for $i=1,\ldots,n$, let $h_i$ be the differences between the occurrences of $L$ and those of $R$ in the 
prefix $v_1v_2\cdots v_i$. 
Then $\{(0,0)\}\cup\{(i,h_i) \ |\ i=1\ldots,2n\}$ is a Dyck-path. Conversely, if $\{P_0,P_1,\ldots P_{2n}\}$ is a Dyck-path, then by setting $v_i=L$ is 
$P_i-P_{i-1}$ is an up-step and $v_i=R$ if $P_i-P_{i-1}$ is an down-step, $i=1\ldots,2n$, we associate a Dyck-word with the path.
\mybreak
We now show how sunwords and Dyck-paths are related. For $i=1,\ldots,2n-1$, a point $P_i$ of a Dyck-path is a \emph{peak} if $P_{i-1}$ and 
$P_{i+1}$ have the same ordinate. An \emph{evenly weighted Dyck-path $D$} of semilength $n$ is a pair $(D,\Lambda)$ where $D$ is a Dyck-path 
of semilength $n$ and $\Lambda: \{0,1\ldots,2n\}\rightarrow \mathbb{Z}_+$ is a mapping such that $\Lambda(i)$ is even if $P_i$ is a peak and 
$\Lambda(i)$ is odd otherwise, $i=0,\ldots,2n$. 
Clearly every Dyck-path $D$ identifies an entire set of weights $\Lambda$ such that $(D,\Lambda)$ is an evenly weighted Dyck-path $D$.

Let now $\cw=[\sigma z_{i_1}^{\lambda_1} \epsilon z_{i_2}^{\lambda_2}\epsilon z_{i_3}^{\lambda_3}\cdots\epsilon z_{i_{s}}^{\lambda_{s}}]$ be an $s$-word and let 
$\preceq$ be the order induced by $\cw$. By Proposition~\ref{prop:transit}, we know that $s$ is odd. Moreover, we may choose the 
representative of $\cw$ so that $z_{i_1}=a$. For $j=1\ldots s$, let $h(z_{i_j})$ be the rank of $z_{i_j}$ in $\preceq$. 
Thus $h(a)=1$, $h(b)=2$, $h(c)=3$ and so on. Moreover, for $j=0,\ldots, s-1$ let $P_j=(j,h(z_{i_{j+1}}))$. Thus $D_\cw=\{P_0,P_1,\ldots P_{s-1}\}$ is 
a set of lattice points of $\mathbb{R}^2$ with $P_0=(0,0)$ and $s-1$ even. Finally let $\Lambda_\cw: \{0,1,\ldots,s-1\}\rightarrow \mathbb{Z}_+$ 
be defined by $\Lambda_\cw(j)=\lambda_{j+1}$.

\begin{theorem}\label{thm:dyck}
Let $\cw$ be an $s$-word with at least two proper letters. Then $\cw$ is a sunword if and only if $(D_\cw,\Lambda_\cw)$ is an evenly weighted 
Dyck-path.
\end{theorem}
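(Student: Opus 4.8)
The plan is to reduce the whole statement to the combinatorial characterization already obtained. By Theorem~\ref{thm:fund0}, $\cw$ is a sunword if and only if it is jump-free and satisfies the parity conditions, so it suffices to set up two independent ``dictionaries'' translating these conditions into the geometry of $(D_\cw,\Lambda_\cw)$: jump-freeness will correspond to $D_\cw$ being a Dyck-path (the \emph{shape}), and the parity conditions will correspond to $\Lambda_\cw$ being an even weighting (the \emph{weights}). Throughout I would fix the representative $\w=\sigma z_{i_1}^{\lambda_1}\epsilon z_{i_2}^{\lambda_2}\epsilon\cdots\epsilon z_{i_s}^{\lambda_s}$ with $z_{i_1}=a$, which exists for any $s$-word exactly as in the opening lines of the proof of Lemma~\ref{lemma:00}; recall also that $s$ is odd by Proposition~\ref{prop:transit}, so $D_\cw$ has an even number $s-1$ of steps. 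Since every link below is an equivalence, no separate treatment of the two implications is needed.

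For the shape, I would first record that $h$ is the rank function of $\prec$ on the proper letters $a,b,c,\dots$, hence injective there, with $h(z)\ge 1$ and $h(z)=1$ exactly when $z=a$. Consequently the ordinates of $D_\cw$ never fall below the height $h(a)=1$ of its initial point $P_0$, so the only content of ``$D_\cw$ is a Dyck-path'' beyond this automatic floor bound is that the steps are unit steps and that the path returns to the floor at the end. Now two \emph{distinct} proper letters form a cover pair precisely when their ranks differ by $1$, and in standard form consecutive letters are distinct; thus the interior adjacencies $(z_{i_j},z_{i_{j+1}})$ are jump-free if and only if $|h(z_{i_{j+1}})-h(z_{i_j})|=1$, i.e. if and only if every step of $D_\cw$ is a unit step. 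The two cyclic adjacencies involving the unique symbol $\sigma$, namely $(\sigma,z_{i_1})$ and $(z_{i_s},\sigma)$, are cover pairs if and only if $z_{i_1}=a$ and $z_{i_s}=a$, i.e. if and only if $D_\cw$ begins and ends at the floor. Collecting these observations gives the first dictionary: $\cw$ is jump-free if and only if $D_\cw$ is a Dyck-path.

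For the weights, I would use that $\Lambda_\cw(j)=\lambda_{j+1}$ is the exponent attached to $P_j$, whose two neighbours carry the letters $z_{i_j}$ and $z_{i_{j+2}}$; since $P_{j-1}$ has ordinate $h(z_{i_j})$ and $P_{j+1}$ has ordinate $h(z_{i_{j+2}})$, injectivity of $h$ shows that $P_j$ is a peak (a turning point in the sense defined above) if and only if $z_{i_j}=z_{i_{j+2}}$, i.e. the letter at $P_j$ is interlaced by the same proper letter on both sides. This is exactly the event governed by parity condition~\eqref{com:002} of Lemma~\ref{lemma:00}, which under $h=j+1$, ranging over the interior indices $j=1,\dots,s-2$, reads: $\lambda_{j+1}$ is even if and only if $z_{i_j}=z_{i_{j+2}}$. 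Hence, for interior points, $\Lambda_\cw(j)$ is even exactly when $P_j$ is a peak. At the two endpoints $P_0$ and $P_{s-1}$, which are never peaks, the even-weighting condition demands odd weights, and this is precisely parity condition~\eqref{com:001} ($\lambda_1$ and $\lambda_s$ odd). Thus, given that $D_\cw$ is a Dyck-path, $\Lambda_\cw$ is an even weighting if and only if $\cw$ satisfies the parity conditions.

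Combining the two dictionaries yields the theorem: $(D_\cw,\Lambda_\cw)$ is an evenly weighted Dyck-path if and only if $D_\cw$ is a Dyck-path and $\Lambda_\cw$ is an even weighting, which is equivalent to $\cw$ being jump-free and satisfying the parity conditions, which by Theorem~\ref{thm:fund0} is equivalent to $\cw$ being a sunword. The only genuinely delicate point, and the place I would take most care, is the bookkeeping around the peak convention: here a ``peak'' is any turning point (local maximum \emph{or} minimum), so that a valley visiting the floor letter $a$ in the interior is also a peak and must carry an even weight. This is consistent, because jump-freeness forces such an interior $a$ to be flanked by $b$ on both sides, making it a genuine turning point; and matching the endpoint convention (peaks only for $1\le i\le s-2$) with parity condition~\eqref{com:001} is the other spot where an off-by-one slip would be easy to make.
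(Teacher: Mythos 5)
Your proposal is correct and follows exactly the paper's own route: the paper's proof is the one-line observation that, via Theorem~\ref{thm:fund0}, jump-freeness corresponds to $D_\cw$ being a Dyck-path and the parity conditions correspond to $\Lambda_\cw$ being an even weighting, which is precisely your two ``dictionaries''. Your write-up merely fills in the verification the paper leaves implicit (unit steps $\leftrightarrow$ cover pairs, the cyclic adjacencies with $\sigma$ $\leftrightarrow$ the path starting/ending at the floor, peaks-as-turning-points $\leftrightarrow$ condition~\eqref{com:002}, endpoints $\leftrightarrow$ condition~\eqref{com:001}), and it correctly handles the paper's off-by-one slip in asserting $P_0=(0,0)$ while $h(a)=1$.
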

\begin{proof}
It suffices to observe that the fact that $\cw$ is jump-free is equivalent to the fact that $D_\cw$ is a Dyck-path and that the fact that 
$\cw$ satisfies the parity conditions is equivalent to the fact that $\Lambda_\cw$ is such that $D_\cw$ is evenly weighted.
\end{proof}

\subsection{Clique-perfection}

A {\em clique-transversal} of a graph $G$ is a subset of vertices that meets all the maximal cliques of $G$. 
A {\em clique-independent set} of a graph is a collection of pairwise 
vertex-disjoint cliques of $G$. We denote the minimum size of a clique-transversal and the maximum size of 
a clique-independent set by $\tau_c(G)$ and $\alpha_c(G)$, respectively. These graph-invariants were introduced in 
\cite{Tu90,GP00} and since then many authors studied their mutual relations \cite{BCDII07, BCD97,LC06,LT86}. 
\mybreak
A graph $G$ is {\em clique-perfect} if  

\begin{equation}
\label{eq:cliqueperfection}
\tau_c(G)=\alpha_c(G)\qquad\text{for each induced subgraph $G'$ of $G$.}
\end{equation}


Graphs that seem to play a key role in the characterization of clique-perfect graphs are the so-called {\em suns} 
and their generalizations. 
\mybreak
A {\em sun} is a chordal graph $G$ whose vertex set can be partitioned into two sets
$W=\{w_1,\dots, w_r\}$ and $U=\{u_1,\dots,u_r\}$ such that $U$ is a stable set and for each $i$ and $j$, $w_j$ is
adjacent to $u_i$ if and only if $j=i $ or $j=i+1\imod{r}$. A sun is {\em odd} if $r$ is odd and is {\em complete} if $W$ is complete. Given a cycle $C$, the edges of $C$ that form a triangle with another vertex of $C$ are called {\em non-proper}. 
An {\em odd generalized sun} is a graph $G$ whose vertex set can be partitioned into two sets: a (not 
necessarily induced) odd cycle $C$ of $G$ with non-proper edges $\{e_j\}_{j\in J}$ ($J$ is allowed to be empty) 
and a stable set $U=\{u_j\}_{j\in J}$ such that $u_j$ is adjacent only to the endpoints of a non-proper edge of $C$. 
Clearly odd holes and odd suns are odd generalized suns and all these graphs are not clique-perfect  \cite{BDGS06}. 

Bonomo, Chudnovsky and Duran \cite{BCDII07} gave the following partial characterization of diamond-free clique-perfect graphs 
in terms of (not minimally) induced subgraphs. 

\begin{theorem} [\cite{BCDII07}]
\label{theo:first_char}
Let $G$ be a diamond-free graph. Then $G$ is clique-perfect if and only if no induced subgraph of $G$ is an
odd generalized sun. 
\end{theorem}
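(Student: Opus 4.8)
The plan is to deduce the theorem from the characterization obtained in this paper rather than re-deriving it from scratch. Recall that for diamond-free graphs balancedness and clique-perfection coincide (the equivalence established in this paper) and that, by Corollary~\ref{coro:utile}, a diamond-free graph is unbalanced if and only if it contains an induced odd hole or an induced sunoid. Hence it suffices to prove that, within diamond-free graphs, containing an induced odd generalized sun is equivalent to containing an induced odd hole or an induced sunoid; the theorem then follows by chaining these equivalences.

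One direction is immediate and uses none of the machinery developed above. If $G$ is clique-perfect, then every induced subgraph of $G$ is clique-perfect, since clique-perfection is defined in \eqref{eq:cliqueperfection} as a hereditary property. As odd generalized suns are not clique-perfect \cite{BDGS06}, the graph $G$ can contain no induced odd generalized sun. This already yields the implication ``clique-perfect $\Rightarrow$ no induced odd generalized sun''.

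For the converse I would argue contrapositively. If $G$ is diamond-free and \emph{not} clique-perfect, then $G$ is not balanced, so by Corollary~\ref{coro:utile} it contains an induced odd hole or an induced sunoid, and it remains to exhibit an induced odd generalized sun inside each such obstruction. An odd hole is already an odd generalized sun (take the cycle $C$ to be the hole itself, with $J=\emptyset$ and $U=\emptyset$), so only the sunoid case needs work. Here I would take the rim of the sunoid as the odd cycle $C$ and read the inscribed cliques as supplying the non-proper edges of $C$: because the inscribed cliques are odd and meet only in $\xi$ by N-\ref{com:p3}, and because the $A$-paths, $A\xi$-paths and $AB$-paths obey the parity constraints N-\ref{com:p1}, N-\ref{com:p4} and N-\ref{com:p5}, the resulting configuration of cycle and chords should satisfy the defining axioms of an odd generalized sun. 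Equivalently, one may pass to the standard multisun encoded by the sunword $[\w_G]$ and read off the sun structure from the jump-free, parity-respecting shape guaranteed by Theorem~\ref{thm:fund0}.

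The main obstacle is precisely this last step: verifying that a sunoid, together with all of its inscribed-clique chords, genuinely matches the definition of an odd generalized sun, and in particular that the spacing forced by the N-conditions is exactly what separates a sunoid from a densely chorded odd cycle (such as an odd cycle made complete) that would fail to be a sun. I expect the cleanest route to be a standalone structural lemma, ``every sunoid is an odd generalized sun'', proved directly from the canonical labeling by translating N-\ref{com:p1}--N-\ref{com:p5} into the triangle and parity conditions that govern non-proper edges. Since both sides of the correspondence are hereditary---sub-multisuns on one side, induced subgraphs on the other---this lemma together with the two observations above propagates to all induced subgraphs and yields the theorem.
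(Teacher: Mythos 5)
First, a point of comparison that matters here: the paper does not prove this statement at all. It is quoted from \cite{BCDII07} as background (hence the citation in the theorem header), and the paper's own contribution in that subsection consists of Theorem~\ref{thm:clique=bal} and the final refinement, both proved independently of Theorem~\ref{theo:first_char}. So your proposal is not competing with a proof in the paper; it is an attempt to re-derive the cited result from the paper's machinery. That strategy is legitimate and, importantly, non-circular: the proofs of Theorem~\ref{thm:clique=bal} and Corollary~\ref{coro:utile} nowhere invoke Theorem~\ref{theo:first_char}. Your forward direction is also sound as written: clique-perfection is hereditary by definition, and the clique-imperfection of odd generalized suns is exactly the fact the paper itself takes from \cite{BDGS06}.

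The genuine gap is the step you yourself flag as ``the main obstacle'': you never prove that a sunoid is an odd generalized sun, and without it the backward direction is incomplete. Moreover, the route you propose for filling it is aimed at the wrong target. The verification has nothing to do with the N-conditions, the canonical labeling, or the jump-free/parity structure of Theorem~\ref{thm:fund0}; it is immediate from Definition~\ref{def:np-pair}. In any multisun $G$ (sunoid or not), every rim edge is by definition a \emph{maximal} clique of size $2$, hence lies in no triangle of $G$. Taking $C$ to be the rim, $C$ therefore has no non-proper edges, so $J=\emptyset$, $U=\emptyset$, and the partition $V(G)=V(C)\cup\emptyset$ satisfies the definition of an odd generalized sun verbatim ($C$ is odd of order $n\geq 5$, and chords of $C$ are permitted since $C$ need not be induced). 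This observation also dissolves the worry in your last paragraph about densely chorded odd cycles such as a completed odd cycle: there, cycle edges \emph{do} lie in triangles, so they are non-proper and each demands a private outside vertex $u_j$, which is exactly what fails for $K_5$ and can never arise in a multisun. With this one-line lemma inserted in place of your conjectured ``standalone structural lemma,'' your argument becomes a complete and correct proof, modulo the two external facts that both you and the paper cite: odd generalized suns are not clique-perfect, and balanced graphs are clique-perfect.
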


Unfortunately, not every odd generalized sun is minimally clique-imperfect (with respect to induced 
subgraphs). For instance, an odd hole with an inscribed clique of even size is not clique-perfect but 
contains an odd hole as a minimal clique-imperfect subgraphs. This implies that the previous characterization is not minimal.

Our characterization on minimally unbalanced diamond-free graphs immediately provides a characterization of diamond-free 
clique-perfect graphs in terms of minimally fordidden induced subgraphs once we prove that within diamond-free graphs clique-perfection and balancedness are equivalent notion. This is accomplished in the following result. Although it can be proved directly using Corollary~\ref{coro:utile}, we prefer to give an almost direct and self-contained proof (an alternative proof can be found in \cite{Safe}).
\begin{theorem}
\label{thm:clique=bal}
Let $G$ be a diamond-free graph. Then $G$ is clique-perfect if and only if it is balanced.
\end{theorem}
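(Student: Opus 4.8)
The plan is to translate clique-transversals and clique-independent sets into the covering and packing numbers of the clique-matrix, and then to pit the min--max theory of balanced matrices against the structure theorem for minimally unbalanced diamond-free graphs. For a $\{0,1\}$-matrix $M$ write $\tau(M)=\min\{\mathbf 1^\top x : Mx\ge \mathbf 1,\ x\in\{0,1\}^{\text{col}}\}$ and $\nu(M)=\max\{\mathbf 1^\top y : M^\top y\le \mathbf 1,\ y\in\{0,1\}^{\text{row}}\}$. Since the rows of $\A_{G'}$ are exactly the maximal cliques of $G'$, one has $\tau_c(G')=\tau(\A_{G'})$ and $\alpha_c(G')=\nu(\A_{G'})$ for every induced subgraph $G'$. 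Moreover $\nu(M)\le\tau(M)$ for any $M$ (a cover must meet the rows of a maximum packing, which are pairwise column-disjoint, hence needs at least $\nu(M)$ distinct columns), so $G$ is clique-perfect if and only if $\tau(\A_{G'})\le \nu(\A_{G'})$ for every induced subgraph $G'$.

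For the direction \emph{balanced $\Rightarrow$ clique-perfect} I would invoke the K\"onig property of balanced matrices (Berge; see also Fulkerson--Hoffman--Oppenheim \cite{FHO74}): if $M$ is balanced then $\tau(M')=\nu(M')$ for every submatrix $M'$ of $M$. Let $G$ be balanced and $G'$ an induced subgraph. By Lemma~\ref{lemma:2sec}, $\A_{G'}$ is congruent to the up-matrix of the column submatrix of $\A_G$ on $V(G')$, hence to a submatrix of $\A_G$; as balancedness is inherited by submatrices, $\A_{G'}$ is balanced and therefore $\tau_c(G')=\tau(\A_{G'})=\nu(\A_{G'})=\alpha_c(G')$. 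Since $G'$ was arbitrary, $G$ is clique-perfect.

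For the converse I would argue by contraposition, reducing to the structure theorem. Suppose $G$ is diamond-free and not balanced. As the class of balanced graphs is closed under induced subgraphs, $G$ contains a minimally unbalanced induced subgraph $H$, and by Corollary~\ref{cor:2sec2}, $H$ is either an odd hole or an HOH-free multisun. In both cases $H$ has odd order $n\ge 5$ and contains a spanning odd cycle $C$ (the hole itself, respectively the rim of the multisun), each of whose edges is a maximal clique of $H$ of size two---for a multisun this is built into Definition~\ref{def:np-pair}. Hence every clique-transversal of $H$ restricts to a vertex cover of $C$, whence $\tau_c(H)\ge \lceil n/2\rceil=(n+1)/2$. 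On the other hand $H$ has no isolated vertex, so every maximal clique of $H$ has at least two vertices; a clique-independent set thus packs pairwise disjoint cliques of size $\ge 2$ into the $n$ vertices of $H$, giving $\alpha_c(H)\le \lfloor n/2\rfloor=(n-1)/2$. Therefore $\tau_c(H)>\alpha_c(H)$, so the induced subgraph $H$ of $G$ breaks the defining equality of clique-perfection and $G$ is not clique-perfect.

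The only genuinely delicate point is the passage through the structure theorem. The matrix inequality $\tau(\A_G)>\nu(\A_G)$ does not by itself localise a bad \emph{induced} subgraph: an odd-cycle submatrix of $\A_G$ need not be the clique-matrix of the subgraph induced by its columns, since chords may create extra maximal cliques. Corollary~\ref{cor:2sec2} is exactly what removes this obstruction, letting me replace an arbitrary odd-cycle submatrix by an honest odd hole or multisun whose maximal cliques I can count. Everything else collapses to the two elementary extremal estimates $\tau_c(H)\ge(n+1)/2$ and $\alpha_c(H)\le(n-1)/2$ on an odd cycle carrying the maximal cliques.
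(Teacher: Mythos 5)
Your proof is correct and follows essentially the same route as the paper: the easy direction rests on the K\"onig/min--max property of balanced matrices applied to submatrices of $\A_G$, and the hard direction localises a minimally unbalanced induced subgraph via Corollary~\ref{cor:2sec2} and then defeats clique-perfection by comparing $\tau_c$ and $\alpha_c$ along the spanning odd cycle whose edges are maximal cliques. The only differences are cosmetic: you treat odd holes and multisuns uniformly (needing only that disjoint maximal cliques have size at least two, where the paper rules out the odd-hole case separately and counts inscribed cliques of size at least three), and you spell out the balanced-implies-clique-perfect direction that the paper simply cites.
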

\begin{proof}
Since every balanced graph is clique-perfect, the sufficiency condition easily follows.
To prove the necessary condition suppose by contradiction that $G$ is clique-perfect but not balanced. Hence $G$ contains a minimally unbalanced induced subgraph $G_0$. Let $g_0=g(\A_{G_0})$. By Lemma~\ref{lemma:AB} $g_0\geq 5$. By Theorem~\ref{thm:mnbum},
either $\A_{G_0}\cong \C_{g_0}$ or  $\A_{G_0}\cong{\C_{g_0} \brack \K}$ where each row of $\K$ has at least three nonzero entries. 
Since $G$ is clique-perfect it follows that $\A_{G_0}\not\cong \C_{g_0}$. Hence $G_0$ is a multisun with rim $C_0$ of order $g_0$ and inscribed cliques $K^1,\ldots, K^p$ corresponding to the rows of $\K$.
Consider now the parameters $\alpha_c(G_0)$ and $\tau_c(G_0)$. 
First observe that $\lfloor {g_0/2} \rfloor=\alpha_c(C_0)\leq \alpha_c(G_0)$ and $\tau_c(G_0)=\tau_c(C_0)=\lceil {g_0/2} \rceil$ because no edge of $C_0$ is contained in any inscribed clique.

\noindent
Now every clique-independent set of $G_0$ consists of $t$ vertex disjoint cliques among $K^1,\ldots, K^p$ plus some independent edges of $C_0$. Since the latter edges have to be vertex 
disjoint from the chosen $t$ inscribed cliques, it follows that the maximum number of independent edges of $C_0$ occurring in any clique-independent set 
of $G_0$ that contains precisely $t$ inscribed clique, is at most $\lfloor {(g_0-3t)/2} \rfloor$
because each inscribed clique has at least three vertices on $C_0$. Therefore,
\begin{equation}
\label{eq:noKonig}
\alpha_c(G_0)\leq t + \lfloor {(g_0-3t)/2} \rfloor \leq \lfloor {g_0/2} \rfloor.
\end{equation}
It follows that 
$\alpha_c(G_0)=\lfloor {g_0/2} \rfloor$ and so, $\alpha_c(G_0)<\tau_c(G_0)$, contradicting the hypothesis that $G$ is clique-perfect. 
\end{proof}

Theorem~\ref{thm:clique=bal} shows that the recognition problem of clique-perfect diamond-free graphs is polynomial-time solvable
because so is the recognition problem of balanced graphs \cite{Za05}. This answer another question posed by Bonomo et al. in 
\cite{BCDII07}.
Using Theorem~\ref{thm:fund3} we refine the characterization given in the same paper by proving that the odd generalized suns  
forbidden in clique-perfect diamond-free graphs have a very special structure: they are sunoids. 

\begin{theorem}
Let $G$ be a diamond-free graph. Then $G$ is clique-perfect if and only if no induced subgraph of $G$ is an
odd hole or a sunoid.
\end{theorem}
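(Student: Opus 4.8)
The plan is to obtain the statement as an immediate corollary of two results already established in the paper, namely Theorem~\ref{thm:clique=bal} and Corollary~\ref{coro:utile}. Since both ingredients have been proved, essentially no new work is required: the theorem is a transitivity argument chaining two equivalences, and there is no genuine obstacle to overcome here---the difficulty was front-loaded into those earlier results.

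Concretely, I would first invoke Theorem~\ref{thm:clique=bal}, which asserts that for a diamond-free graph $G$, clique-perfection and balancedness are one and the same property. This reduces the problem to characterizing balancedness by forbidden induced subgraphs. Next I would apply Corollary~\ref{coro:utile}, which states precisely that a diamond-free graph is \emph{not} balanced if and only if it contains an odd hole or a sunoid as an induced subgraph; equivalently, $G$ is balanced if and only if no induced subgraph of $G$ is an odd hole or a sunoid. Composing the two equivalences yields the claim.

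The write-up I would produce is therefore a short paragraph:

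\begin{proof}
By Theorem~\ref{thm:clique=bal}, a diamond-free graph $G$ is clique-perfect if and only if it is balanced. By Corollary~\ref{coro:utile}, a diamond-free graph is not balanced if and only if it contains an odd hole or a sunoid as an induced subgraph; hence $G$ is balanced if and only if no induced subgraph of $G$ is an odd hole or a sunoid. Combining these two equivalences, $G$ is clique-perfect if and only if no induced subgraph of $G$ is an odd hole or a sunoid.
\end{proof}

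The only point meriting care is to make sure the equivalences are quoted in the correct direction when passing from the ``not balanced'' phrasing of Corollary~\ref{coro:utile} to the ``no induced subgraph is an odd hole or a sunoid'' phrasing of the statement; this is a routine contrapositive. It is also worth noting, for the reader, that this theorem refines Theorem~\ref{theo:first_char} of Bonomo, Chudnovsky and Duran: whereas their characterization forbids all odd generalized suns (a class that is not minimal), the present result isolates the minimal obstructions, namely odd holes and sunoids, thereby answering the minimality question raised in \cite{BCDII07}.
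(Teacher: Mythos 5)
Your proposal is correct and is exactly the argument the paper intends: the paper states this theorem as an immediate consequence of Theorem~\ref{thm:clique=bal} (clique-perfection equals balancedness for diamond-free graphs) combined with Corollary~\ref{coro:utile} (a diamond-free graph is unbalanced if and only if it contains an odd hole or a sunoid as an induced subgraph). Your chaining of the two equivalences, including the routine contrapositive, is precisely the derivation the paper leaves to the reader.
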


\bibliographystyle{plain}
\bibliography{balanced}

\end{document}